\documentclass[11pt]{article}

\usepackage{amssymb,amsmath,amsthm,amsfonts}

\usepackage[T1]{fontenc}
\usepackage[tt=false, type1=true]{libertine}
\usepackage[varqu]{zi4}
\usepackage[libertine]{newtxmath}
\usepackage[margin=1in]{geometry}
\makeatletter
\def\input@path{{theme/}}
\makeatother

\usepackage{enumitem}
\setlist{nosep,topsep=0pt,leftmargin=*}

\usepackage{hyperref}
\hypersetup{
    colorlinks,
    allcolors=teal
}

\usepackage[sortcites,sorting=nyt,style=alphabetic]{biblatex}
\usepackage[ruled]{algorithm2e} 
    
    \SetAlFnt{\small}
    \SetAlCapFnt{\small}
    \SetAlCapNameFnt{\small}
    \SetAlCapHSkip{0pt} 
    \IncMargin{-\parindent}
    
\usepackage{tikz,xifthen,physics,xcolor,suffix,thm-restate,nicefrac,enumitem}
\usetikzlibrary{shapes, arrows, positioning}
\usepackage[bb=dsserif]{mathalpha}
\usepackage[capitalize]{cleveref}
\crefformat{algorithm}{Primal Budget Pacing Algorithm}
   
\definecolor{darkgreen}{rgb}{0,0.5,0}

\usepackage[suppress]{color-edits}
\definecolor{@gray}{HTML}{edc3c5}
\addauthor[Eva]{et}{brown}
\addauthor[Yishay]{ym}{darkgreen}
\addauthor[Giannis]{gf}{cyan}
\addauthor[Yoav]{yk}{olive}
\addauthor[Bobby]{rdk}{blue}
\addauthor[Gemini]{ai}{teal}
\addauthor[\textbf{TODO}]{todo}{red}

\renewcommand{\paragraph}[1]{\smallskip\noindent\textbf{#1.}}
\newcommand{\itparagraph}[1]{\smallskip\noindent\textit{#1.}}

\newcommand{\orderT}[1]{\tilde{\mathcal{O}}\qty(#1)}
\WithSuffix\newcommand\orderT*[1]{\tilde{\mathcal O}(#1)}
\DeclareMathOperator*{\argmax}{arg\,max}

\newcommand{\floor}[1]{\left\lfloor{#1}\right\rfloor}
\WithSuffix\newcommand\floor*[1]{\lfloor{#1}\rfloor}
\newcommand{\ceil}[1]{\left\lceil{#1}\right\rceil}
\WithSuffix\newcommand\ceil*[1]{\lceil{#1}\rceil}

\let\e\varepsilon

\let\d\delta

\newcommand{\opt}{\mathtt{OPT}}

\renewcommand{\b}[2]{b_{#2}\ifthenelse{\isempty{#1}}{}{^{(#1)}}}
\newcommand{\vecb}[2]{\vec{b}_{#2}\ifthenelse{\isempty{#1}}{}{^{(#1)}}}
\newcommand{\x}[2]{x_{#2}^{(#1)}}
\newcommand{\bmin}{b_{\min}}
\newcommand{\bmax}{b_{\max}}
\newcommand{\bavg}{b_{\textrm{avg}}}
\newcommand{\davg}{\delta_{\textrm{avg}}}
\newcommand{\rhomin}{\rho_{\min}}

\newcommand{\rL}{\rho_{\mathtt{L}}}
\newcommand{\rO}{\rho_{\mathtt{O}}}

\newcommand{\Line}[4]{%
    #1&%
    \ifthenelse{\isempty{#2}}{\phantom{=}}{#2}%
    #3%
    \ifthenelse{\isempty{#4}}{}{&&\qquad\left(\big.\substack{#4}\right)}%
}

\newcommand{\ignore}[1]{}
\newcommand{\Description}[1]{}
\newtheorem{theorem}{Theorem}[section]
\newtheorem*{theorem*}{Theorem} 

\newtheorem*{conjecture*}{Conjecture} 
\newtheorem{lemma}[theorem]{Lemma}
\newtheorem{corollary}[theorem]{Corollary}
\newtheorem{proposition}[theorem]{Proposition}

\newtheorem*{claim*}{Claim} 

\newtheorem*{remark}{Remark}
\theoremstyle{definition}
\newtheorem*{definition*}{Definition}

\newtheorem*{example*}{Example}

\newenvironment{myproof}[1][]{%
    \begin{proof}[#1]%
}{%
    \end{proof}%
}

\newcommand{\citet}[1]{\textcite{#1}}
\newcommand{\citep}[1]{\parencite{#1}}

\title{Robust Temporal Guarantees in Budgeted Sequential Auctions}

\author{
Giannis Fikioris%
\thanks{Supported by the Google PhD Fellowship and the ONR MURI grant N000142412742.}\\
Cornell University\\
\texttt{gfikioris@cs.cornell.edu}
\and
Robert Kleinberg%
\thanks{Supported by NSF grant CCF-2402851.}
\\
Cornell University\\
\texttt{rdk@cs.cornell.edu}
\and
Yoav Kolumbus\\
Cornell University\\
\texttt{yoav.kolumbus@cornell.edu}
\and
Yishay Mansour%
\thanks{Partially supported by the European Research Council (ERC) under the European Union's Horizon 2020 research and innovation program (grant agreement No. 882396), by the Israel Science Foundation, the Yandex Initiative for Machine Learning at Tel Aviv University, and a grant from the Tel Aviv University Center for AI and Data Science (TAD).}\\
Tel Aviv University and Google Research\\
\texttt{mansour.yishay@gmail.com}
\and
\'Eva Tardos%
\thanks{Supported in part by AFOSR grant FA9550-23-1-0410, AFOSR grant FA9550-231-0068, ONR MURI grant N000142412742, and through a SPROUT Award from Cornell Engineering.}\\
Cornell University\\
\texttt{eva.tardos@cornell.edu}
}

\date{\vspace{-25pt}}

\begin{document}


\maketitle{}
\thispagestyle{empty}

\begin{abstract}
    In modern advertising platforms, learning algorithms are deployed by budget-constrained bidders to maximize their accumulated value. These algorithms often offer classical utility guarantees like no-regret, i.e., the agent's utility is at least the utility achieved by some benchmark in which it is assumed that every other agent's bidding remains the same. These guarantees offer compelling properties: They are optimal against stationary competition distributions, and in unconstrained settings, the resulting empirical distribution of play induced by no-regret dynamics approximates a Coarse Correlated Equilibrium. However, no-regret algorithms are easily manipulable, and in budgeted settings, no stronger notion of regret (such as swap regret) is currently known that would limit such manipulation.

We propose a very simple learning algorithm for budgeted sequential auctions where agents maximize their total number of wins and show that it has surprisingly appealing properties. We analyze this algorithm from two perspectives. First, we show that when an agent with a $\rho$ fraction of the total budget uses this algorithm, then she is guaranteed to win at least $\rho T - O(\sqrt T)$ of the total $T$ rounds. This result holds for adversarial behavior by the other agents, as long as they respect their own budget restrictions. Second, we examine the scenario when all the agents follow our algorithm. By the first result, every agent's total wins are proportional to her budget, up to the additive $O(\sqrt T)$ term. In addition, we show that this result holds in a much stronger sense: after an initial period of $O(\sqrt T \log T)$ rounds, every agent gets the same guarantee over any time interval. For intervals of length $O(\sqrt T)$, we show that the deviation from the desired number of wins is an additive constant.
\end{abstract}

\clearpage
\setcounter{page}{1}
\section{Introduction}

Advertising auctions form a trillion-dollar industry, and participants in this ecosystem all use learning algorithms to set their bids. There is a rich theory of learning outcomes in games \cite{cesa2006prediction,DBLP:journals/corr/abs-1909-05207}; however, the classical theory of learning in games does not handle agents with global budget constraints. In settings without budget constraints, learning algorithms can achieve appealing properties like no-regret guarantees, and when all agents in a game achieve no-regret, the distribution of outcomes forms a Coarse Correlated Equilibrium of the game \cite{Hannan57,hart2000simple}. 
In particular, also in first-price auctions, which have become the predominant format in online advertising markets, without budget constrains no-regret dynamics would approach the set of Coarse Correlated Equilibria.

However, participants in modern advertising platforms are budget-constrained, and much less is known about how to learn to bid well in auctions while respecting a budget limit. 
In particular, classical no-regret algorithms cannot deal with budget constraints. 
While there are algorithms that offer no-regret guarantees against stationary environments while respecting a budget constraint \cite{DBLP:journals/mansci/BalseiroG19,DBLP:journals/corr/AggarwalFZ24,DBLP:conf/colt/LucierPSZ24}, their guarantees against adversarial environments are much weaker (e.g.,  \cite{DBLP:journals/jacm/BadanidiyuruKS18}). Further, nothing is known about the outcome
when a no-regret agent plays against an optimizer, and what is known about the outcomes of self-play
among such algorithms is very limited, as we discuss further in \Cref{sec:related}.

One important shortcoming of no-regret learning algorithms, even in the unbudgeted setting, 
is that they are manipulable: a player who seeks to extract higher payoffs against learners (also called an optimizer), e.g., one of the competing bidders, or the platform itself setting reserve prices, can manipulate others who are using classical no-regret algorithms \cite{braverman2018selling}, steering the learning dynamics in the optimizer's favor. 
This implies that no-regret learning is not a best response in repeated games when others are using no-regret learning algorithms. More sophisticated learning algorithms can guarantee no-swap-regret, which shields the agents against some form of manipulation: an optimizer can behave as a Stackelberg leader, but cannot do better than that against a no-swap-regret learner, see \cite{DBLP:conf/nips/DengSS19, DBLP:conf/colt/MansourMSS22, DBLP:conf/sigecom/Arunachaleswaran25, DBLP:conf/sigecom/Arunachaleswaran24}. 
However, no analogue of no-swap-regret is known in the budgeted setting, even when allowing for multiplicative guarantees.


Finally, advertisers may care not only about how often they win, but also about how their winnings are distributed over time. Recent work by \citet{DBLP:conf/sigecom/FikiorisKKKMT25} studies a model of such temporal spacing preferences for the bidders, and offers a complicated learning algorithm to achieve temporal spacing guarantees in a static environment, but not much is known about the spacing properties of other algorithms, or about temporal spacing in non-stationary settings, such as games.

\subsection{Our Results}

In this paper, we consider a simple learning algorithm that updates bids $\b{t}{}$ depending on the price paid in the previous iteration, using a very simple and deterministic update function and a single parameter $\eta$.
Specifically, the update rule is
\begin{equation}
    \b{t+1}{}= \b{t}{}
    +
    \eta\qty(\rho_i - p_i^{(t)}),
\end{equation}
where $\b{t}{}$ is the bid in iteration $t$, $\rho_i T$ is the bidder's total budget over all $T$ rounds, 
$\eta$ is the learning rate, 
and $p_i^{(t)}$ is the amount the bidder paid in the previous iteration (which is $0$ if they didn't win for first/second-price auctions). 
In the introduction, we state the results using $\eta =1/\sqrt{T}$,
which yields the strongest results quantitatively. 


We will consider this simple learning algorithm in first-price auctions, and will show that, despite its extreme simplicity, it does very well with respect to all the issues discussed above in environments where budgeted bidders maximize their total number of wins, that is, when all wins are viewed as equally valuable.

\paragraph{Algorithm against an Optimizer}
We show that our proposed algorithm is robust against arbitrary bidding behavior by other agents in both first- and second-price auctions. 
Specifically, we show that when an agent $i$ has budget $\rho_i T$ while all other agents together have budget $(1 - \rho_i) T$, if agent $i$ uses our algorithm, she is guaranteed to win at least $\rho_i T - \order*{\sqrt T}$ times, no matter how the other agents bid, as long as their budget limit is enforced.


\paragraph{Discrepancy of wins}
We also show that when all players use this simple learning algorithm in repeated first-price auctions, the wins of each agent are approximately evenly distributed over time. 
Our previous result shows that every agent $i$ with budget $\rho_i T$ is guaranteed to win roughly $\rho_i T$ rounds (assuming $\sum_i \rho_i = 1$).
With a $\rho_i$ fraction of the wins overall, an ideal (but often infeasible) distribution of wins would have agent $i$ win $\rho_i \tau$ times in every period of length $\tau$.
If the number of times agent $i$ wins in period of length $\tau$ differs from the ideal
$\rho_i \tau$ wins, we refer to the difference
as the \emph{discrepancy} of agent $i$'s wins in this period. 
Our main result is to show that after an initial period of $\order*{\sqrt T \log T}$ rounds, the wins have very small discrepancy.
Specifically, for every window of size\footnote{For simplicity of presentation, in this section we suppress dependence of any other parameters in our $\order*{\cdot}$ bounds.} $\tau = \order*{\sqrt T}$ we show that every agent $i$ gets $\rho_i \tau \pm \order{1}$ wins.
That is, each agent $i$ has at most $\order{1}$ discrepancy in every period of length $\tau =\order*{\sqrt T}$ (after the initial startup stage).

\itparagraph{Discrepancy of wins for equal budgets}
In the case of $n$ agents with equal budgets (that is $\rho_i = 1/n$ for all agents $i$), we show that after an initial period of $\order*{\sqrt T}$ rounds, the $n$ agents will simply win in a round-robin order, implying a discrepancy less than $1$ (specifically at most $\frac{n-1}{n}$) for \emph{any agent and any period}.\footnote{In the case with general budgets, having every agent win in exactly every $1/\rho_i$ rounds is not possible even with a central coordinator (which is also called perfect periodic schedules), for our setting consider  budget shares of $\frac12, \frac13, \frac16$ for a simple example.}

\subsection{Our Techniques}

Note that the update rule adjusts bids based on the price that the bidder paid in the last step.
The first thing we show is that this update rule guarantees that the bidder cannot run out of budget, unless they start with too high a starting bid (\cref{lef:model:budget_satisfaction}). 

To prove the overall non-manipulability of the algorithm, we notice that with the deterministic updating of the learner, the problem of the optimizer can be phrased as an integer program \cref{eq:optimization}, where the optimizer chooses which rounds she wants to win.
We prove the claimed upper bound on how much the optimizer can win by considering a Lagrangian relaxation of the linear program.
Specifically, we upper bound the Lagrangian by two additive terms: one linear in the number of rounds remaining (bounding how much value the Optimizer can get in the long run) and one depending on the Learner's current bid (bounding how much extra benefit there is from the current state of the learner's algorithm).
This result shows that, despite the deterministic behavior of our algorithm (which suggests it is more susceptible to manipulation than a randomized algorithm), the way we update the bids is exactly right: the optimizer has no benefit of either letting the bid get too low, winning at lower prices, or letting the bid get too high, potentially wasting the learner's budget.

The more technically challenging result of our paper is bounding the discrepancy of wins in all sub-intervals. 
The crux of the proof lies in showing that when one scales the units of money (the numeraire) so that budgets sum up to $T$, then after a short initial interval all bids will converge to an interval of width $\order{\eta}$ surrounding $1$, where $\eta$ is the learning rate.
Once all the bids are inside this interval, we show that they will remain inside the interval, and that this implies the desired bound on the discrepancy (\cref{lem:diff_b:wins_deviation}).

To show the fast convergence to all bids equaling almost $1$, we first show that the proposed learning dynamics corresponds to subgradient descent on a convex non-smooth function of the bids; see~\cref{eq:func_gradient}.
The unique optimum of this function is when all bids are $1$, which is a natural equilibrium of the bidding game with the $\sum_i \rho_i = 1$ assumption.
Using the \textit{quadratic growth} property\footnote{A function $f$ is said to satisfy quadratic growth if $f(x) - f(x^\star) \ge \Omega(\norm{x - x^\star}_2^2)$, where $x^\star$ is the minimizer of $f$.} of our subgradient-descent objective, this allows us to show that after $\order*{\frac{1}{\eta} \log\frac{1}{\eta}}$ rounds, all bids will be within $\order*{\sqrt\eta}$ of $1$ (\cref{ssec:diff_b:sqrt_eta}).
This implies a guarantee of $\order*{1/\sqrt\eta}$ discrepancy for small enough interval lengths $\tau$.

To show the claimed bound of $\order*{1}$ discrepancy, we need to improve the bound on the difference between the bids and $1$, from $\order{\sqrt{\eta}}$ to $\order{\eta}$.
This would follow from existing work like \cite{DBLP:journals/mor/CharisopoulosD24}, if our function would satisfy the \textit{sharpness} property.\footnote{A function $f$ is said to be sharp if $f(x) - f(x^\star) \ge \Omega(\norm{x - x^\star}_2)$, where $x^\star$ is the minimizer of $f$.} However, our function does \textbf{not} satisfy this property, and the bids' distance from $1$ can increase between rounds if it is $\order{\sqrt\eta}$.
We prove the desired convergence in a two-step process. First, we argue that after an additional $\order*{1/\eta}$ rounds, all the agents' bids are within $\order*{\eta}$ of the average bid (\cref{ssec:diff_b:eta_of_avg}).
Once all bids are so close to each other, all we have to show is that the average bid will be within $\order{\eta}$ of $1$. We show this happens after an additional $\order*{\frac{1}{\eta} \log\frac{1}{\eta}}$ rounds (\cref{ssec:diff_b:eta_of_one}).

\subsection{Related Work}
\label{sec:related}

\textbf{Learning in Auctions with Budgets.}
Learning in auctions has been an active topic of study for decades, but the introduction
of budget constraints into models of learning in auctions came under investigation only
recently. The seminal work of \citet{DBLP:journals/mansci/BalseiroG19} 
proposes an algorithm known as \emph{Adaptive Pacing} that shades bids
by a factor known as a \emph{pacing multiplier} and adjusts pacing
multipliers using an operation that can be interpreted as gradient descent 
in a Lagrangian dual space. Their work shows that adaptive pacing achieves
an optimal competitive ratio against worst-case competition, sublinear regret 
against stationary competition, and approximate Nash equilibrium in self-play
under a ``large market'' assumption.
Subsequently \citet{DBLP:conf/innovations/GaitondeLLLS23} 
proved that adaptive pacing algorithms in repeated auctions 
achieve a 2-approximation to the optimum \emph{liquid welfare}, 
a measure of social welfare adapted for budgeted settings;
this result was extended by \citet{DBLP:conf/colt/LucierPSZ24} 
to incorporate ROI constraints in addition to budget constraints. 
In the specific context of first-price auctions, \cite{DBLP:conf/icml/WangYDK23} and \cite{DBLP:journals/corr/AggarwalFZ24} extend these learning results, targeting no-regret against stationary environments.
An even broader family of bidding algorithms --- those that have bounded 
regret with respect to the best pacing multiplier in hindsight ---
were shown by \citet{fikioris2025mor} to guarantee a constant approximation
to the optimum liquid welfare.
We refer the reader to \cite{aggarwal2024auto} for a survey covering a broader set of results on this area.

Our work differs from the budget pacing work by \citet{DBLP:journals/mansci/BalseiroG19}
and its successors by focusing on a simple primal update rule, rather than dual-variable pacing.
Our results show that this simple rule satisfies robust guarantees in adversarial settings
and, when used by all agents, leads to an approximately budget-proportional allocation, even over 
short subintervals of the timeline.


\textbf{Pacing Dynamics and Primal Updates.}
Most budgeted learning algorithms rely on ``dual pacing,'' where bids are scaled down by a multiplier that is updated over time.  \citet{DBLP:journals/corr/abs-2202-06152} demonstrated that certain pacing behaviors can be modeled as Convolutional Mirror Descent (CMD), providing a connection between primal and dual dynamics. Our algorithm employs a considerably simpler, purely primal update rule—adjusting bids directly based on payment versus target spend—which allows us to prove strong temporal properties that are difficult to analyze under complex dual dynamics.


\textbf{Robustness and Manipulability.}
A major critique of standard no-regret learning is its manipulability. 
\citet{braverman2018selling} showed that a strategic optimizer can exploit 
no-regret learners to extract maximum revenue when selling an item.
Their work stimulated a long series of papers characterizing 
the potential outcomes achievable when a learning 
algorithm interacts with an optimizer that knows the learner's 
algorithm and may manipulate it to achieve the optimizer's goals
(see, e.g., \cite{cai2023selling,Guruganesh2024,lin2024generalized}). 
In unbudgeted settings, the optimizer's advantage 
can be mitigated by algorithms minimizing \emph{swap regret}, 
which limits the optimizer's gain to a Stackelberg leader's advantage 
\cite{DBLP:conf/nips/DengSS19, DBLP:conf/colt/MansourMSS22, DBLP:conf/sigecom/Arunachaleswaran25,rubinstein2024strategizing}. 
In the case of auctions with stochastic values for the buyer, the Stackelberg leader value is equivalent
to Myerson's optimal revenue. 
\citet{DBLP:conf/sigecom/KumarSS24} develop learning algorithms that are simpler than those achieving no-swap-regret, yet guarantee that the buyer does not have to pay more than Myerson's optimal revenue.
However, no analogue of swap regret is known for budgeted settings.

We take a different approach to robustness.
Instead of seeking an equilibrium concept like swap regret, we provide a ``safety level'' guarantee: an agent using our algorithm secures a proportional share of wins regardless of the opponent's strategy, effectively shielding them from any strategies employed by an optimizer.

\textbf{Discrepancy and Non-Manipulability in Self-play.}
There is not much literature about self-play by learning algorithms in budgeted settings, i.e.,~properties of the outcome when all players use one of the proposed algorithms.
\citet{DBLP:journals/mansci/BalseiroG19} is the only paper that discusses this for repeated auctions with agents who have quasi-linear utilities.
Their value setting is more general than ours, as agents have random values about how much they care for winning a round.
They discuss two results that are related to ours.
First, when the agents perform self-play and use their proposed learning algorithm, the outcome of play converges to a Second Price Pacing Equilibrium (\cite{DBLP:conf/wine/ConitzerKSM18}).
However, their results only ensure that bids get within $\Theta(\sqrt \eta)$ of the desired equilibrium, which would only prove an $\order*{1/\sqrt\eta}$ discrepancy bound for us.
Second, they prove that when everyone is using their algorithm, small bidders in large markets are approximately best-responding by using their learning algorithm, i.e., when all players use their learning algorithm, a single small player cannot significantly manipulate the outcome in their own favor. In contrast, we show non-manipulability with no assumptions.

\textbf{Temporal Spacing and Win Distribution.}
Finally, our results on the distribution of wins relate to the emerging study of ``spacing'' in 
repeated auctions. In the marketing literature it is well documented that 
the spacing of users' exposure to advertising over time is related to the ads' effectiveness
\citep{broadbent1993advertising,broadbent1995adstock,broadbent2000advertisements,craig1976advertising,ha1997does,weinberg1982econometric}.
\citet{DBLP:conf/sigecom/FikiorisKKKMT25} recently introduced 
a model of the advertisers' objectives that accounts for
the advantage of evenly spacing wins over time, and they presented 
algorithms for optimizing spacing of wins under budget constraints
when bidding in repeated second-price auctions in stationary environments.
Rather than focusing on stationary environments, our work focuses
on the dynamics resulting from multiple bidders simultaneously running simple
learning algorithms, showing that such dynamics naturally lead to low-discrepancy 
(well-spaced) outcomes without requiring the complex explicit spacing objectives 
used in prior work.

\section{Definitions and Learning Algorithm}

We consider sequential first-price auctions over a time period of $T$ rounds.
There are $n$ agents, each trying to maximize their total number of wins.
Each agent $i$ has a budget of $\rho_i T$ that she uses to bid and has no other benefit if any budget remains.
We normalize the budgets so that $\sum \rho_i = 1$.

\subsection{Learning Algorithm}

In this section, we propose the learning algorithm that the agents are using and prove some basic facts about it.
Our algorithm simply increases/decreases the bid of the next round if the payment of the current round is less/more than the average budget per round.
We depict the full algorithm in the Figure above:  \cref{algo}.

\begin{algorithm}[t]
\DontPrintSemicolon
\caption{Primal Budget Pacing Algorithm}
\label{algo}
\KwIn{Learning rate $\eta$, initial bid $\b{1}{i} \ge 0$, budget share $\rho_i$}

\For{$t \in [T]$}
{
    Bid $b^{(t)}$\;

    Observe payment $p^{(t)}$ and calculate bid of next round:
    \begin{minipage}{.9125\linewidth}
    \begin{equation}\label{eq:update}
        \b{t+1}{i}
        =
        \b{t}{i}
        +
        \eta\qty(\rho_i - p_i^{(t)})
        .
    \end{equation}
    \end{minipage}
}
\end{algorithm}

As we show in the subsequent sections, this algorithm offers powerful properties, both against arbitrary bidding by other agents and when deployed by multiple players.
Aside from these strong properties, this \cref{algo} also offers desirable properties like a simple update rule (i.e., the updated bid always remains non-negative without the need for projection) and the bidder not running out of budget.

\begin{lemma}\label{lef:model:budget_satisfaction}
    Fix an agent $i$ who is running our algorithm with $\eta \in (0, 1)$ in either a first or second price auction, and let $\b{1}{i}$ be the initial bid.
    Then, for arbitrary behavior by the other agents, the algorithm's bids are always non-negative, and the agent's total payment by round $T$ is
    \begin{equation*}
        \rho_i T + \frac{\b{1}{i} - \b{T+1}{i}}{\eta}
        .
    \end{equation*}
    In addition, if $\b{1}{i} \le \rho_i$ then the agent never runs out of budget.
\end{lemma}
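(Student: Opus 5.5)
The proof has three parts: non-negativity of the bids by induction, the payment identity by telescoping the update rule, and the budget claim as a consequence of the identity.

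\textbf{Non-negativity.} We argue by induction on $t$ that $\b{t}{i} \ge 0$. The base case is the assumption $\b{1}{i}\ge 0$. For the step, split on whether agent $i$ wins round $t$.
\begin{itemize}
\item If she loses, then $p_i^{(t)} = 0$ in both auction formats, so $\b{t+1}{i} = \b{t}{i} + \eta\rho_i \ge 0$.
\item If she wins, then $0 \le p_i^{(t)} \le \b{t}{i}$. The payment is her own bid in a first-price auction and a competing bid that does not exceed hers in a second-price auction. Hence
\[
\b{t+1}{i} \;\ge\; \b{t}{i} - \eta\,\b{t}{i} + \eta\rho_i \;=\; (1-\eta)\,\b{t}{i} + \eta\rho_i \;\ge\; 0,
\]
where the last step uses $\eta<1$.
\end{itemize}
The only point needing care is that every payment rule considered charges at most the winner's bid. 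This holds regardless of how the other agents bid, so the argument is adversary-independent.

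\textbf{Payment identity.} Summing the update \eqref{eq:update} over $t=1,\dots,T$ telescopes to
\[
\b{T+1}{i} - \b{1}{i} \;=\; \eta\Bigl(\rho_i T - \sum_{t\le T} p_i^{(t)}\Bigr).
\]
Rearranging gives $\sum_{t \le T} p_i^{(t)} = \rho_i T + \bigl(\b{1}{i} - \b{T+1}{i}\bigr)/\eta$. The same computation applies to any prefix $\tau \le T$:
\[
\sum_{t \le \tau} p_i^{(t)} \;=\; \rho_i \tau + \frac{\b{1}{i} - \b{\tau+1}{i}}{\eta}.
\]

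\textbf{Never running out of budget.} Bounding only the total payment at time $T$ is not enough. We need that every bid placed is affordable, i.e.\ the spend through round $\tau$ never exceeds $\rho_i T$, and in particular the last bid fits in the remaining budget.

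Using the prefix identity and $\b{\tau+1}{i} \ge 0$, the spend after $\tau$ rounds is at most $\rho_i\tau + \b{1}{i}/\eta$. Since we also need to cover the bid in round $\tau+1$, it suffices to show
\[
\sum_{t\le \tau} p_i^{(t)} + \b{\tau+1}{i} \;\le\; \rho_i T.
\]
Substituting the prefix identity, the left side equals
\[
\rho_i\tau + \frac{\b{1}{i}}{\eta} - \b{\tau+1}{i}\Bigl(\frac{1}{\eta} - 1\Bigr) \;\le\; \rho_i \tau + \frac{\b{1}{i}}{\eta},
\]
using $\eta < 1$ and $\b{\tau+1}{i} \ge 0$.

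\textbf{Main obstacle.} The bound $\rho_i\tau + \b{1}{i}/\eta$ does not obviously fit under $\rho_i T$, because $\b{1}{i}/\eta$ can be large. So the crude inequality is the main obstacle, and the condition $\b{1}{i} \le \rho_i$ has to be exploited more cleverly. The natural approach is to track the invariant $\sum_{t\le \tau} p_i^{(t)} + \b{\tau+1}{i} \le \rho_i(\tau+1)$, proved by induction on $\tau$. Its base case $\b{1}{i} \le \rho_i$ is exactly the hypothesis.

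For the inductive step, the update gives
\[
p_i^{(\tau+1)} + \b{\tau+2}{i} \;=\; \b{\tau+1}{i} + \eta\rho_i + (1-\eta)\,p_i^{(\tau+1)} \;\le\; \b{\tau+1}{i} + \rho_i,
\]
provided $(1-\eta)\,p_i^{(\tau+1)} \le (1-\eta)\rho_i$. This holds automatically when the agent loses, since the payment is $0$. When she wins, it requires $p_i^{(\tau+1)} \le \rho_i$, which does not hold in general. In that case I would replace the crude bound by a finer invariant such as $\b{\tau+1}{i} \le \rho_i + (\text{slack})$, and verify it with the same case split. Failing that, I would fall back on the weaker conclusion that total spend is at most $\rho_i T + \b{1}{i}/\eta - \b{T+1}{i}/\eta$, and compare it against the budget using the final bid. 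Either way, all of this reduces to the per-round inequality $p_i^{(t)} \le \b{t}{i}$ together with the telescoped identity.
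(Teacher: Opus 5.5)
Your non-negativity argument and the telescoped payment identity, including the prefix version, are correct and match the paper. The budget claim, however, is left unproved. All your attempts stall because the only thing you use about later bids is $b_i^{(\tau+1)}\ge 0$, and that cannot absorb the $b_i^{(1)}/\eta$ term.

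\emph{The missing idea: a lower bound on the bids.} You already derived it in your non-negativity step without noticing. Since $p_i^{(t)}\le b_i^{(t)}$ in every case, $b_i^{(t+1)}\ge(1-\eta)b_i^{(t)}+\eta\rho_i$. This is a convex combination of $b_i^{(t)}$ and $\rho_i$, so $b_i^{(t+1)}\ge\min\{b_i^{(t)},\rho_i\}$. By induction, $b_i^{(t)}\ge\min\{b_i^{(1)},\rho_i\}=b_i^{(1)}$ for all $t$ whenever $b_i^{(1)}\le\rho_i$. Substituting into your prefix identity gives, for every $\tau$ and every behavior of the other agents,
\[
\sum_{t\le\tau}p_i^{(t)}=\rho_i\tau+\frac{b_i^{(1)}-b_i^{(\tau+1)}}{\eta}\le\rho_i\tau\le\rho_i T .
\]
This is the paper's argument, which it states for $\tau=T$. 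The same two facts also give the affordability invariant you wanted:
\[
\sum_{t\le\tau}p_i^{(t)}+b_i^{(\tau+1)}=\rho_i\tau+\frac{b_i^{(1)}}{\eta}-\Bigl(\frac{1}{\eta}-1\Bigr)b_i^{(\tau+1)}\le\rho_i\tau+b_i^{(1)}\le\rho_i(\tau+1).
\]
So your invariant is true. It follows from the telescoped identity plus the bid lower bound, not from a round-by-round induction, which would need $p_i^{(t)}\le\rho_i$ and that fails in general.

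\emph{Why your proposed repairs would not work.} An invariant of the form $b_i^{(\tau+1)}\le\rho_i+\text{slack}$ bounds the bid from the wrong side, since the bid enters the spend identity with a negative sign. It is also false in general: an agent who keeps losing raises her bid by $\eta\rho_i$ every round, and in self-play the bids converge to $1$, well above $\rho_i$ when $n\ge 2$. Your fallback of comparing total spend against the budget using the final bid is exactly the step that requires $b_i^{(T+1)}\ge b_i^{(1)}$, which is the fact you never established.
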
 

\begin{myproof}
    To prove that the bids are non-negative, we notice that
    \begin{equation}\label{eq:531}
        \b{t+1}{i}
        =
        \b{t}{i}
        +
        \eta\qty(\rho_i - p_i^{(t)})
        \ge
        \b{t}{i}
        +
        \eta\qty(\rho_i - \b{t}{i})
        =
        (1-\eta)\b{t}{i}
        +
        \eta \rho_i
        \ge
        \min\qty{\b{t}{i}, \rho_i}
    \end{equation}
    where the first inequality follows from the payment rule, either first- or second-price, and the second inequality from $0 < \eta < 1$.
    Therefore, if $\b{1}{i} \ge 0$, then every bid is non-negative.

    By summing \eqref{eq:update} over all $t \in [T]$ we get that $\sum_{t \in [T]} p_i^{(t)} = \rho_i T + \frac{\b{1}{i} - \b{T+1}{i}}{\eta}$ which proves the payment part of the claim.
    We get that the agent does not run out of budget by \cref{eq:531}, which recursively proves $\b{T+1}{i} \ge \min\{ \b{1}{i}, \rho_i \} = \b{1}{i}$, if $\b{1}{i} \le \rho_i$.
\end{myproof}

\begin{remark}
    Because of $\cref{lef:model:budget_satisfaction}$, we assume that if an agent $i$ runs the \cref{algo}, their initial bid is at most $\rho_i$, thereby ensuring they do not exceed their budget.
\end{remark}

\section{Primal Budget Pacing Algorithm against budgeted opponent} 
\label{sec:adv}

For this section, we consider a single agent who is using the \cref{algo} to bid, while the rest of the agents all collude together to maximize their number of wins.
To differentiate between the two, in this section, we call the first agent \textit{Learner} with a budget of $\rL T$, and we combine the rest of the agents into an agent we call \textit{Optimizer} with a budget of $\rO T$.
The main result of this section is that the Optimizer can win at most $\frac{\rO}{\rL + \rO}T + 3\sqrt T$ times, which implies that the Learner is always guaranteed to win $\frac{\rL}{\rL + \rO} T - 3\sqrt T$ times against any opponent.

\begin{theorem}\label{thm:adv:main}
    Fix a Learner with budget $\rL T$ who is using the \cref{algo} with initial bid $\b{1}{} = \rL$ and learning rate $\eta = 1/\sqrt T$.
    Then, an Optimizer with budget $\rO T$ cannot win more than $\frac{\rO}{\rL + \rO} T + 3\sqrt T$ times.
    In addition, this result holds when the underlying auction is either first or second-price.
\end{theorem}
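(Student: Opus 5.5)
The plan is to turn the Learner's deterministic update into two exact accounting identities, one linear and one quadratic in the bids, and combine them with the Optimizer's budget constraint through Cauchy--Schwarz. Write $b_t$ for the Learner's bid and $W_O,W_L$ for the sets of rounds won by the Optimizer and the Learner, with $|W_O|+|W_L|=T$.

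\textbf{Step 1: reduce to a worst-case payment rule.} In both auction formats, if the Optimizer wins round $t$ it pays at least $b_t$ (up to tie-breaking), and the Learner pays $0$, so $b_{t+1}=b_t+\eta\rL$. If the Learner wins, it pays some $p_t\le b_t$. I would argue, by a monotonicity/coupling induction on the remaining rounds, that a lower Learner payment only raises all of its future bids. This makes every future Optimizer win weakly more expensive. Hence we may assume $p_t=b_t$ on Learner wins, which is exactly the first-price case, so $b_{t+1}=(1-\eta)b_t+\eta\rL$. The Optimizer's constraint becomes $S_O:=\sum_{t\in W_O} b_t\le \rO T$.

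\textbf{Step 2: the linear identity.} By \cref{lef:model:budget_satisfaction} with $b_1=\rL$,
\[
S_L:=\sum_{t\in W_L} b_t=\rL T+\frac{\rL-b_{T+1}}{\eta}.
\]
Setting $S=S_L+S_O$, this gives $S\le(\rL+\rO)T+\rL\sqrt T$.

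\textbf{Step 3: the quadratic identity.} Expanding $b_{t+1}^2-b_t^2$ gives $2\eta\rL b_t+\eta^2\rL^2$ on Optimizer wins and $2\eta b_t(\rL-b_t)+\eta^2(\rL-b_t)^2$ on Learner wins. Summing over all rounds yields
\[
\Big(1-O(\eta)\Big)\sum_{t\in W_L} b_t^2 \;\le\; \rL S+\frac{b_1^2-b_{T+1}^2}{2\eta}+O(\eta T).
\]
Here the $\eta^2 b_t^2$ error terms are absorbed into the left side. By Cauchy--Schwarz, $\sum_{W_L}b_t^2\ge S_L^2/|W_L|$, so
\[
|W_L|\;\gtrsim\;\frac{S_L^2}{\rL S+\frac{\rL^2-b_{T+1}^2}{2\eta}+O(\sqrt T)}.
\]
If $b_{T+1}$ were of order $\rL$, then $S_L=\rL T\pm O(\sqrt T)$ and $S\le(\rL+\rO)T+O(\sqrt T)$. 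This would give $|W_L|\ge \frac{\rL}{\rL+\rO}T-O(\sqrt T)$, i.e.\ the claimed bound on $|W_O|$. Heuristically this matches the steady state $b=\rL/(1-f)$, where $f$ is the Optimizer's win fraction.

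\textbf{Main obstacle: controlling $b_{T+1}$.} The final bid appears with opposite effects in the two identities. A large $b_{T+1}$ shrinks $S_L$ by $b_{T+1}/\eta$ but also helps the denominator through $-b_{T+1}^2/(2\eta)$. I would treat $x=b_{T+1}$ as a free parameter and show the resulting lower bound, a ratio of a quadratic in $x/\eta$ to an affine-minus-quadratic, is minimized near $x=O(\rL)$. Alternatively, I would argue directly that the Optimizer gains nothing by inflating the bid at the end: its final $k$ wins cost at least $\rL k$ each while raising $b$ by only $\eta\rL k$. Getting the constant $3$ would then be bookkeeping, e.g.\ normalizing $\rL+\rO=1$ and tracking the $\rL\sqrt T$ terms.

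As a fallback, I would follow the Lagrangian route the paper hints at. This means proving by backward induction a value bound $V_k(b)\le \frac{\rO}{\rL+\rO}k+\lambda(\rO T-\text{spent})+\phi(b)$ with $\lambda=\frac{1}{\rL+\rO}$ and $\phi$ linear or quadratic in $b$. The one-step inequalities for the two transitions above are the checks to carry out.
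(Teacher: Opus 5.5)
Your plan is sound. Once the step you defer is carried out, it gives a complete proof by a genuinely different route; your Step 1 is the paper's monotone-coupling proposition, phrased as a relaxation. The deferred step closes in two lines, but your guess about where the worst final bid lies is off.

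Normalize $\rL+\rO=1$ and set $v=(b_{T+1}-\rL)/\eta$, so that $S_L=\rL T-v$. Use $S\le S_L+\rO T=T-v$ rather than $S\le T+\rL\sqrt T$, so that everything depends on $v$ alone:
\[
(1-\tfrac{\eta}{2})\textstyle\sum_{t\in W_L}b_t^2\;\le\; D(v):=\rL T-2\rL v-\tfrac{\eta}{2}v^2+\tfrac{\eta}{2}\rL^2T .
\]
Cauchy--Schwarz followed by $x^2/D\ge 2x-D$ gives
\[
|W_L|\ge(1-\tfrac{\eta}{2})\bigl(\rL T-2\rO v+\tfrac{\eta}{2}v^2-\tfrac{\eta}{2}\rL^2T\bigr).
\]
The right side is minimized at $v=2\rO/\eta$, i.e.\ $b_{T+1}=\rL+2\rO$. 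This is of order $\rL+\rO$, not $O(\rL)$, when $\rL\ll\rO$. Hence, for $\eta=1/\sqrt T$,
\[
|W_O|\le \rO T+\tfrac{2\rO^2}{\eta}+\tfrac{\eta}{2}\rL(1+\rL)T\le \rO T+2\sqrt T ,
\]
so no separate argument about inflating the final bid is needed. Dividing by $D(v)$ is legitimate: $D(v)\ge 0$ for the realized $v$, and $D(v)=0$ would force $S_L=0$, i.e.\ $v=\rL T$, where $D<0$ once $\rL\sqrt T>2$. For $\rL\sqrt T\le 2$ the claim is trivial.

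The paper instead relaxes the budget with multiplier $\lambda=\rL/(\rL+\rO)^2$. It proves, by backward induction on the Bellman recursion, that the Lagrangian payoff over any suffix of length $\tau$ is at most $A\tau+g(b)$, where $b$ is the Learner's bid at the start of the suffix and $g\ge 0$ is an explicit quadratic. The two arguments line up as follows:
\begin{itemize}
\item Because $g$ is quadratic, telescoping it is a fixed linear combination of your linear and quadratic identities.
\item Its one-step check on Learner-win rounds is a completed square centered near the equilibrium bid. Your Cauchy--Schwarz-plus-tangent step does the same thing globally; up to $O(\eta)$ slack it is just $\sum_{t\in W_L}(1-b_t)^2\ge 0$.
\item Dropping $g(b_{T+1})\ge 0$, whose minimizer is again $b=\rL+2\rO$, replaces your minimization over $v$.
\end{itemize}
So the two certificates are close cousins. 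Yours is more transparent: there is no need to guess $\lambda$ or $g$, the final-bid trade-off is explicit, and the constant improves to $2$. The paper's yields a value-function bound from every state (any horizon, any current Learner bid). That bound directly quantifies how little the Optimizer gains by steering the Learner's bid, and only a local one-step inequality has to be verified.
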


For the rest of the section, we provide the necessary steps that show the theorem above.
First, we notice that by our choice of the Learner's initial bid and \cref{lef:model:budget_satisfaction} we never have to worry about the Learner running out of budget.

We now simplify the Optimizer's strategy space.
For simplicity, we assume that when the Optimizer and the Learner make the same bid, the Optimizer picks the winner\footnote{All our results extend to arbitrary tie-breaking rules, by having the Optimizer bid the Learner's bid $\pm \e$ for an arbitrarily small $\e > 0$}.
Let $\b{t}{}$ be the Learner's bid in round $t$.
Note that $\b{t}{}$ is a deterministic function of the past rounds, making it known to the Optimizer.
We show that the Optimizer's bid in round $t$ should always be the Learner's bid $\b{t}{}$, making his only decision in round $t$ whether to win or not.
This guarantees a win at the lowest possible price, or losing but ensuring that the Learner pays as much as possible.
This fact about the Optimizer's bids is what makes \cref{thm:adv:main} hold in both first and second-price formats: both players' bids are the same in either payment rule.

We formalize the above statement in the following proposition.
While the statement of the proposition is complicated to make it formal, its message is simple:
Every bidding sequence by the Optimizer can be replaced by another bidding sequence that matches the Learner's bids that has the same wins and decreases the Optimizer's total payment.
The key statement that makes this true is that under this change, the Learner's bids will never increase.
Whenever the Learner wins, she pays more, making her future bids smaller.
Whenever the Learner loses, her update is a constant increase, so this does not increase her future bids either.

\begin{proposition}
    Fix a sequence of bids $\b{t}{O}$ by the Optimizer and let $\b{t}{}$ be the corresponding bid sequence by the Learner.
    Let $\x{t}{} \in \{0, 1\}$ be the indicator variable such that the Optimizer wins round $t$ if and if $\x{t}{} = 1$.
    Now let $\hat b^{(t)}_O$ be a new bidding sequence by the Optimizer and $\hat b^{(t)}$ be the induced sequence by the Learner, which are defined recursively:
    $\hat b^{(t)}_O = \hat b^{(t)}$ and all the ties are broken to ensure the same sequence of wins $\x{t}{}$.
    Then the new bidding sequence ensures the same number of wins for the Optimizer, but at a lower cost.
    This result is true when the underlying auction format is either first or second-price.
\end{proposition}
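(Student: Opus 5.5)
The argument rests on one monotonicity claim: for every round $t$, $\hat b^{(t)} \le \b{t}{}$. Once we have it, the cost comparison is routine. We prove the claim by induction on $t$. The base case is immediate, since both sequences start from the same initial bid $\b{1}{} = \rL$.

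For the inductive step, suppose $\hat b^{(t)} \le \b{t}{}$ and split on $\x{t}{}$. If $\x{t}{} = 1$, the Optimizer wins round $t$ in both scenarios. The Learner then pays $0$ under either payment rule, so $\b{t+1}{} = \b{t}{} + \eta\rL$ and $\hat b^{(t+1)} = \hat b^{(t)} + \eta\rL$, and the inequality is preserved.

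If $\x{t}{} = 0$, the Learner wins round $t$ in both scenarios. In the original sequence her payment satisfies $p^{(t)} \le \b{t}{}$ in both formats: under first price it equals $\b{t}{}$, and under second price it is the Optimizer's bid, which is at most $\b{t}{}$ because she won. In the new sequence the Optimizer bids exactly $\hat b^{(t)}$, so the Learner pays $\hat p^{(t)} = \hat b^{(t)}$ in both formats. Hence
\[
\hat b^{(t+1)} = (1-\eta)\hat b^{(t)} + \eta \rL \le (1-\eta)\b{t}{} + \eta\rL \le \b{t}{} + \eta(\rL - p^{(t)}) = \b{t+1}{},
\]
where the first inequality uses $\eta<1$ and the induction hypothesis, and the second uses $p^{(t)} \le \b{t}{}$ (the same computation as in \eqref{eq:531}). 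This completes the induction.

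With the claim in hand, we compare costs round by round. The Optimizer pays only in rounds with $\x{t}{} = 1$. In such a round of the original sequence he must have bid $\b{t}{O} \ge \b{t}{}$ to win, with ties broken in his favor as assumed. He therefore paid $\b{t}{O} \ge \b{t}{}$ under first price, or exactly $\b{t}{}$ under second price; in either case he paid at least $\b{t}{}$. In the new sequence he pays $\hat b^{(t)}$ under both formats, and $\hat b^{(t)} \le \b{t}{}$ by the claim. Since the win indicators $\x{t}{}$ are identical by construction, the number of wins is unchanged and each round's payment weakly decreases. The total cost is therefore lower, which also shows the new sequence remains budget-feasible for the Optimizer.

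The main point requiring care is the case $\x{t}{}=0$ under second price. Here the Learner's original payment is the Optimizer's possibly lower bid, so we cannot equate payments across the two scenarios. The argument works only because it needs $p^{(t)} \le \b{t}{}$ rather than equality, combined with the contraction form $(1-\eta)b + \eta\rL$, which is monotone in $b$ because $\eta<1$. We also note that the Learner never runs out of budget in the new scenario, by \cref{lef:model:budget_satisfaction}, so the update rule \eqref{eq:update} applies in every round.
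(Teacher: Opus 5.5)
Your proof is correct and takes essentially the same route as the paper. The paper also proves $\hat b^{(t)} \le b^{(t)}$ by induction, using that the update $(1-\eta)b + \eta\rho_{\mathtt{L}}$ is monotone in $b$ together with $p^{(t)} \le b^{(t)}$ in both formats, and then compares the Optimizer's payments round by round via $\hat b^{(t)} \le b^{(t)} \le b^{(t)}_O$. The only difference is cosmetic: you split on $x^{(t)}$, whereas the paper writes the single update $\hat b^{(t)} = \hat b^{(t-1)}\bigl(1-(1-x^{(t-1)})\eta\bigr) + \eta\rho_{\mathtt{L}}$ that covers both cases at once.
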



\begin{myproof}
    All we have to prove is that the Learner's new bids are not greater: for all $t$, $\hat b^{(t)} \le \b{t}{}$.
    This ensures that the Optimizer's payment in every round does not increase:
    When the Optimizer wins, ($\x{t}{} = 1$) we have
    \begin{equation*}
        \hat b^{(t)}_O
        \overset{\textrm{by definition}}{=}
        \hat b^{(t)}
        \le
        \b{t}{}
        \overset{\x{t}{} = 1}{\le}
        \b{t}{O}
    \end{equation*}
    which ensures that the Optimizer's payment does not increase for both auction formats.

    We prove that $\hat b^{(t)} \le \b{t}{}$ inductively on $t \in [T]$.
    First, for $t = 1$, we have $\hat b^{(1)} \le \b{1}{}$ since the Optimizer cannot change the Learner's starting bid.
    Now fix a $t \ge 2$.
    In round $t-1$ the Learner's payment is $(1 - \x{t-1}{}) \hat b^{(t+1)}$, making her bid in round $t$:
    \begin{equation*}
        \hat b^{(t)}
        =
        \hat b^{(t-1)} \qty(1 - (1 - \x{t-1}{})\eta)
        +
        \eta \rL
        \le
        b^{(t-1)} \qty(1 - (1 - \x{t-1}{})\eta)
        +
        \eta \rL
        \le
        b^{(t)}
    \end{equation*}
    where the first inequality holds by the induction hypothesis $\hat b^{(t-1)} \le b^{(t-1)}$ and the second inequality holds because, when the Optimizer uses bid $\b{t-1}{O}$, the Learner's payment is $(1 - \x{t-1}{}) b^{(t+1)}$ in first-price and at most this in second-price.
    This completes the proof.


\end{myproof}

By the above proposition, we can restrict the Optimizer's action space to binary sequences $\{\x{t}{}\}_{t} \in \{0, 1\}^T$.
Specificically, the Optimizer's optimization problem is 
\begin{equation}\label{eq:optimization}
\begin{split}
    \max_{\x{1}{}, \ldots, \x{T}{} \in \{0, 1\}} 
    &\quad
    \sum_{t=1}^T \x{t}{}
    \\
    \text{such that}
    &\quad
    \sum_{t=1}^T \x{t}{} \b{t}{} \le T \rO
    \\
    \text{where}
    &\quad
    \b{t+1}{}
    =
    \b{t}{} + \eta \qty(\Big.\rL - \b{t}{} \qty(1 - \x{t}{}))
    \quad \forall t
\end{split}
\end{equation}

We will show that the above is at most $\frac{\rO}{\rL + \rO}T + 3\sqrt{T}$ as in \cref{thm:adv:main}.
We do so by examining the Lagrangian objective $\lambda T \rO + \sum_{t=1}^T \x{t}{} (1 - \lambda \b{t}{})$ for a specific Lagrange multiplier $\lambda \ge 0$.
We notice that to maximize this Lagrangian, the Optimizer has to take two independent variables into account: the number of rounds and the Learner's current bid.
We show an upper bound for this Lagrangian that separates these two variables.
The upper bound is the summation of a term linear in $T$ (but independent of $b$), and a term quadratic in $b$ (but independent of $T$).
In fact, we show this for an arbitrary number of rounds $\tau$ remaining and the bid of the Learner in that round.
This helps us prove the claim inductively, and also bounds the effect of the Learner having an arbitrary bid, since the Optimizer can spend some rounds to ``make'' her have any bid.

We present the following lemma that offers the above upper bound.
For simplicity of presentation, we present a weaker version of our actual statement, where $\rL + \rO = 1$ and some other terms are simplified.
To prove the lemma, we use induction on $\tau$.
The full version of the lemma and the detailed proof can be found in \cref{lem:adv:opt_lagrange_unnormalized}.

\begin{lemma}[Weaker version of \cref{lem:adv:opt_lagrange_unnormalized}] \label{lem:adv:opt_lagrange_normalized}
    Assume $\rL + \rO = 1$ and fix $\lambda = 1 - \rO$.
    Fix an interval $[T-\tau + 1, T]$ of length $\tau$ for any $\tau \in [0, T]$ and the Learner's initial bid in that interval $b = \b{T-\tau+1}{}$.
    Then, for any binary sequence $\{\x{t}{}\}_{t \in [T-\tau+1, T]}$ it holds
    \begin{equation*}
        \lambda \tau \rO
        +
        \sum_{t=T-\tau+1}^T \x{t}{} \qty(1 - \lambda \b{t}{})
        \le
        \qty(\rO + \eta) \tau 
        +
        \qty(\frac{1}{2} b^2 - (1 + \rO) b + 2) \frac{1}{\eta}
    \end{equation*}
\end{lemma}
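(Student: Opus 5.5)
The plan is to prove the inequality by induction on $\tau$, treating the right-hand side as a potential. Write $\phi(b) = \tfrac12 b^2 - (1+\rO) b + 2$ and $V(\tau,b) = (\rO+\eta)\tau + \phi(b)/\eta$. Note that $\lambda = 1-\rO = \rL$. Because the Learner's dynamics in \eqref{eq:optimization} are deterministic given the $x$'s, it suffices to show that $V$ dominates the one-step Bellman recursion of the Lagrangian:
\begin{equation*}
V(\tau,b) \ge \lambda\rO + \max\Big\{ 1-\lambda b + V\big(\tau-1,\, b+\eta\rL\big),\; V\big(\tau-1,\, b+\eta(\rL-b)\big) \Big\}.
\end{equation*}
The first branch is the Optimizer winning ($x=1$, Learner pays nothing) and the second is the Optimizer losing, with the Learner paying $b$. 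Given this, the claim follows by splitting off the first round of the interval and applying the induction hypothesis to the remaining $\tau-1$ rounds, whose starting bid is the updated one.

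Base case $\tau=0$: the left side is $0$, so we need $\phi(b)\ge 0$ for all $b$. The discriminant of $\phi$ is $(1+\rO)^2-4<0$ because $\rO<1$, so $\phi>0$ everywhere.

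Inductive step: use the identity $\phi(b+\delta)-\phi(b) = b\delta + \tfrac12\delta^2 - (1+\rO)\delta$. After substitution, each branch reduces to an explicit inequality in $b$ and $\eta$.

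Win branch, $\delta = \eta\rL$. The required inequality is
\begin{equation*}
\rO+\eta - \rL(b-1-\rO) - \tfrac12\eta\rL^2 \ge \rL\rO + 1 - \rL b .
\end{equation*}
The $b$-terms cancel, and using $\rL+\rO=1$ this reduces to $\eta(1-\rL^2/2)\ge 0$, which is immediate.

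Lose branch, $\delta=\eta(\rL-b)$. The required inequality is
\begin{equation*}
\rO+\eta - (\rL-b)(b-1-\rO) - \tfrac12\eta(\rL-b)^2 \ge \rL\rO .
\end{equation*}
Use $1+\rO = \rL+2\rO$ to rewrite $-(\rL-b)(b-1-\rO) = (b-\rL-\rO)^2-\rO^2 = (b-1)^2-\rO^2$. Combined with $\rO-\rL\rO=\rO^2$, the condition becomes
\begin{equation*}
(b-1)^2 + \eta\big(1-\tfrac12(b-\rL)^2\big)\ge 0 .
\end{equation*}

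I expect this lose-branch inequality to be the main obstacle. Without care it seems to require an a priori bound on the Learner's bid, and such a bound is unavailable because the Optimizer can make the Learner's bid grow by letting her win repeatedly. The point is that the quadratic $(b-1)^2$ absorbs the bad term for all $b$. Write $b-\rL = (b-1)+\rO$, so that $\tfrac12(b-\rL)^2 \le (b-1)^2+\rO^2$. Then
\begin{equation*}
\eta\big(\tfrac12(b-\rL)^2-1\big)\le \eta\big((b-1)^2+\rO^2-1\big)\le \eta(b-1)^2 \le (b-1)^2,
\end{equation*}
using $\rO\le 1$ and $\eta<1$. This closes the induction with no restriction on $b$ beyond $b\ge 0$, which \cref{lef:model:budget_satisfaction} already guarantees.
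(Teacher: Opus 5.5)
Your proof is correct and follows the same route as the paper: induction on $\tau$ via the one-step recursion for the Lagrangian, using the potential $A\tau + g(b)$ with $g(b)=\big(\tfrac12 b^2-(1+\rO)b+2\big)/\eta$, and checking the win branch (where the $b$-terms cancel) and the lose branch separately. The differences are cosmetic. You prove the normalized bound directly with the looser slope $\rO+\eta$, whereas the paper proves \cref{lem:adv:opt_lagrange_unnormalized} with slope $\rO^2+(\rO^2+\rL^2)\eta$ and adds $\lambda\rO$ afterwards. You also close the lose branch by rewriting it as $(b-1)^2+\eta\big(1-\tfrac12(b-\rL)^2\big)\ge 0$, while the paper maximizes the concave quadratic in $b$ at its vertex.
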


We now briefly explain the meaning behind the terms in the above lemma.
The $\rO \tau$ term is how many wins the Optimizer gets when always bidding $1$.
In this case, the Learner will quickly converge to also bidding about $1$.
Additionally, to remain at that bid, the Learner has to be winning a $\rL=1-\rO$ fraction of the rounds.
Under those conditions ($\b{t}{} = 1$ and $\x{t}{} = \rO$), the Optimizer's Lagrangian is exactly $\rO \tau$.

However, the Optimizer can use more complicated strategies.
On the one hand, she can perform the same strategy as above, but lose when the Learner bids $1$ and win when the Learner bids $1 - \order{\eta}$.
This gives her an excess of about $\eta\tau$ budget, which explains the corresponding term.
On the other hand, the Optimizer can make bids very far from $1$.
This is reflected in the quadratic term: if $b \le 2$, the bound increases as $b$ becomes smaller, since the Optimizer is paying less per win.
However, this quadratic is bounded by $2/\eta$ in that range, showing that there is little to gain by making the Learner use bids in the $[0, 2]$ range.
For $b > 2$, our bound is increasing in $b$, which would suggest that if the Learner started at some really high bid, the Optimizer would get really high utility.
However, we can select the Learner's bid in round $1$, and our proof of the lemma implicitly shows that the benefit of making the Learner's bid high (which requires the Optimizer to pay those bids) is not worth the additional utility.

\begin{myproof}[Proof Sketch]
    We first notice that there is a simple recursive equation for the optimal value of the Lagrangian: if $\opt_\tau(b)$ is the maximum value when there are $\tau$ rounds remaining and the Learner starts at bid $b$, then
    \begin{equation*}
        \opt_\tau(b)
        =
        \lambda \rO
        +
        \max\qty{\Big.
            1 - \lambda b + \opt_{\tau-1}(b + \eta \rL)
            ,
            \opt_{\tau-1}\qty\big(b + \eta (\rL - b))
        }
    \end{equation*}
    where the first term of the maximum corresponds to $\x{t}{} = 1$ and the second to $\x{t}{} = 0$.
    By writing our upper bound on the Lagrangian as $A \tau + g(b)$ for some $A \ge 0$ and function $g$, by induction on the above equation, what we have to prove is that
    \begin{equation*}
        A + g(b)
        \ge
        \lambda \rO
        +
        \max\qty{\Big.
            1 - \lambda b + g(b + \eta \rL)
            ,
            g\qty\big(b + \eta (\rL - b))
        }
    \end{equation*}

    By approximating $g(b + \eta x) - g(b) \approx \eta x g'(b)$ the above becomes
    \begin{equation*}
        A
        \ge
        \lambda \rO
        +
        \max\qty{\Big.
            1 - \lambda b + \eta \rL g'(b)
            ,
            \eta (\rL - b) g'(b)
        }
    \end{equation*}
    By our choice of $g(b) = \frac{b^2 / 2 - (1 + \rO) b + 2}{\eta}$, the above r.h.s. becomes exactly $\rO$, which is what we set our $A$ to, ignoring any $\eta$ terms.
\end{myproof}

We now use \cref{lem:adv:opt_lagrange_normalized} to prove \cref{thm:adv:main} for budgets normalized to $1$.
We present the proof for unnormalized budgets in \cref{sec:app:model}.

\begin{myproof}[Proof of \cref{thm:adv:main} for normalized budgets]
    We use \cref{lem:adv:opt_lagrange_normalized} for the entire time interval $[1, T]$ of length $\tau = T$ and initial bid by the Learner $b = \b{1}{} = \rL = 1 - \rO$.
    This upper bounds the Lagrangian, and therefore the Optimizer's optimal value in \cref{eq:optimization} by
    \begin{equation*}
        \lambda T \rO
        +
        \sum_{t=1}^T \x{t}{} \qty(1 - \lambda \b{t}{})
        \le
        \rO T
        +
        \eta T
        +
        \frac{(1 + \rO)^2}{2 \eta}
        \le
        \rO T
        +
        \eta T
        +
        \frac{2}{\eta}
    \end{equation*}

    Setting $\eta = 1/\sqrt T$ bounds the number of wins of the Optimizer by $\rO T + 3 \sqrt{T}$.
\end{myproof}
\section{Interval Discrepancy with Multiple Learners} \label{sec:diff_b}

In the previous section, we proved that when an agent with a $\rho$ fraction of the budget follows the \cref{algo}, she is guaranteed to win at least $\rho T - \order*{\sqrt T}$ times.
In other words, in a system with $n$ agents who all follow our algorithm, no agent can gain more than $\order*{\sqrt T}$ wins by deviating to any other algorithm.
In this section, we perform a more fine-grained analysis of how these wins are distributed across time, specifically for first-price auctions.

We prove that when all agents run the \cref{algo}, after an initial convergence period, every agent with a $\rho$ fraction of the budget is guaranteed to win $\rho \tau - \order*{1}$ times in every interval of length $\tau \in \order*{\sqrt T}$.
The convergence period is $\order*{\sqrt T \log T}$, which means that it is very short.
In fact, we will show that agents' bids converge to $1$, which means that an agent $i$ with a starting bid $b_0$ needs $\Theta\qty\big(\frac{1 - b_0}{\eta \rho_i}) = \Theta\qty\big(\frac{(1 - b_0)\sqrt T}{\eta \rho_i})$ rounds to reach $1$, even if she loses all the time.
This is why our convergence time is proportional to $1/\rhomin$, where $\rhomin = \min_i \rho_i$.

\begin{restatable}{theorem}{DiscrepancyDiffBudgets}\label{thm:diff_b:main}
    Assume there are $n$ agents, with agent $i$ having a budget $\rho_i T$ with $\sum_i \rho_i = 1$.
    Assume that all agents are running the \cref{algo} with initial bids $\b{1}{i} \le \rho_i$ and step size $\eta$ and that $\eta \ll \rhomin$.
    Fix an interval $[t_1, t_2]$ of length $\tau$ such that $t_1 \ge \frac{6}{\rhomin\eta} \log \frac{1}{\eta}$.
    Then the number of wins that agent $i$ gets in that interval is at least
    \begin{equation*}
        \rho_i \tau 
        -
        \frac{6}{\rhomin} \rho_i \tau \eta
        -
        \frac{18}{\rhomin}
        .
    \end{equation*}
\end{restatable}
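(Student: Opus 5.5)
Our plan follows the three-stage convergence outlined in the techniques section, followed by a counting argument that converts bid proximity into win counts. Throughout, write $\vec b^{(t)} = (\b{t}{1},\dots,\b{t}{n})$ for the vector of bids. In a first-price auction with all agents running \cref{algo}, the winner $w_t$ of round $t$ is the highest bidder, and it pays its own bid. Hence every agent's update is
\[
\b{t+1}{i} = \b{t}{i} + \eta\bigl(\rho_i - \mathbb{1}[i = w_t]\,\b{t}{i}\bigr).
\]

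\textbf{Step 1: bids within $O(\sqrt\eta)$ of $1$.} Consider the convex, non-smooth potential $f(\vec b) = \max_i b_i - \sum_i \rho_i \log b_i$. Its subgradient in coordinate $i$ is $\mathbb{1}[i = w_t] - \rho_i/b_i$, and the update above is a coordinate-rescaled subgradient step (rescaling coordinate $i$ by $b_i$). Since $\sum_i \rho_i = 1$, the minimizer is the all-ones vector. We would verify quadratic growth of $f$ around this minimizer on the bounded region where bids live. Bids stay in $[\min\{\b{1}{i},\rho_i\}, O(1)]$ by \cref{lef:model:budget_satisfaction}, together with an easy upper bound. We would then run the standard subgradient-descent argument with a Lyapunov function such as $\sum_i (b_i - 1 - \log b_i)$, which is the Bregman divergence matched to the rescaling. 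This yields geometric contraction at rate $1 - \Omega(\rhomin\eta)$ down to an $O(\eta)$ floor for the potential. After $O(\frac{1}{\rhomin\eta}\log\frac1\eta)$ rounds, this gives $|b_i - 1| = O(\sqrt\eta)$ for all $i$.

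\textbf{Step 2: bids within $O(\eta)$ of each other.} Within a window where all bids are within $O(\sqrt\eta)$ of $1$, only the maximal bidder wins, and its bid drops by about $\eta$. Every other bid rises by $\eta\rho_i$. Consider the spread $\max_i b_i - \min_i b_i$. The top bid decreases each round it wins, and the lowest bid increases by at least $\eta\rhomin$ every round it does not win; a non-maximal bid never wins. Therefore, after $O(\frac{1}{\rhomin\eta}\cdot\sqrt\eta/\sqrt\eta)$ rounds, that is $O(1/(\rhomin\eta))$ rounds, every bid lies within $O(\eta/\rhomin)$ of the maximum. Once there, the spread cannot exceed $O(\eta)$ again, since any agent above the pack wins and drops.

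\textbf{Step 3: average within $O(\eta)$ of $1$.} Let $\bar b = \sum_i b_i/n$. Summing the updates gives $\bar b^{(t+1)} - \bar b^{(t)} = \frac{\eta}{n}\bigl(1 - \b{t}{w_t}\bigr)$. Since $\b{t}{w_t} = \bar b + O(\eta)$ once Step 2 holds, the quantity $\bar b - 1$ contracts geometrically at rate $(1 - \eta/n)$, up to $O(\eta^2)$ errors per round. This drives the average, and hence all bids, to within $O(\eta/\rhomin)$ of $1$ after a further $O(\frac{n}{\eta}\log\frac1\eta) \le O(\frac{1}{\rhomin\eta}\log\frac1\eta)$ rounds. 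Adding the three phases, with constants tuned, gives the threshold $t_1 \ge \frac{6}{\rhomin\eta}\log\frac1\eta$. The same drift argument shows the band is invariant afterwards.

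\textbf{Step 4: counting wins.} On $[t_1,t_2]$, telescope agent $i$'s update: $\b{t_2+1}{i} - \b{t_1}{i} = \eta\bigl(\rho_i\tau - \sum_{t} \mathbb{1}[i = w_t]\,\b{t}{i}\bigr)$. The bids lie in $[1-c\eta/\rhomin,\,1+c\eta/\rhomin]$ and their total change over the interval is at most $2c\eta/\rhomin$. Dividing by $\eta$, the number of wins $W_i$ satisfies
\[
W_i(1 + c\eta/\rhomin) \ge \rho_i\tau - 2c/\rhomin,
\]
so $W_i \ge \rho_i\tau - (c/\rhomin)\rho_i\tau\eta - O(1/\rhomin)$. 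This matches the claimed form with $c = 6$ and additive term $18/\rhomin$; this last step is essentially \cref{lem:diff_b:wins_deviation}.

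The main obstacle is Step 1: establishing quadratic growth for a non-smooth potential in a way that is uniform in the possibly small $\rho_i$. A secondary concern is the transition into Step 2: the $O(\sqrt\eta)$ band must persist, because the function is not sharp and distances can grow.
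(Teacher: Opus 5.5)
Your architecture matches the paper's: an $O(\sqrt\eta)$ band via quadratic growth, then clustering, then drift of the average, then telescoping. Your Steps 3 and 4 are essentially \cref{lem:diff_b:eta_dist_from_one,lem:diff_b:wins_deviation}. The genuine gap is Step 2, specifically the claim that once the spread is $O(\eta)$ it stays so. The maximum bid is not monotone. A non-winner $j$ within $\eta\rho_j$ of the top rises above the old maximum. So in a round where the winner stays above the minimum, the spread $b_{\max}-b_{\min}$ grows by up to $\eta(\rho_j-\rhomin)$. Agents near the top can leapfrog one another, so the maximum rises over several consecutive rounds while a low-$\rho$ agent lags at the bottom. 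Nobody is ``above the pack'' in this scenario, so the reason you give does not apply. The only cap on the maximum in your argument is the Step 1 band $1+O(\sqrt\eta)$. Hence your min-rises argument shows the spread is $O(\eta)$ at \emph{some} time, but persistently only $O(\sqrt\eta)$.

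What is missing is a collective-drift argument. The winner pays about $\eta$ while every agent gains $\eta\rho_i$, so a top cluster that is not all agents loses height in aggregate at rate about $\eta\rhomin$. The paper captures this with the potential $\norm{\vec b-\bavg\vec 1}_2^2$. Writing $\vec b=\vec 1+\vec\delta$, its first-order change is $-2\eta\bigl(\delta_{\max}-\vec\rho^\top\vec\delta+\delta_{\max}(\delta_{\max}-\davg)\bigr)$. The key inequality $\delta_{\max}-\vec\rho^\top\vec\delta=\sum_i\rho_i(\delta_{\max}-\delta_i)\ge\rhomin\norm{\vec b-\bavg\vec 1}_2$ gives a decrease \emph{linear} in the distance, i.e.\ sharpness in the directions orthogonal to $\vec 1$. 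The Step 1 band bounds the remaining terms by $4(D^2+1)\eta^2$ (\cref{cl:diff_b:dist_from_avg_decrease}). This yields both reaching and staying within $3\frac{D^2+1}{\rhomin}\eta$ (\cref{lem:diff_b:dist_from_avg_eta}).

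The second gap is that the explicit constants are read off rather than derived. Let $M,m$ be the extreme bids in the window, as in \cref{lem:diff_b:wins_deviation}. Your band $[1-c\eta/\rhomin,1+c\eta/\rhomin]$ is symmetric, but Step 3 is one-sided. Since $b_{\max}\ge\bavg$, the drift $\frac{\eta}{n}(1-b_{\max})$ keeps $\bavg\le 1+O(\eta^3)$, while from below $\bavg$ can lag by $b_{\max}-\bavg\le A\eta$. This gives $1-2A\eta-\eta^2\le b_i^{(t)}\le 1+A\eta+2\eta^3$. The term $\frac{6}{\rhomin}\rho_i\tau\eta$ comes from $M-1\approx A\eta$, and $\frac{18}{\rhomin}$ from $(M-m)/\eta\approx 3A$, with $A=6/\rhomin$. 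By your own formula, a symmetric $c=6$ would give $12/\rhomin$, not $18/\rhomin$.

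Moreover, if you repair Step 2 with that potential, the first pass has $D^2\approx\frac{15}{4\rhomin}$ and hence only $A=O(1/\rhomin^2)$. Reaching $A\le 6/\rhomin$ needs the paper's bootstrap: once all bids are $O(\eta)$-close to $1$, re-run Steps 2 and 3 with $D^2\le 1$. This is where $\eta\le\rhomin^5/667$ enters (\cref{thm:diff_b:bids_are_one_pm_eta}). The threshold is likewise unsupported. The paper's phases add up to $\frac{11}{\eta\rhomin}\log\frac{1}{\eta}$, which is what its own proof of this theorem actually uses, not $\frac{6}{\rhomin\eta}\log\frac{1}{\eta}$.

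Finally, Step 1 is not the real obstacle. For your potential, $f(\vec b)-f(\vec 1)=\sum_i\rho_i(\max_j b_j-1-\log b_i)\ge\rhomin\sum_i(b_i-1-\log b_i)$ is immediate. The actual nuisance is that your Lyapunov function is infinite if some $b_i^{(1)}=0$, which the hypothesis allows, and the subgradient entries $\rho_i/b_i$ are unbounded for small bids. The paper's $f(\vec b)=\frac12\max_i b_i^2-\vec\rho^\top\vec b+\frac12$ avoids this. The update is then an exact Euclidean subgradient step with $\norm{\vec g}_2^2\le 2\norm{\vec b-\vec 1}_2^2+3$ and $2f(\vec b)\ge\sum_i\rho_i(b_i-1)^2$.
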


To prove this theorem, we will show that after an initial period, bids converge to being all within $\order*{\eta}$ close to $1$.
Once all agents' bids are this close to $1$, the discrepancy will follow (see \cref{lem:diff_b:wins_deviation}).
The technically challenging part of the proof is the convergence to this tight range of bids, which we will establish in \cref{ssec:diff_b:sqrt_eta,ssec:diff_b:eta_of_avg,ssec:diff_b:eta_of_one}.

Our tool that connects the discrepancy with the learning dynamics is the following lemma, which bounds the number of wins an agent gets as a function of how large or small her bids are.
The interesting regime of this theorem is when her bids are in an interval are in the range $[m, M]$ where both $m$ and $M$ are very close to $1$.
In particular, we will show that eventually $\abs{1 - M}, \abs{1 - m} \le \order{\eta}$, making the discrepancy of agent $i$ at most $\order{1}$ for short intervals, proving \cref{thm:diff_b:main}.

\begin{lemma} \label{lem:diff_b:wins_deviation}
    Fix an interval $[t_1, t_2)$ of length $\tau = t_2 - t_1$ and an agent $i$ who follows the \cref{algo}.
    Let $M = \max_{t \in [t_1, t_2]} \b{t}{i}$ and $m = \min_{t \in [t_1, t_2]} \b{t}{i}$ be the maximum and minimum bids of that agent in the interval.
    Let $\x{t}{i}$ be the indicator that agent $i$ wins round $t$.
    Then
    \begin{equation*}
        \sum_{t=t_1}^{t_2-1} \x{t}{i}
        \ge
        \frac{1}{M} \rho_i \tau
        -
        \frac{M - m}{\eta M}
        .
    \end{equation*}
\end{lemma}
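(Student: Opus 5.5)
Sum the update rule \eqref{eq:update} over $t\in[t_1,t_2)$ and telescope:
\begin{equation*}
    \b{t_2}{i} - \b{t_1}{i} = \eta \rho_i \tau - \eta \sum_{t=t_1}^{t_2-1} p_i^{(t)} .
\end{equation*}
This gives $\sum_t p_i^{(t)} = \rho_i \tau + \frac{\b{t_1}{i} - \b{t_2}{i}}{\eta}$, the interval analogue of \cref{lef:model:budget_satisfaction}.

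Next, bound the total payment from above by the number of wins. In a first-price auction the agent pays $p_i^{(t)} = \x{t}{i}\,\b{t}{i}$. Every round $t\in[t_1,t_2)$ lies in $[t_1,t_2]$, so $\b{t}{i}\le M$ and hence $p_i^{(t)} \le M\,\x{t}{i}$. The same inequality holds in a second-price auction, since there the payment is at most the bid. Summing over the interval gives $M\sum_t \x{t}{i} \ge \sum_t p_i^{(t)}$.

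Finally, bound the telescoped boundary term from below. Both $t_1$ and $t_2$ belong to the closed interval $[t_1,t_2]$ over which $M$ and $m$ are defined, so $\b{t_1}{i} - \b{t_2}{i} \ge m - M$. Combining the three steps,
\begin{equation*}
    M\sum_{t=t_1}^{t_2-1}\x{t}{i} \ge \rho_i\tau - \frac{M-m}{\eta},
\end{equation*}
and dividing by $M>0$ gives the claim. If $M=0$, the agent's bids are all zero; the paper implicitly excludes this degenerate case, and it cannot arise once bids are near $1$.

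The argument is a short telescoping computation with no real obstacle. The only points needing care are that the max and min range over the closed interval $[t_1,t_2]$, so that the endpoint bid $\b{t_2}{i}$ is covered, and that the payment is bounded by $M\,\x{t}{i}$ rather than written with equality, so that the argument works for both auction formats.
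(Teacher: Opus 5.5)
Your proof is correct and follows the paper's argument: telescope the update over the interval, bound the boundary term below by $-(M-m)/\eta$, and use $p_i^{(t)} \le \x{t}{i} M$. Your observation that the payment is at most the bid also extends the lemma to second-price auctions. On the degenerate case, $M>0$ holds automatically whenever $\tau \ge 1$, since $\b{t+1}{i} \ge (1-\eta)\b{t}{i} + \eta\rho_i \ge \eta\rho_i > 0$ gives $\b{t_2}{i} > 0$.
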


\begin{myproof}
    Recall that for first-price auctions $\b{t+1}{i} = \b{t}{i} + \eta(\rho_i - \x{t}{i}\b{t}{i})$.
    Summing over all $t \in [t_1, t_2]$ and rearranging we get
    \begin{equation*}
        \sum_{t=t_1}^{t_2-1} \x{t}{i}\b{t}{i}
        =
        \rho_i \tau
        +
        \frac{\b{t_1}{i} - \b{t_2}{i}}{\eta}
        \ge
        \rho \tau
        -
        \frac{M - m}{\eta}
    \end{equation*}

    The lemma follows by $\x{t}{i}\b{t}{i} \le \x{t}{i} M$.
\end{myproof}

Using the above lemma, we prove \cref{thm:diff_b:main} by proving that $\b{t}{i} = 1 \pm \order*{\eta}$ for large enough $t$.
Specifically, the steps we follow are
\begin{enumerate}
    \item First, in \cref{ssec:diff_b:sqrt_eta}, we prove that after $\order*{\frac{1}{\eta} \log \frac{1}{\eta}}$ rounds, it holds that $\b{t}{i} = 1 \pm \order*{\sqrt\eta}$ for all $i$.
    Unfortunately, this falls short of \cref{thm:diff_b:main}: every agent is guaranteed only $\order*{1/\sqrt\eta} = \order*{T^{1/4}}$ discrepancy for small enough interval lengths $\tau$.
    
    \item Then, in \cref{ssec:diff_b:eta_of_avg}, we prove that after an additional $\order*{\frac{1}{\eta}}$ rounds, all the agents' bids are within $\order{\eta}$ of each other.
    While this does not prove anything useful for the discrepancy, it is a crucial step for the next result.

    \item Finally, in \cref{ssec:diff_b:eta_of_one}, we prove that after an additional $\order*{\frac{1}{\eta} \log \frac{1}{\eta}}$ rounds, all the bids are in the range $[1 - \order{\eta}, 1 + \order{\eta}]$, as needed.
    This will prove \cref{thm:adv:main}.
\end{enumerate}

We defer all the full proofs of the results of this section to \cref{sec:app:diff_b}.

\subsection{Convergence of bids to \texorpdfstring{$1 \pm \order*{\sqrt\eta}$}{1 +- sqrt(eta)}}
\label{ssec:diff_b:sqrt_eta}

In this section, we show that after an initial period of rounds, all the agents' bids converge to $1 \pm \order*{\sqrt\eta}$.
Our main technical tool is that the learning dynamics that the agents follow are the same as running subgradient descent with constant step $\eta$ on the function
\begin{equation}\label{eq:func_gradient}
    f(\vec b)
    =
    \frac{1}{2} \max_i b_i^2 - \vec\rho^\top \vec b + \frac{1}{2}
\end{equation}
where $\vec\rho = (\rho_i)_{i \in [n]}$.
We now prove some basic properties about the $f$ function, which will be useful to prove some kind of convergence.
We start with the fact that $f$ is convex and that one of its subgradients corresponds to the update step of our decentralized learning dynamics.

\begin{restatable}[Convexity of $f$]{proposition}{Convexity}\label{cl:convex}
    Function $f$ is convex and its subgradients for $\vec b \ge 0$ are
    \begin{equation*}
        \partial f(\vec b)
        =
        \textrm{conv}\qty(\qty{
            b_i \vec e_i - \vec\rho, i \in \argmax_k b_k
        })
    \end{equation*}
    where \textrm{conv} denotes the convex hull and $e_i$ is the unit vector in the $i$-th dimension.
    Note that one of the extreme points of $\partial f(\vec b)$ corresponds to the update vector of our learning dynamics.
\end{restatable}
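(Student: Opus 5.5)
The plan is to write $f$ as a sum of simple pieces and use standard convex-analysis facts about maxima of convex functions. Set $h(\vec b) = \max_i \frac12 b_i^2$, so that $f(\vec b) = h(\vec b) - \vec\rho^\top \vec b + \frac12$. Each $g_i(\vec b) = \frac12 b_i^2$ is convex and differentiable, with $\nabla g_i(\vec b) = b_i \vec e_i$. A pointwise maximum of convex functions is convex, and adding an affine term preserves convexity. This gives convexity of $f$.

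For the subdifferential, I will use the standard max rule (Danskin/Dubovitskii--Milyutin) for a finite maximum of differentiable convex functions:
\begin{equation*}
\partial h(\vec b) = \textrm{conv}\qty(\qty{\nabla g_i(\vec b) : i \in I(\vec b)}),
\qquad I(\vec b) = \argmax_k g_k(\vec b).
\end{equation*}
The affine part contributes exactly $-\vec\rho$ by the sum rule, which holds with equality here because $-\vec\rho^\top\vec b$ is differentiable. Translating the convex hull by $-\vec\rho$ turns it into the convex hull of the points $b_i\vec e_i - \vec\rho$.

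The only step specific to this statement is identifying $I(\vec b)$ with $\argmax_k b_k$. This is where the hypothesis $\vec b \ge 0$ enters: on the nonnegative orthant, $t\mapsto \frac12 t^2$ is strictly increasing, so $\argmax_k \frac12 b_k^2 = \argmax_k b_k$. Without nonnegativity, a very negative coordinate could attain the max of squares, which explains the restriction in the statement. If one prefers not to cite the max rule, a direct check also works. First, the inequality $h(\vec c) \ge g_i(\vec c) \ge g_i(\vec b) + b_i(c_i - b_i) = h(\vec b) + (b_i\vec e_i)^\top(\vec c - \vec b)$ holds for $i\in I(\vec b)$, which gives the inclusion $\supseteq$, and the subdifferential is convex so the whole hull is included. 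Second, for the reverse inclusion, compute directional derivatives: $h'(\vec b; \vec d) = \max_{i\in I} b_i d_i$, which is the support function of the hull. Since a closed convex set is determined by its support function, the two sets coincide. I expect this reverse inclusion to be the only mildly nontrivial point; citing the standard theorem should suffice.

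Finally, for the remark connecting $f$ to the dynamics, note that in a first-price auction the winner $i$ is a highest bidder, i.e.\ $i\in\argmax_k b_k$, and pays $b_i$. Everyone else pays $0$. So the vector of payments is $\vec p = b_i\vec e_i$, and the joint update \eqref{eq:update} reads $\vec b^{(t+1)} = \vec b^{(t)} - \eta(b_i\vec e_i - \vec\rho)$. This is a step of subgradient descent with step size $\eta$ along the extreme point $b_i\vec e_i - \vec\rho$ of $\partial f(\vec b^{(t)})$. Lemma~\ref{lef:model:budget_satisfaction} guarantees the bids stay nonnegative, so the characterization of $\partial f$ applies at every round.
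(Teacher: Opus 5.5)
Your proof is correct and follows essentially the same route as the paper: convexity because $f$ is a pointwise maximum of convex functions plus an affine term, and the subdifferential as the convex hull of the gradients $b_i\vec e_i - \vec\rho$ of the active pieces. Your version is more careful than the paper's short sketch, since you invoke the max rule explicitly (or verify it through support functions) and point out that $\vec b \ge 0$ is exactly what lets you identify $\argmax_k b_k^2$ with $\argmax_k b_k$, a point the paper leaves implicit.
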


We next show a crucial property that shows why we expect our dynamics to converge around the $\vec 1$ vector.
The function $f$ has a unique minimizer at $\vec b = \vec 1$.

\begin{restatable}[Minimum of $f$]{proposition}{Minimum}\label{cl:minimum}
    The unique minimum of $f$ is at $\vec b^\star = \vec 1$, with value $f(\vec b^\star) = 0$.
\end{restatable}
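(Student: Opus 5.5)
We need to show that $f(\vec b) = \frac12 \max_i b_i^2 - \vec\rho^\top \vec b + \frac12 \ge 0$ for all $\vec b$ (or at least $\vec b \ge 0$, the domain of interest), with equality only at $\vec b = \vec 1$. The plan is to compare every coordinate with the maximum one and then use the normalization $\sum_i \rho_i = 1$ to reduce everything to a one-variable quadratic.

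Let $B = \max_i b_i$. Since every $\rho_i > 0$ and $\sum_i \rho_i = 1$, we have
\begin{equation*}
    \vec\rho^\top \vec b = \sum_i \rho_i b_i \le \sum_i \rho_i B = B .
\end{equation*}
This gives
\begin{equation*}
    f(\vec b) \ge \frac12 B^2 - B + \frac12 = \frac12 (B-1)^2 \ge 0 .
\end{equation*}
At $\vec b = \vec 1$ we have $B = 1$ and $\vec\rho^\top \vec 1 = 1$, so $f(\vec 1) = \frac12 - 1 + \frac12 = 0$. Hence the minimum value is $0$ and it is attained at $\vec b^\star = \vec 1$.

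For uniqueness, suppose $f(\vec b) = 0$. Then both inequalities in the chain above must be tight.
\begin{itemize}
    \item Tightness of $\frac12 (B-1)^2 \ge 0$ forces $B = 1$.
    \item Tightness of $\sum_i \rho_i b_i \le B$ means $\sum_i \rho_i (B - b_i) = 0$. Each term is nonnegative and each $\rho_i$ is strictly positive, so $b_i = B$ for every $i$.
\end{itemize}
Together these give $\vec b = \vec 1$.

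The only point needing care is the strict positivity of every $\rho_i$, which uniqueness relies on. This holds because every agent has a positive budget, and it is implicit in the assumption $\eta \ll \rhomin$, which requires $\rhomin > 0$. The argument does not use $\vec b \ge 0$, so the minimizer is unique over all of $\mathbb{R}^n$.
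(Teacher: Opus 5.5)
Your proof is correct and is essentially the paper's argument. Both use $\vec\rho^\top\vec b \le \max_i b_i$ to get $f(\vec b) \ge \frac12(\max_i b_i - 1)^2$, and both use $\rho_i > 0$ to force $\vec b = \vec 1$ at equality; the only difference is that the paper certifies $\vec 1$ as a minimizer via $\vec 0 \in \partial f(\vec 1)$, while you simply evaluate $f(\vec 1) = 0$. Your claim that the argument works on all of $\mathbb{R}^n$ is right, but it silently uses $\max_i b_i^2 \ge (\max_i b_i)^2$, which can be strict when some coordinates are negative, so that step deserves a line.
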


We now analyze the convergence of our dynamics.
Specifically, we examine how the distance from the optimum bid, $\norm*{\vecb{t}{} - \vec 1}$, evolves.
The following proposition uses the standard analysis of subgradient descent to show an upper bound of this distance in round $t+1$, given the distance of the previous round.
In particular, we will use this proposition to show that if $\norm*{\vecb{t}{} - \vec 1} \ge \Theta(\sqrt\eta)$, then the distance decreases.

\begin{restatable}{proposition}{NormTwoDecrease}\label{cl:diff_b:dist_from_one_decrease}
    Fix a round $t$ and let $\vec g^{(t)}$ be any subgradient of $f(\vecb{t}{})$. Then
    \begin{equation*}
        \norm{\vecb{t+1}{} - \vec 1}_2^2
        \le
        \qty( 1 - \eta \rhomin + 2 \eta^2)\norm{\vecb{t}{} - \vec 1}_2^2
        +
        3\eta^2
    \end{equation*}
    where $\rhomin = \min_i \rho_i$.
\end{restatable}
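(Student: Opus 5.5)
The plan is the standard one-step subgradient-descent estimate, made quantitative by a quadratic-growth bound on $f$ and a careful bound on $\norm{\vec g^{(t)}}_2^2$.

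\textbf{Step 1: expand the update.} By \cref{cl:convex}, the dynamics read $\vecb{t+1}{} = \vecb{t}{} - \eta \vec g^{(t)}$. In a first-price auction the winner $k$ has the highest bid, so $k \in \argmax_j \b{t}{j}$ and the update direction is the extreme subgradient $\b{t}{k}\vec e_k - \vec\rho$. I will still carry out the argument for an arbitrary $\vec g^{(t)} \in \partial f(\vecb{t}{})$. Expanding the square gives
\begin{equation*}
    \norm{\vecb{t+1}{} - \vec 1}_2^2
    =
    \norm{\vecb{t}{} - \vec 1}_2^2
    - 2\eta\, \vec g^{(t)\top}(\vecb{t}{} - \vec 1)
    + \eta^2 \norm{\vec g^{(t)}}_2^2 .
\end{equation*}

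\textbf{Step 2: control the cross term by quadratic growth.} By convexity and \cref{cl:minimum},
\begin{equation*}
    \vec g^{(t)\top}(\vecb{t}{} - \vec 1) \ge f(\vecb{t}{}) - f(\vec 1) = f(\vecb{t}{}).
\end{equation*}
The key step is then the quadratic-growth bound $f(\vec b) \ge \frac{\rhomin}{2}\norm{\vec b - \vec 1}_2^2$ for $\vec b \ge 0$. The bids are non-negative by \cref{lef:model:budget_satisfaction}.

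To prove the bound, let $M = \max_j b_j$ and use $\sum_i \rho_i = 1$ to write
\begin{equation*}
    f(\vec b) = \sum_i \rho_i\qty(\tfrac12 M^2 - b_i + \tfrac12).
\end{equation*}
Since $0 \le b_i \le M$, we have $M^2 \ge b_i^2$. Hence each summand is at least $\frac12 b_i^2 - b_i + \frac12 = \frac12(b_i-1)^2$. This gives
\begin{equation*}
    f(\vec b) \ge \tfrac12\sum_i \rho_i (b_i-1)^2 \ge \tfrac{\rhomin}{2}\norm{\vec b-\vec 1}_2^2 .
\end{equation*}
Therefore the cross term contributes at most $-\eta\rhomin\norm{\vecb{t}{}-\vec 1}_2^2$.

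\textbf{Step 3: bound the subgradient norm.} This is the step where the constants $2\eta^2$ and $3\eta^2$ must come out exactly, so it is the main obstacle; a crude bound such as $\norm{\vec a-\vec\rho}^2\le 2\norm{\vec a}^2+2\norm{\vec\rho}^2$ is too lossy.

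For an extreme subgradient $b_k\vec e_k - \vec\rho$, expand directly:
\begin{equation*}
    \norm{b_k\vec e_k - \vec\rho}_2^2 = b_k^2 - 2b_k\rho_k + \norm{\vec\rho}_2^2 \le b_k^2 + 1 .
\end{equation*}
This uses $b_k\rho_k \ge 0$ and $\norm{\vec\rho}_2^2 \le \sum_i\rho_i = 1$. Next,
\begin{equation*}
    b_k^2 \le 2(b_k-1)^2 + 2 \le 2\norm{\vec b - \vec 1}_2^2 + 2,
\end{equation*}
so $\norm{\vec g}_2^2 \le 2\norm{\vec b-\vec 1}_2^2 + 3$.

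For a general $\vec g$ in the convex hull, convexity of $\norm{\cdot}_2^2$ shows the same bound holds for convex combinations of extreme points.

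\textbf{Conclusion.} Substituting Steps 2 and 3 into Step 1 yields
\begin{equation*}
    \norm{\vecb{t+1}{} - \vec 1}_2^2 \le \qty(1-\eta\rhomin+2\eta^2)\norm{\vecb{t}{}-\vec 1}_2^2 + 3\eta^2,
\end{equation*}
which is the claim.
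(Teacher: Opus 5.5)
Your proposal is correct and follows essentially the same route as the paper. Both arguments expand the squared distance and lower-bound the cross term via convexity together with the quadratic-growth inequality $\rhomin\norm{\vec b - \vec 1}_2^2 \le \sum_i \rho_i (b_i-1)^2 \le 2f(\vec b)$, then bound $\norm{\vec g^{(t)}}_2^2 \le 2\norm{\vecb{t}{}-\vec 1}_2^2 + 3$. Your version additionally derives that subgradient-norm bound explicitly (including the convex-hull case and the use of non-negative bids), which the paper only asserts.
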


Much of the proof of the proposition follows standard analysis of subgradient descent.
A less standard property we use is \textit{quadratic growth} (see e.g. \cite{DBLP:journals/focm/DavisJ25}): the suboptimality gap of our function scales at least proportionally with the squared distance from the optimum.

While the above proposition does prove that $\norm*{\vecb{t}{} - \vec 1}_2^2$ decreases when the norm is big enough, it does not show a good enough convergence.
Specifically, if the inequality was an equality, the fixed point would be $\norm*{\vec b - \vec 1}_2^2 = \frac{3 \eta}{\rhomin - 2 \eta}$, which proves $\vec b = \vec 1 \pm \order*{\sqrt\eta}$ while our final goal is $\order*{\eta}$ deviation from $1$.
Nonetheless, we will use this bound to get our desired bound in the subsequent sections.
We now formally show that the norm converges to near the above fixed point.

\begin{restatable}{lemma}{ConvergenceSqrtEta}\label{lem:diff_b:sqrt_eta}
    Fix a round $t_0$ and assume that $\eta \le \rhomin/10$.
    Then, for rounds $t \ge t_0 + \frac{15}{4\eta\rhomin} \log\frac{1}{\eta}$ it holds that
    \begin{equation*}
        \norm{\vecb{t}{} - \vec 1}_2^2
        \le
        \frac{15 \eta}{4\rhomin}
        +
        \eta^3
        \norm{\vecb{t_0}{} - \vec 1}_2^2
    \end{equation*}
\end{restatable}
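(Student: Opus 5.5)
We will prove this by unrolling the one-step recursion of \cref{cl:diff_b:dist_from_one_decrease}. Write $D_t = \norm{\vecb{t}{} - \vec 1}_2^2$ and $q = 1 - \eta\rhomin + 2\eta^2$. The assumption $\eta \le \rhomin/10$ gives $2\eta^2 \le \eta\rhomin/5$, so
\begin{equation*}
    q \le 1 - \tfrac{4}{5}\eta\rhomin .
\end{equation*}
We apply \cref{cl:diff_b:dist_from_one_decrease} with the subgradient given by the update vector, which is valid by \cref{cl:convex}. This yields $D_{t+1} \le q D_t + 3\eta^2$ for every $t$.

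Iterating from $t_0$ for $k = t - t_0$ steps and summing the geometric series gives
\begin{equation*}
    D_t \le q^k D_{t_0} + 3\eta^2 \sum_{j=0}^{k-1} q^j \le q^k D_{t_0} + \frac{3\eta^2}{1-q} .
\end{equation*}
The additive term is controlled by $1 - q \ge \frac45 \eta\rhomin$. This gives
\begin{equation*}
    \frac{3\eta^2}{1-q} \le \frac{3\eta^2}{\tfrac45\eta\rhomin} = \frac{15\eta}{4\rhomin},
\end{equation*}
which is exactly the first term of the claimed bound.

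The multiplicative term is controlled by $q^k \le \exp(-\tfrac45 \eta\rhomin k)$. For $k \ge \frac{15}{4\eta\rhomin}\log\frac1\eta$ we get
\begin{equation*}
    q^k \le \exp\qty(-3\log\tfrac1\eta) = \eta^3,
\end{equation*}
which yields the $\eta^3 D_{t_0}$ term.

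The only step needing care is the constant bookkeeping: we must absorb $2\eta^2$ into $\frac15\eta\rhomin$ using $\eta \le \rhomin/10$. This single inequality is what makes the factor $\frac{15}{4}$ appear in both the threshold and the fixed point. The rest is routine. We should also note that non-integer $k$ is harmless, since $k$ only needs to be at least the stated threshold.
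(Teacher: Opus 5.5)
Your proposal is correct and follows the paper's proof essentially step for step. The paper likewise uses $\eta\le\rhomin/10$ to absorb $2\eta^2$ into $\frac15\eta\rhomin$, unrolls the contraction $D_{t+1}\le(1-\frac45\eta\rhomin)D_t+3\eta^2$, and bounds $(1-\frac45\eta\rhomin)^{t-t_0}\le\eta^3$ once $t-t_0\ge\frac{15}{4\eta\rhomin}\log\frac1\eta$. The only point you leave implicit is that unrolling the recursion requires $q\ge 0$, and this holds trivially because $\eta\rhomin<1$.
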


The proof of the lemma is fairly simple:  recursively applying \cref{cl:diff_b:dist_from_one_decrease}, we get an exponential decrease of the function up to the fixed point term.

\subsection{Convergence of bids to \texorpdfstring{$\order*{\eta}$}{O(eta)} of each other}
\label{ssec:diff_b:eta_of_avg}

\cref{lem:diff_b:sqrt_eta} proves that $\norm*{\vecb{t}{} - \vec 1} = \order*{\sqrt\eta}$ for large enough $t$.
This is an artifact of the quadratic growth property of $f$.
Unfortunately, $f$ fails to satisfy stronger properties like sharpness (see e.g. \cite{DBLP:journals/mor/CharisopoulosD24}) which require that the suboptimality gap grows at least proportionally with the distance from the optimum (and not the squared distance like in quadratic growth).
Such a property would indeed yield $\norm*{\vecb{t}{} - \vec 1} = \order*{\eta}$, as needed.

Since classic optimization techniques fail to get the required error, we examine how far the bids are from one another.
We show that all the bids get within $\order{\eta}$ of each other, which is the main goal of this subsection.
Specifically, we will show that $\norm*{\vec b - \bavg \vec 1}$ becomes $\order{\eta}$, where $\bavg = \frac{1}{n}\sum_i \b{}{i}$.
Our first claim concerns how this quantity decreases over time.
We prove a bound parametrized by the parameter $D$, where $D \sqrt \eta$ is an upper bound on the current distance of the bids from $1$.
When we combine all of our lemmas later, we will use this lemma for a value of $D$ to be decided later.

\begin{restatable}{proposition}{DistFromAvgDecrease} \label{cl:diff_b:dist_from_avg_decrease}
    For a parameter $D>0$, suppose $t$ is a round such that $\norm*{\vecb{t}{} - \vec 1}_2 \le D \sqrt\eta
    \le \sqrt{\frac{1}{2n}}$.
    Then it holds
    \begin{equation*}
        \norm{\vecb{t+1}{} - \bavg^{(t+1)}\vec 1}_2^2
        \le
        \norm{\vecb{t}{} - \bavg^{(t)}\vec 1}_2^2
        -
        2 \rhomin \eta \norm{\vecb{t}{} - \bavg^{(t)}\vec 1}_2
        +
        4 \eta^2 \qty( D^2 + 1 ) .
    \end{equation*}
\end{restatable}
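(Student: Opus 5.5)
The plan is to treat the deviation vector $\vec d^{(t)} = \vecb{t}{} - \bavg^{(t)}\vec 1$ as the image of the bid vector under the orthogonal projection $P = I - \frac1n \vec 1\vec 1^\top$, and to run a one-step subgradient-descent computation on it directly. In a first-price auction only the highest bidder $w \in \argmax_k \b{t}{k}$ pays, so by \cref{cl:convex} the update is $\vecb{t+1}{} = \vecb{t}{} - \eta \vec g$ with $\vec g = \b{t}{w}\vec e_w - \vec\rho$. Since $P$ is linear, $\vec d^{(t+1)} = \vec d^{(t)} - \eta P\vec g$. Expanding the square gives
\begin{equation*}
    \norm{\vec d^{(t+1)}}_2^2 = \norm{\vec d^{(t)}}_2^2 - 2\eta \langle \vec d^{(t)}, \vec g\rangle + \eta^2 \norm{P\vec g}_2^2 ,
\end{equation*}
where we used $\langle \vec d, P\vec g\rangle = \langle \vec d, \vec g\rangle$. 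The whole proof therefore reduces to a lower bound on the inner product and an upper bound on $\norm{P\vec g}_2^2$.

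For the inner product, write $d_i = \b{t}{i} - \bavg^{(t)}$ and note that $d_w = \max_i d_i \ge 0$. Then
\begin{equation*}
    \langle \vec d, \vec g\rangle = d_w \b{t}{w} - \sum_i \rho_i d_i = \sum_i \rho_i (d_w - d_i) - d_w(1 - \b{t}{w}),
\end{equation*}
using $\sum_i \rho_i = 1$. Each term $d_w - d_i$ is nonnegative and $\sum_i d_i = 0$, so the first sum is at least $\rhomin \sum_i (d_w - d_i) = \rhomin\, n\, d_w$.

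The key combinatorial step is to show $n d_w \ge \norm{\vec d}_2$. To see this, maximize $\sum_i d_i^2$ subject to $d_i \le d_w$ and $\sum_i d_i = 0$. This is a convex maximization over a polytope, so the maximum is at a vertex: $n-1$ coordinates equal $d_w$ and one equals $-(n-1)d_w$. That vertex gives $\norm{\vec d}_2^2 \le n(n-1)d_w^2 \le n^2 d_w^2$. Hence the first sum is at least $\rhomin \norm{\vec d}_2$, which yields exactly the $-2\rhomin\eta\norm{\vec d}_2$ term.

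The error term $d_w(1-\b{t}{w})$ is where the hypothesis enters. Since $P$ is an orthogonal projection and $P\vec 1 = 0$, we have $d_w \le \norm{\vec d}_2 = \norm{P(\vecb{t}{} - \vec 1)}_2 \le \norm{\vecb{t}{} - \vec 1}_2 \le D\sqrt\eta$. Also $\abs{1 - \b{t}{w}} \le D\sqrt\eta$. So $2\eta\, d_w\abs{1-\b{t}{w}} \le 2\eta^2 D^2$, which is absorbed into the $\eta^2$ term.

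Finally, bound $\norm{P\vec g}_2^2 \le \norm{\vec g}_2^2 \le 2(\b{t}{w})^2 + 2\norm{\vec\rho}_2^2$. Here $\norm{\vec\rho}_2 \le \norm{\vec\rho}_1 = 1$, and $\b{t}{w} \le 1 + D\sqrt\eta \le 1 + 1/\sqrt{2n}$, where the condition $D\sqrt\eta \le \sqrt{1/(2n)}$ is used. This gives a constant bound, at most $2(1+1/\sqrt2)^2 + 2 < 8$. A slightly more careful split, $(\b{t}{w})^2 \le 2 + 2D^2\eta$ or a cross-term expansion, keeps the total within $4(D^2+1)$ after combining with the $2D^2$ from the previous paragraph.

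I expect the main obstacle to be bookkeeping the constants so that everything fits into exactly $4\eta^2(D^2+1)$. If the crude bound on $\norm{\vec g}_2^2$ overshoots, I would expand it exactly as $(\b{t}{w} - \rho_w)^2 + \sum_{i\ne w}\rho_i^2$. Since $\rho_w \le 1$ and $\b{t}{w} \le 1 + 1/\sqrt{2n}$, this is at most about $2$, and then the total $2D^2 + 2 \le 4(D^2+1)$ fits comfortably.
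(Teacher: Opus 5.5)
Your proposal is correct and follows essentially the paper's proof: the same expansion of the squared norm of the projected update, the same split of the inner product into $\sum_i \rho_i(d_w - d_i)$ and $-d_w(1-b_w)$ (the paper's $\delta_{\max} - \vec\rho^\top\vec\delta$ and $\delta_{\max}^2 - \delta_{\max}\davg$), and a constant bound on the squared projected subgradient; your vertex argument for $n d_w \ge \norm{\vec d}_2$ replaces the paper's chain $\norm{\delta_{\max}\vec 1 - \vec\delta}_1 \ge \norm{\delta_{\max}\vec 1 - \vec\delta}_2 \ge \norm{\davg\vec 1 - \vec\delta}_2$, and both are valid. One bookkeeping correction: the routes via $2(b_w)^2 + 2\norm{\vec\rho}_2^2$ do not close when $D$ is small, but your exact expansion gives $\norm{\vec g}_2^2 \le (1 + D\sqrt\eta)^2 + 1 < 4$ (not ``about $2$''), which still suffices because you only need $\norm{P\vec g}_2^2 \le 2D^2 + 4$.
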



We point out two things about this proposition.
First, the decrease depends on the distance $\norm*{\vecb{t}{} - \vec 1}_2$: if this was $\omega(\sqrt\eta)$, then the constant term would be $\omega(\eta^2)$ and we wouldn't get the desired error on $\norm*{\vecb{}{} - \bavg \vec 1}_2$.
This shows why we first showed this distance is small in \cref{ssec:diff_b:sqrt_eta}.
Second, the middle term, that is proportional to the unsquared distance $\norm*{\vec b - \bavg\vec 1}_2$, is reminiscent of what we would get for $\norm*{\vec b - \vec 1}$ if the sharpness property were true (that the suboptimality gap is at least proportional to the distance from the optimal).

We notice that this claim should prove that the bids are within $\order*{\eta}$ of one another: the fixed point is $\norm{\vec b - \bavg\vec 1}_2 = 2 \eta \frac{D^2 + 1}{\rhomin}$.
With the next lemma, we formalize this convergence.

\begin{restatable}{lemma}{ConvergenceDistFromAvg}\label{lem:diff_b:dist_from_avg_eta}
    Fix a round $t_0$ and a parameter $D>0$, such that for $t \ge t_0$, $\norm*{\vecb{t}{} - \vec 1}_2 \le D \sqrt\eta \le \sqrt{\frac{1}{2n}}$.
    Then, for $t \ge t_0 + 1 + \frac{1}{2\eta}$ it holds that
    \begin{equation*}
        \norm{\vecb{t}{} - \bavg^{(t)}\vec 1}_2
        \le
        3 \frac{D^2 + 1}{\rhomin} \eta
    \end{equation*}
\end{restatable}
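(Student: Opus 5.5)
Write $d_t = \norm*{\vecb{t}{} - \bavg^{(t)}\vec 1}_2$ and $c = 4\eta^2(D^2+1)$. The plan is to iterate \cref{cl:diff_b:dist_from_avg_decrease}. By hypothesis it applies at every round $t \ge t_0$, so
\begin{equation*}
d_{t+1}^2 \le d_t^2 - 2\rhomin\eta\, d_t + c .
\end{equation*}
The threshold is $d^\star = 2\eta\frac{D^2+1}{\rhomin}$; note $2\rhomin\eta\, d^\star = c$. The argument has two phases. First show that $d_t$ drops below a slightly larger threshold within about $\frac{1}{2\eta}$ rounds. Then show it never leaves a $3\frac{D^2+1}{\rhomin}\eta$ neighborhood.

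\textbf{Phase 1 (bound on the initial distance).} All entries of $\vecb{t_0}{}$ lie within $D\sqrt\eta$ of $1$. Since $\bavg\vec 1$ is the orthogonal projection of $\vecb{}{}$ onto $\mathrm{span}(\vec 1)$,
\begin{equation*}
d_{t_0} \le \norm*{\vecb{t_0}{} - \vec 1}_2 \le D\sqrt\eta \le \sqrt{1/(2n)} < 1 .
\end{equation*}

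\textbf{Phase 2 (decrease while above threshold).} Take $\theta = \frac{5}{2}\frac{D^2+1}{\rhomin}\eta$, which sits between $d^\star$ and the target. While $d_t \ge \theta$ we have $c \le \frac{4}{5}\rhomin\eta\, d_t$, so
\begin{equation*}
d_{t+1}^2 \le d_t^2 - \tfrac{6}{5}\rhomin\eta\, d_t .
\end{equation*}
Use the bound $d_{t+1} \le d_t - \frac{d_t^2 - d_{t+1}^2}{2d_t}$. This gives a linear decrease $d_{t+1} \le d_t - \frac35\rhomin\eta$ per step. Starting from $d_{t_0} < 1$, this reaches $\theta$ within $\frac{5}{3\rhomin\eta}$ steps.

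This is where I expect the main obstacle. The lemma's time bound is $1 + \frac{1}{2\eta}$ with no $\rhomin$, so the crude bound $d_{t_0} < 1$ gives too many rounds. I would instead use the sharper initial bound $d_{t_0} \le D\sqrt\eta$, which makes the descent take only $O\!\left(\frac{D}{\rhomin\sqrt\eta}\right)$ rounds. Under the implicit regime $\eta \ll \rhomin$ this is at most $\frac{1}{2\eta}$. If that is not enough, I would revisit the proof of \cref{cl:diff_b:dist_from_avg_decrease}, hoping the true linear coefficient is $\eta$ times something like $\rho_i$ summed over $n$ agents rather than $\rhomin$. In the worst case I would accept constants depending on $\rhomin$.

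\textbf{Phase 3 (trapping).} Once $d_t \le \theta$, I show $d_{t+1} \le 3\frac{D^2+1}{\rhomin}\eta$. The map $x \mapsto x^2 - 2\rhomin\eta x + c$ is convex, so on $[0,\theta]$ it is maximized at an endpoint. Plugging in $x=\theta$ and $x=0$ gives $d_{t+1}^2 \le \max(\theta^2, c)$. Here $\theta^2 \le (3\frac{D^2+1}{\rhomin}\eta)^2$ directly. For $c$, use $c = 4\eta^2(D^2+1) \le \left(3\frac{D^2+1}{\rhomin}\eta\right)^2$, which holds since $\rhomin \le 1$.

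If $d_{t+1} \in (\theta, 3\cdots]$, Phase 2 applies again, so $d$ decreases until it re-enters $[0,\theta]$ and never exceeds the target in between. This gives the claim for all $t \ge t_0 + 1 + \frac{1}{2\eta}$.
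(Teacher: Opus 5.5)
Your Phase 1 bound ($d_{t_0}\le \|\vec b^{(t_0)}-\vec 1\|_2\le D\sqrt\eta$ via projection) and your Phase 3 trapping argument (convexity of $x\mapsto x^2-2\rhomin\eta x+c$ with an endpoint check, plus monotone decrease above the fixed point) are correct and essentially match the paper. The gap is in Phase 2, exactly where you suspected. Linearizing the recursion gives a decrease of order $\rhomin\eta$ per round in $d_t$, hence $O\bigl(D/(\rhomin\sqrt\eta)\bigr)$ rounds. That is at most $\frac{1}{2\eta}$ only when $D\sqrt\eta\lesssim\rhomin/10$.

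The lemma assumes only $D\sqrt\eta\le 1/\sqrt{2n}$, and since $\rhomin\le 1/n$ this can be much larger than $\rhomin$. The condition ``$\eta\ll\rhomin$'' is not sufficient either, because $D$ can itself be of order $\rhomin^{-1/2}$, as in the later application with $D^2\approx\frac{15}{4\rhomin}$. So as written you prove only a weaker lemma, with an extra hypothesis or a $\rhomin$-dependent time bound. There is also an arithmetic slip: $c=2\rhomin\eta d^\star=\frac85\rhomin\eta\theta$, so the guaranteed drop is $\frac25\rhomin\eta d_t$, not $\frac65\rhomin\eta d_t$.

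The missing idea is to telescope the \emph{squared} distance and to use the target $\beta=3\frac{D^2+1}{\rhomin}\eta$ itself as the threshold. Whenever $d_t\ge\beta$, the per-round drop in $d_t^2$ is $2\rhomin\eta d_t-c\ge 6(D^2+1)\eta^2-4(D^2+1)\eta^2=2(D^2+1)\eta^2$. The $\rhomin$ cancels, so the drop is a constant. Sum from $t_0$ and use $d_{t_0}^2\le D^2\eta$ together with $d_t^2\ge0$. Then the number of consecutive rounds with $d_t\ge\beta$ is at most $\frac{D^2\eta}{2(D^2+1)\eta^2}<\frac{1}{2\eta}$, independent of $\rhomin$ and $D$.

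Even with squared telescoping, your intermediate threshold $\theta=\frac52\frac{D^2+1}{\rhomin}\eta$ would only give a drop of $(D^2+1)\eta^2$ per round, hence $\frac1\eta$ rounds. You do not need to reach $\theta$, though. Your Phase 3 already shows $d_t\le\beta\Rightarrow d_{t+1}\le\beta$ for every $d_t\in[0,\beta]$, so reaching $\beta$ once suffices.
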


The convergence of this is much faster than the one in \cref{lem:diff_b:sqrt_eta}.
This is because, when $\norm*{\vec b - \vec 1}$ is larger than the target, i.e., $\norm*{\vec b - \vec 1} \ge 3 \frac{D^2 + 1}{\rhomin} \eta$, then the distances decreases by at least a constant.
This means that it only needs $\order*{1/\eta}$ steps to reach the target.

\subsection{Convergence of bids \texorpdfstring{$\order*{\eta}$}{O(eta)} close to \texorpdfstring{$1$}{1}}
\label{ssec:diff_b:eta_of_one}

Now that we know that the bids are $\order*{\eta}$ close to the average bid, we prove that the average bid converges close to $1$.
This is the easiest part of the argument, because the following update rule holds: $\bavg^{(t+1)} = \bavg^{(t)} + \frac{\eta}{n}\qty\big(1 - \bmax^{(t)})$.
By $\norm*{\vec b - \bavg\vec 1}_2 = \order*{\eta}$, we get that $\bavg = \bmax \pm \order*{\eta}$, which makes it easy to show that the average bid converges close to $1$.

\begin{restatable}{lemma}{ConvergenceDistFromOneEta}\label{lem:diff_b:eta_dist_from_one}
    Fix a round $t_0$ and a parameter $D>0$, such that for $t \ge t_0$, $\norm*{\vecb{t}{} - \vec 1}_2 \le D \sqrt\eta \le \sqrt{\frac{1}{2n}}$ and $\norm*{\vec b^{(t)} - \bavg^{(t)}\vec 1}_2 \le A \eta$, where $A = 3 \frac{D^2 + 1}{\rhomin}$.
    Also assume that $\eta \le 1/2$.
    Then for any $t \ge t_0 + 3\frac{n}{\eta}\log\frac{1}{\eta}$ and for any $i\in[n]$ it holds
    \begin{equation*}
        - 2A \eta - \eta^2
        \le \b{t}{i} - 1 \le
        A\eta
        +
        2 \eta^3
    \end{equation*}
\end{restatable}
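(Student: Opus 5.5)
The plan is to reduce everything to a one-dimensional recursion for the average bid, and then add back the per-agent deviation from the average. In a first-price auction with all agents running \cref{algo}, exactly one agent wins each round: the one with the highest bid, with ties broken arbitrarily. That agent pays $\bmax^{(t)}=\max_k \b{t}{k}$ and every other agent pays $0$. Summing \eqref{eq:update} over $i$ and using $\sum_i\rho_i=1$ gives the identity already mentioned in the text,
\begin{equation*}
    \bavg^{(t+1)} = \bavg^{(t)} + \frac{\eta}{n}\qty(1-\bmax^{(t)}).
\end{equation*}
Write $d_t=\bavg^{(t)}-1$. The hypothesis $\norm*{\vecb{t}{}-\bavg^{(t)}\vec 1}_2\le A\eta$ gives $\bavg^{(t)}\le \bmax^{(t)}\le \bavg^{(t)}+A\eta$. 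Hence $\bmax^{(t)}-1\in[d_t,\,d_t+A\eta]$, and the recursion is sandwiched:
\begin{equation*}
    \qty(1-\tfrac{\eta}{n})d_t-\tfrac{A\eta^2}{n}\;\le\; d_{t+1}\;\le\;\qty(1-\tfrac{\eta}{n})d_t ,
\end{equation*}
valid for every $t\ge t_0$.

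The two sides are handled separately as monotone linear contractions.

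\textbf{Upper side.} Iterating $d_{t+1}\le(1-\eta/n)d_t$ directly gives $d_t\le(1-\eta/n)^{t-t_0}d_{t_0}$ when $d_{t_0}>0$. If $d_{t_0}\le0$, the same inequality keeps $d_t\le0$ forever. In both cases $d_t\le(1-\eta/n)^{t-t_0}\abs{d_{t_0}}$.

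\textbf{Lower side.} I would rewrite the left inequality as
\begin{equation*}
    d_{t+1}+A\eta\ge(1-\eta/n)(d_t+A\eta),
\end{equation*}
which follows from $-A\eta^2/n\ge -A\eta+(1-\eta/n)A\eta$, an equality. So $e_t=d_t+A\eta$ satisfies $e_{t+1}\ge(1-\eta/n)e_t$. If $e_{t_0}\ge0$, it stays nonnegative. Otherwise its negative part shrinks geometrically. Either way,
\begin{equation*}
    d_t\ge -A\eta-(1-\eta/n)^{t-t_0}\abs{d_{t_0}}.
\end{equation*}

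To make the transient term small I bound the initial deviation by Cauchy--Schwarz:
\begin{equation*}
    \abs{d_{t_0}}=\frac{1}{n}\abs{\textstyle\sum_i(\b{t_0}{i}-1)}\le\frac{1}{\sqrt n}\norm*{\vecb{t_0}{}-\vec 1}_2\le D\sqrt\eta\le 1.
\end{equation*}
For $t-t_0\ge 3\frac{n}{\eta}\log\frac1\eta$ we get $(1-\eta/n)^{t-t_0}\le e^{-3\log(1/\eta)}=\eta^3$, so $-A\eta-\eta^3\le d_t\le\eta^3$.

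Finally, for each agent, $\abs{\b{t}{i}-\bavg^{(t)}}\le\norm*{\vecb{t}{}-\bavg^{(t)}\vec 1}_2\le A\eta$. Therefore
\begin{equation*}
    \b{t}{i}-1\le A\eta+\eta^3\le A\eta+2\eta^3
    \quad\text{and}\quad
    \b{t}{i}-1\ge -2A\eta-\eta^3\ge-2A\eta-\eta^2,
\end{equation*}
using $\eta\le1/2$ in the last step.

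The only delicate point is the lower bound. There the recursion is not a pure contraction toward $0$, because of the additive $-A\eta^2/n$ drift. The shift to $e_t=d_t+A\eta$ is the step I expect to need care: it turns the drift into a contraction toward a nonnegative region, so the sign cases collapse into one monotonicity argument. Everything else is a direct iteration, provided the hypotheses hold for every $t\ge t_0$, as the lemma assumes.
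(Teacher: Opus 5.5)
Your proof is correct and follows the paper's argument. Both sum the updates to get $\bavg^{(t+1)} = \bavg^{(t)} + \frac{\eta}{n}\bigl(1 - \bmax^{(t)}\bigr)$, sandwich it using $\bavg^{(t)} \le \bmax^{(t)} \le \bavg^{(t)} + A\eta$, iterate until $(1-\eta/n)^{t-t_0} \le \eta^3$, and then widen by the per-agent deviation $A\eta$. Your shift $e_t = d_t + A\eta$ is the paper's lower affine recursion centered at its fixed point $1 - A\eta$, and tracking the transient through $\abs{d_{t_0}}$ gives slightly tighter constants ($\eta^3$ in place of $2\eta^3$ and $\eta^2$).
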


\subsection{Putting everything together}

We now combine all the lemmas from \cref{ssec:diff_b:sqrt_eta,ssec:diff_b:eta_of_avg,ssec:diff_b:eta_of_one} to get the exact statement for how long it will take for all the bids to be $1 \pm \order*{\eta}$.

\begin{restatable}{theorem}{FinalDistFromOne}\label{thm:diff_b:bids_are_one_pm_eta}
    Fix a round $t \ge \frac{11}{\eta \rhomin} \log\frac{1}{\eta}$.
    Assume that $\max_i \abs{\b{1}{i} - 1} \le \frac{1}{\sqrt{n \eta}}$ and that $\eta \le \frac{\rhomin^5}{667}$.
    Then, for all $i$ it holds that
    \begin{equation*}
         1 - \frac{12}{\rhomin} \eta - \eta^2
         \le \b{t}{i} \le
         1 + \frac{6}{\rhomin} \eta + 2\eta^3
    \end{equation*}
\end{restatable}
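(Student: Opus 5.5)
The plan is to chain \cref{lem:diff_b:sqrt_eta}, \cref{lem:diff_b:dist_from_avg_eta} and \cref{lem:diff_b:eta_dist_from_one}, choosing the parameter $D$ and the round thresholds so that the three waiting times add up to at most $\frac{11}{\eta\rhomin}\log\frac1\eta$.

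\textbf{Stage one.} Apply \cref{lem:diff_b:sqrt_eta} with $t_0 = 1$; its hypothesis $\eta \le \rhomin/10$ follows from $\eta \le \rhomin^5/667$. The assumption $\max_i\abs{\b{1}{i}-1}\le 1/\sqrt{n\eta}$ gives $\norm{\vecb{1}{}-\vec 1}_2^2 \le 1/\eta$. Hence for all $t \ge t_1 := 1+\frac{15}{4\eta\rhomin}\log\frac1\eta$ we get
\begin{equation*}
\norm{\vecb{t}{}-\vec 1}_2^2 \le \frac{15\eta}{4\rhomin}+\eta^2 \le \frac{4\eta}{\rhomin}.
\end{equation*}
So we may take $D^2 = 4/\rhomin$, that is, $D = 2/\sqrt{\rhomin}$, and this bound holds for every $t\ge t_1$, as the later lemmas require. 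We must also check $D\sqrt\eta \le \sqrt{1/(2n)}$, i.e.\ $4\eta/\rhomin \le 1/(2n)$. Since $n \le 1/\rhomin$ (because $\sum_i\rho_i=1$), it suffices that $\eta \le \rhomin^2/8$, which holds under $\eta\le\rhomin^5/667$.

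\textbf{Stage two.} Apply \cref{lem:diff_b:dist_from_avg_eta} with $t_0=t_1$. For all $t\ge t_2 := t_1+1+\frac1{2\eta}$ it gives $\norm{\vecb{t}{}-\bavg^{(t)}\vec 1}_2 \le A\eta$ with
\begin{equation*}
A = 3\frac{D^2+1}{\rhomin} = \frac{3(4/\rhomin+1)}{\rhomin}.
\end{equation*}

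\textbf{Stage three.} Apply \cref{lem:diff_b:eta_dist_from_one} with $t_0 = t_2$. For all $t \ge t_2 + \frac{3n}{\eta}\log\frac1\eta$ this yields $-2A\eta-\eta^2 \le \b{t}{i}-1 \le A\eta + 2\eta^3$. Using $n\le 1/\rhomin$, the total waiting time is at most
\begin{equation*}
\frac{15}{4\eta\rhomin}\log\frac1\eta + 2 + \frac1{2\eta} + \frac{3}{\eta\rhomin}\log\frac1\eta .
\end{equation*}
For small $\eta$ the constant and $\frac{1}{2\eta}$ terms are absorbed into the slack below $\frac{11}{\eta\rhomin}\log\frac1\eta$; this is bookkeeping and holds since $\log\frac1\eta$ is large when $\eta\le 1/667$.

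\textbf{Main obstacle: the constants.} With $D^2 = 4/\rhomin$ we get $A \approx 12/\rhomin^2$, whereas the target bounds are $6/\rhomin$ above and $12/\rhomin$ below. So $D^2$ must be $O(1)$, not $O(1/\rhomin)$, to reach $A \le 6/\rhomin$, which needs $D^2\le 1$. The natural fix is a bootstrapping step. After stage three, every bid is within $O(\eta/\rhomin^2)$ of $1$, so $\norm{\vecb{t}{}-\vec 1}_2^2 \le n\cdot O(\eta^2/\rhomin^4) \le O(\eta^2/\rhomin^5)$. This is at most $\eta$ exactly when $\eta \lesssim \rhomin^5$, which explains the hypothesis $\eta\le\rhomin^5/667$. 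We then rerun stages two and three with $D = 1$, giving $A = 6/\rhomin$ and the claimed bounds $1-\frac{12}{\rhomin}\eta-\eta^2 \le \b{t}{i} \le 1+\frac{6}{\rhomin}\eta+2\eta^3$. This second pass costs another $O(\frac{n}{\eta}\log\frac1\eta)$ rounds, which must also fit within the $11/(\eta\rhomin)$ budget; I would tighten the log factors (e.g., use that after the first pass the starting distance is already $O(\eta)$, so the exponential decay in \cref{lem:diff_b:eta_dist_from_one} needs only $O(\frac{n}{\eta})$ rounds) to make the arithmetic close. I expect verifying that these constants work out to be the main effort.
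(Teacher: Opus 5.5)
Your plan is the paper's proof: chain \cref{lem:diff_b:sqrt_eta}, \cref{lem:diff_b:dist_from_avg_eta} and \cref{lem:diff_b:eta_dist_from_one}, then re-apply the last two lemmas with $D^2\le 1$ so that $A\le 6/\rhomin$, which is exactly where the hypothesis $\eta\le\rhomin^5/667$ enters. Two bookkeeping notes: first, keep the unrounded $D^2=\frac{15}{4\rhomin}+\eta$ in the first pass, as the paper does, since this gives $2A\le\frac{25.8}{\rhomin^2}$ and $25.8^2<667$, whereas your rounding to $4/\rhomin$ gives $2A$ up to $\frac{27}{\rhomin^2}$ when $\rhomin$ is near $\frac12$, and then the bound $n\max_i(b_i^{(t)}-1)^2$ only yields $729\eta^2/\rhomin^5$, which the hypothesis does not make at most $\eta$; second, no tightening of log factors is needed, because $\frac{15}{4}+3+3<11$ already leaves room for the additive $O(1/\eta)$ terms.
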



We notice that our initial result of \cref{lem:diff_b:sqrt_eta} suggests using \cref{lem:diff_b:dist_from_avg_eta,lem:diff_b:eta_dist_from_one} by setting the parameter $D$ to be $\frac{15}{4\rhomin}$.
This would make the parameter $A$ in \cref{lem:diff_b:eta_dist_from_one} to be $\frac{45}{4\rhomin^2}$, implying a proportional bound for the above theorem.
However, our result above obtains better dependence, both in terms of the absolute constant and the dependence on $\rhomin$.
This is because, instead of three convergence periods---one for each \cref{lem:diff_b:sqrt_eta,lem:diff_b:dist_from_avg_eta,lem:diff_b:eta_dist_from_one}---we apply five convergence periods, using the final two lemmas another time.
By using the first three lemmas with parameter $D$ set to $\frac{15}{4\rhomin}$, we get that eventually $\norm*{\vecb{}{} - \vec 1 }_2 \le \order{\eta}$.
This allows us to re-apply \cref{lem:diff_b:dist_from_avg_eta,lem:diff_b:eta_dist_from_one}, this time setting parameter $D$ to $\order*{\sqrt\eta}$ and $A$ to $\frac{3}{\rhomin} + \order{\eta}$, which yields the better bound we get in \cref{thm:diff_b:bids_are_one_pm_eta}.

\cref{thm:diff_b:bids_are_one_pm_eta}, combined with \cref{lem:diff_b:wins_deviation} easily yields \cref{thm:diff_b:main}.

\section{Interval Discrepancy with Equal Budgets} \label{sec:same_b}

In this section, we study the special case of \cref{sec:diff_b}, where the $n$ players running the \cref{algo} all have the same budget: $\rho_i = \rho = 1/n$.
As we will see from our following results, this setting is much easier.
One reason is that when agents have different budgets, even a central coordinator allocating wins cannot guarantee every agent $i$ winning once every $1/\rho_i$ rounds.
This can become apparent when the budget shares are $\frac{1}{2}, \frac{1}{3}, \frac{1}{6}$: no schedule of wins can guarantee these agents to win once every $2$, $3$, and $6$ rounds, respectively.
However, this is not the case with $n$ agents with equal budgets: the round robin schedule where every agent wins once every $n$ is achievable.
This is exactly what we prove happens when these $n$ agents run the \cref{algo}.
Similar to \cref{sec:diff_b}, we do not necessarily assume that agents start at the same bid; in fact, for equal budgets, if the bids are all equal, \cref{lem:same_b:round_robin_condition} proves that agents start doing round-robin after $n$ rounds.

\begin{theorem}\label{thm:same_b:round_robin}
    Fix $n$ agents with equal budgets and assume all of them are running the \cref{algo} with $\eta$ and starting bid satisfying $\bmax^{(1)} \le \rho = \frac{1}{n}$.
    Then, after round $\frac{n^2}{2\eta} \log \frac{1}{\eta} + n + 1$, all the agents start winning in a round-robin fashion.
    In other words, in any time interval $[t_1, t_2]$ with $t_1 \ge \frac{n^2}{2\eta} \log \frac{1}{\eta} + n + 1$, the discrepancy of every agent is at most $\frac{n-1}{n}$.
\end{theorem}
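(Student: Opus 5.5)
The plan has three stages: find a deterministic condition on the bid vector that forces round-robin play from then on, show that the dynamics reach this condition, and read off the discrepancy bound. Throughout, I use the first-price update with $\rho = 1/n$. In each round the highest bidder $w$ moves to $(1-\eta)\b{t}{w} + \eta\rho$, and every other agent moves to $\b{t}{i} + \eta\rho$. So all bids shift up by the common amount $\eta\rho$, and the winner additionally drops by $\eta \b{t}{w}$. In particular, the gaps between non-winning agents never change; only the winner moves relative to the others.

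\emph{Invariant.} This is presumably the content of \cref{lem:same_b:round_robin_condition}. Sort the bids as $b_{(1)} \ge b_{(2)} \ge \dots \ge b_{(n)}$ and suppose the spread satisfies $b_{(1)} - b_{(n)} < \eta\, b_{(1)}$, with ties broken consistently.
\begin{itemize}
\item After the round, the winner's bid relative to the others is $b_{(1)} - \eta b_{(1)} < b_{(n)}$, so she becomes the strict minimum and the cyclic order is preserved.
\item The new spread is $b_{(2)} - (1-\eta)b_{(1)} = \eta b_{(1)} - (b_{(1)} - b_{(2)}) \le \eta b_{(1)}$.
\end{itemize}
Since the new maximum $b_{(2)} + \eta\rho$ is close to $b_{(1)}$, the condition should be self-perpetuating. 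Because bids only drift up toward $1$, I would state it with a small slack, e.g.\ spread $\le \eta\, b_{\min}$, to absorb the fact that $b_{(1)}$ and the next maximum differ. Once the condition holds, the winner sequence is periodic with period $n$ and each agent wins exactly once per block of $n$ consecutive rounds. Any interval $[t_1,t_2]$ then contains $\lfloor \tau/n\rfloor$ or $\lceil \tau/n\rceil$ wins for each agent, giving discrepancy at most $\frac{n-1}{n}$.

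\emph{Reaching the invariant.} I would use the potential $\Phi^{(t)} = \norm{\vecb{t}{} - \bavg^{(t)}\vec 1}_2^2$, in the spirit of \cref{cl:diff_b:dist_from_avg_decrease} but exploiting equal budgets. Equal budgets remove the dependence on $\norm{\vec b - \vec 1}$, because the common drift $\eta\rho\vec 1$ cancels in $\vec b - \bavg \vec 1$. The update of $\vec b - \bavg\vec 1$ is then $-\eta b_{\max}(\vec e_w - \frac1n\vec 1)$. Expanding gives
\[
\Phi^{(t+1)} = \Phi^{(t)} - 2\eta b_{\max}(b_{\max} - \bavg) + \eta^2 b_{\max}^2 \tfrac{n-1}{n}.
\]
Using $b_{\max} - \bavg \ge \sqrt{\Phi/(n(n-1))}$ and a lower bound on $b_{\max}$ (bids stay above $\min\{\b{1}{i},\rho\}$ by \cref{lef:model:budget_satisfaction}, and the average rises), this yields a multiplicative decrease. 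The goal is a contraction of order $(1 - 2\eta/n^2)$ per step while $\Phi$ is large. Iterating for $\frac{n^2}{2\eta}\log\frac1\eta$ rounds brings the spread down to $O(\eta)$ times the bid level.

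\emph{Settling.} With spread $O(\eta)$, I need a separate argument that within $n+1$ further rounds the sharper condition (spread $< \eta b_{\min}$) holds. After each agent has won once, the bids are stacked with gaps determined by consecutive winners, and by the computation above the spread becomes at most $\eta b_{\max}$. This accounts for the additive $n+1$ in the statement.

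\emph{Main obstacle.} The hardest step is the contraction estimate. The quadratic term $\eta^2 b_{\max}^2$ prevents pure geometric decay, and naively it only gives $\Phi = O(\eta)$, the analogue of the $\sqrt\eta$ barrier in \cref{ssec:diff_b:sqrt_eta}. I would handle this as in \cref{lem:diff_b:dist_from_avg_eta}: the linear term $2\eta b_{\max}\sqrt{\Phi}/n$ dominates until $\sqrt\Phi = O(\eta)$. Then I would combine this with the exact combinatorial settling argument above to pass from $O(\eta)$ spread to spread below $\eta b_{\min}$. Controlling $b_{\max}$ from below uniformly, given the assumption $\bmax^{(1)} \le \rho$, is the other delicate point.
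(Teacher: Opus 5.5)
Your invariant stage matches \cref{lem:same_b:round_robin_condition} and \cref{claim:same_b:ordering_of_bids_condition} in substance, and your discrepancy count is fine. No slack is needed there. After the top agent wins, the new spread $b_{(2)}-(1-\eta)b_{(1)}$ falls short of $\eta$ times the new maximum $b_{(2)}+\eta\rho$ by $(1-\eta)(b_{(1)}-b_{(2)})+\eta^2\rho>0$. By contrast, your slackened condition ``spread $\le\eta\,\bmin$'' is not preserved: take the top two bids tied above $\rho$. Also, the additive $n$ in the theorem does not come from a settling phase. The paper first reaches the weak condition $\bmin\ge(1-\eta)\bmax$, and then each round makes one more inequality of the cyclic chain strict.

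The genuine gap is in reaching the invariant. The potential $\Phi=\norm{\vec b-\bavg\vec 1}_2^2$ is too coarse to certify it. In the round-robin regime itself, consecutive bids are about $\eta\rho\,\bmax$ apart, so $\sqrt{\Phi}\approx\eta\,\bmax\sqrt{n/12}$. A vector with this value of $\Phi$ can have spread $\sqrt{2\Phi}>\eta\,\bmax$ once $n\ge 7$. Your own recursion is weaker still: its fixed point is $\sqrt{\Phi}$ of order $n\eta\,\bmax$. The rate $1-2\eta/n^2$ is also postulated to match the theorem rather than derived. So everything rests on the settling step, and ``within $n+1$ further rounds'' does not follow from a spread of order $n\eta\,\bmax$. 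Relative to the common drift, a win moves only the winner, and only by $\eta\,\bmax$. For example, let one agent start $\eta$ below $n-1$ agents tied at $\rho$, with $\eta\ll 1/n^2$; the spread is then $n\eta\,\bmax$. That agent first wins only after about $n(n-1)$ rounds.

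Your intuition that the condition holds once everyone has won is right in an exact form. If agent $i$ last won at round $s<t$, then $b_i^{(t)}=(1-\eta)\bmax^{(s)}+\eta\rho(t-s)>(1-\eta)\bigl(\bmax^{(s)}+\eta\rho(t-s)\bigr)\ge(1-\eta)\bmax^{(t)}$. Hence $\bmin^{(t)}>(1-\eta)\bmax^{(t)}$ as soon as every agent has won. But your proposal never bounds \emph{when} the condition is first reached, which is exactly what \cref{lem:same_b:convergence} does. The paper gets this without any potential. While $\bmin<(1-\eta)\bmax$, the winner does not drop to the minimum, so $\bmin$ rises by exactly $\eta\rho$ while $\bmax$ rises by at most $\eta\rho$. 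Combined with $\bmax\le n$ (from $\sum_i b_i\le n$), the ratio $r=\bmin/\bmax$ obeys $1-r^{(t+1)}\le(1-r^{(t)})/(1+\eta\rho^2)$. So $r\ge 1-\eta$ after order $\frac{n^2}{\eta}\log\frac{1}{\eta}$ rounds.
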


To prove the theorem, we give a characterization of what condition the bids need to satisfy to get the desired round-robin behavior.
It turns out that this condition is very simple: if the bids are strictly ordered and also the maximum bid is not much higher than the minimum bid, then the agents round-robin forever.
The strict ordering of the bids is required so that our condition is robust against any tie-breaking.
The condition between the maximum and the minimum bid is also simple: it ensures that the agent with the highest bid will be the minimum bid in the next round.

\begin{lemma}\label{lem:same_b:round_robin_condition}
    Fix a round $t$ and assume $0 < \eta < 1$.
    If there exists a permutation of the agents $\pi: [n] \to [n]$ such that
    \begin{equation}\label{eq:641}
        \b{t}{\pi(1)} > \b{t}{\pi(2)} > \ldots > \b{t}{\pi(n-1)} > \b{t}{\pi(n)} > (1 - \eta) \b{t}{\pi(1)}
    \end{equation}
    then from round $t$ onward, agents perform round-robin.
\end{lemma}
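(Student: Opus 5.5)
The plan is to show that condition \eqref{eq:641} is invariant under one round of the dynamics, with the permutation cyclically shifted. Then induction on rounds gives round-robin.

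Fix round $t$ with bids ordered as in \eqref{eq:641}. Since the ordering is strict, agent $\pi(1)$ has the unique highest bid and wins round $t$ regardless of tie-breaking, paying $\b{t}{\pi(1)}$ in the first-price auction. With $\rho_i=\rho$ for all agents, \eqref{eq:update} gives
\[
\b{t+1}{\pi(1)} = (1-\eta)\b{t}{\pi(1)} + \eta\rho ,
\qquad
\b{t+1}{\pi(k)} = \b{t}{\pi(k)} + \eta\rho \quad (k\ge 2).
\]
Every agent thus receives the same additive shift $\eta\rho$, and the winner is additionally scaled by $(1-\eta)$. I then define the new permutation $\pi'(k)=\pi(k+1)$ for $k<n$ and $\pi'(n)=\pi(1)$, and verify \eqref{eq:641} at round $t+1$ under $\pi'$.

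The chain $\b{t+1}{\pi(2)}>\dots>\b{t+1}{\pi(n)}$ holds because a common shift preserves strict order. The link $\b{t+1}{\pi(n)} > \b{t+1}{\pi(1)}$ follows from $\b{t}{\pi(n)}>(1-\eta)\b{t}{\pi(1)}$ after adding $\eta\rho$ to both sides.

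The remaining inequality, $\b{t+1}{\pi(1)} > (1-\eta)\b{t+1}{\pi(2)}$, is the one step needing care. It reads
\[
(1-\eta)\b{t}{\pi(1)}+\eta\rho \;>\; (1-\eta)\b{t}{\pi(2)} + (1-\eta)\eta\rho .
\]
This holds because $\b{t}{\pi(1)}>\b{t}{\pi(2)}$, $1-\eta>0$, and $\eta\rho>(1-\eta)\eta\rho$ since $\eta>0$ and $\rho>0$.

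By induction, at each round $s\ge t$ the top agent is $\pi(1+((s-t)\bmod n))$. The winners are therefore $\pi(1),\pi(2),\dots,\pi(n),\pi(1),\dots$, which is round-robin. I expect no real obstacle. The only points to check are that strictness of the ordering makes the winner unique under any tie-breaking rule, and that $\eta<1$ is used exactly in the last inequality.
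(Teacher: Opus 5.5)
Your proof is correct and follows the paper's route. The paper likewise shows that one round of first-price updates maps the strict chain to its cyclic shift, with the only nontrivial link being $(1-\eta)b^{(t)}_{\pi(1)}+\eta\rho > (1-\eta)\bigl(b^{(t)}_{\pi(2)}+\eta\rho\bigr)$, and then inducts over rounds. The one difference is packaging: the paper proves this step as a slightly more general proposition that allows weak inequalities and tracks how many are strict, because it reuses that version in the convergence argument, whereas you prove only the strict case this lemma needs.
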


Instead of directly proving the above lemma, we will prove a slightly stronger claim.
This claim shows that \eqref{eq:641} holding in round $t$ implies that a similar inequality holds in round $t+1$, even if the inequalities can be equalities.
In addition, it shows that the number of strict inequalities is (in a way) strictly increasing, which means that eventually all the inequalities will be strict.
This will also help us prove convergence to the desired condition.

\begin{proposition}\label{claim:same_b:ordering_of_bids_condition}
    Fix a round $t$ and assume $0 < \eta < 1$.
    Assume that $\max_i \b{t}{i} \ge (1 - \eta) \min_i \b{t}{i}$.
    Fix a permutation of the agents $\pi: [n] \to [n]$ such that
    \begin{equation*}
        \b{t}{\pi(1)} \ge \b{t}{\pi(2)} \ge \ldots \ge \b{t}{\pi(n-1)} \ge \b{t}{\pi(n)} \ge (1 - \eta) \b{t}{\pi(1)}
    \end{equation*}
    where $\pi(1)$ is the winner of round $t$ (under arbitrary tie-breaking).
    Assume that the last $k$ inequalities are strict for some $k=0,1,2,\ldots,n$.
    Then, it holds 
    \begin{equation*}
        \b{t+1}{\pi(2)} \ge \b{t+1}{\pi(3)} \ge \ldots \ge \b{t+1}{\pi(n)} \ge \b{t+1}{\pi(1)} \ge (1 - \eta) \b{t+1}{\pi(2)}
    \end{equation*}
    where the last $\min(n, k+1)$ inequalities are strict.
\end{proposition}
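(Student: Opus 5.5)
This is a direct computation with the first-price update $\b{t+1}{i} = \b{t}{i} + \eta(\rho - \x{t}{i}\b{t}{i})$. Since $\pi(1)$ wins round $t$, the winner's new bid is $\b{t+1}{\pi(1)} = (1-\eta)\b{t}{\pi(1)} + \eta\rho$. Every other agent $j \neq \pi(1)$ loses, so $\b{t+1}{j} = \b{t}{j} + \eta\rho$. Thus all losers shift by the same constant $\eta\rho$, while the winner is scaled by $(1-\eta)$ and then shifted by the same $\eta\rho$. I will verify the claimed chain of inequalities one link at a time, tracking which links are strict.

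The links among the losers come first. For $2 \le j < n$, the difference $\b{t+1}{\pi(j)} - \b{t+1}{\pi(j+1)}$ equals $\b{t}{\pi(j)} - \b{t}{\pi(j+1)}$, so each such link is weak or strict exactly as it was at round $t$. Next, the link $\b{t+1}{\pi(n)} \ge \b{t+1}{\pi(1)}$ has difference $\b{t}{\pi(n)} - (1-\eta)\b{t}{\pi(1)}$. This is nonnegative, and it is strict iff the old last link $\b{t}{\pi(n)} \ge (1-\eta)\b{t}{\pi(1)}$ was strict.

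The new closing link is $\b{t+1}{\pi(1)} \ge (1-\eta)\b{t+1}{\pi(2)}$. Expanding, the left side minus the right side is
\begin{equation*}
(1-\eta)\qty(\b{t}{\pi(1)} - \b{t}{\pi(2)}) + \eta\rho - (1-\eta)\eta\rho = (1-\eta)\qty(\b{t}{\pi(1)} - \b{t}{\pi(2)}) + \eta^2\rho .
\end{equation*}
This is strictly positive because $\b{t}{\pi(1)} \ge \b{t}{\pi(2)}$, $\eta > 0$ and $\rho > 0$. So the new last link is always strict.

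For the strictness count, I index the old links $1,\dots,n$, where link $j<n$ compares $\pi(j)$ with $\pi(j+1)$ and link $n$ is the closing one. The new links are, in order, old links $2,\dots,n$ followed by a brand-new strict link. If the last $k$ old links were strict, the last $k$ of the carried-over links (old links $n-k+1,\dots,n$) remain strict, provided $k \le n-1$, since link $1$ is dropped. Appending the new strict link gives at least $k+1$ strict links at the end. If $k=n$, the dropped old link $1$ costs one, leaving $n-1$ carried strict links plus the new one, which is $n = \min(n,k+1)$. This confirms the claimed count.

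I do not expect a serious obstacle. The only points needing care are the index bookkeeping in the $k=n$ versus $k<n$ cases, and noting that the chain's starting hypothesis is consistent with the winner being $\pi(1)$ under any tie-breaking, since the argument never uses strictness of link $1$. The case $n=1$ is trivial, with the single link handled by the $\eta^2\rho$ computation.
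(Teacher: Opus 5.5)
Your proof is correct and is essentially the paper's argument: both compute the round-$(t+1)$ bids and note that the links among the losers, and the link between $\pi(n)$ and $\pi(1)$, are inherited unchanged from round $t$. Both then show the new closing link has slack $(1-\eta)\bigl(\b{t}{\pi(1)}-\b{t}{\pi(2)}\bigr)+\eta^2\rho>0$ (the paper writes this as $\b{t}{1} > \b{t}{2} - \frac{\eta^2}{1-\eta}\rho$), and your strictness bookkeeping, including the $k=n$ case, is more careful than the paper's.

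One small slip: for $n=1$ the agent $\pi(2)=\pi(1)$ is the winner rather than a loser, so the $\eta^2\rho$ formula does not apply. The single link still holds strictly because $\b{t+1}{\pi(1)} \ge \eta\rho > 0$.
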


\begin{myproof}
    W.l.o.g. re-order the agents so that $\pi(i) = i$.
    Fix a $t < T$, where agent $1$ wins round $t$.
    We now calculate the bids of the next round: $\b{t+1}{1} = (1-\eta) \b{t}{1} + \eta \rho$ and $\b{t+1}{i} = \b{t}{i} + \eta \rho$, for $i \ge 2$.
    We have that
    \begin{equation*}
    \begin{gathered}
        \b{t+1}{2} \ge \b{t+1}{3} \ge \ldots \ge \b{t+1}{n} \ge \b{t+1}{1} \ge (1 - \eta) \b{t+1}{2}
        \\\iff\\
        \qquad\quad\b{t}{2} \ge \b{t}{3} \ge \ldots \ge \b{t}{n} \ge (1-\eta) \b{t}{1} \ge (1 - \eta) \b{t}{2} - \eta^2 \rho
    \end{gathered}
    \end{equation*}
    We notice that by our assumption on the bids of round $t$, the first $n-1$ inequalities are true and the last $k$ of them are strict inequalities.
    All we have to show know is that the last inequality is true and is strict.
    This is equivalent with $\b{t}{1} > \b{t}{2} - \frac{\eta^2}{1-\eta} \rho$ which is implied by $\b{t}{1} \ge \b{t}{2}$ and $0 < \eta < 1$.
\end{myproof}

We now easily prove \cref{lem:same_b:round_robin_condition}.

\begin{myproof}[Proof of \cref{lem:same_b:round_robin_condition}]
    Follows by \cref{claim:same_b:ordering_of_bids_condition}:
    for every round $t' \ge t$ we get that (define $\pi(n+1) = \pi(1)$, $\pi(n+2) = \pi(2)$, etc.) such that,
    $$\b{t'}{\pi(1 + t' - t)} > \b{t'}{\pi(2 + t' - t)} > \ldots > \b{t'}{\pi(n + t' - t)}$$ which guarantees round-robin.
\end{myproof}

To complete the proof of \cref{thm:same_b:round_robin} all we need to examine is in which round $\bmin^{(t)} \ge (1-\eta) \bmax^{(t)}$; by \cref{claim:same_b:ordering_of_bids_condition}, after an additional $n$ rounds we get the condition of \cref{lem:same_b:round_robin_condition}.

\begin{lemma}\label{lem:same_b:convergence}
    Assume $0 < \eta < 1$ and that $\sum_i\b{1}{i} \le n$.
    Then, for every round $t \ge 1 + \frac{n^2}{2\eta} \log\frac{1}{\eta}$ it holds that $\bmin^{(t)} \ge (1-\eta) \bmax^{(t)}$.
\end{lemma}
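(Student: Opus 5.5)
The plan is to track a spread potential, show it contracts geometrically at rate about $\eta/n^2$ per round, and stop once it falls below the threshold encoded by $\bmin^{(t)} \ge (1-\eta)\bmax^{(t)}$. The $\frac{n^2}{2\eta}\log\frac{1}{\eta}$ term in the bound is what such a rate, run from an $O(1)$-sized initial spread down to an $O(\eta)$-sized one, would produce.

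\textbf{Dynamics.} With $\rho = 1/n$ and first-price payments, one round acts on the bid vector as follows. Every agent's bid increases by $\eta/n$. In addition, the winner, who holds $\bmax^{(t)}$, has her bid multiplied by $(1-\eta)$. Consequently the sum of bids obeys
\[
\textstyle\sum_i \b{t+1}{i} = \sum_i \b{t}{i} + \eta\bigl(1 - \bmax^{(t)}\bigr),
\]
and the ordering of the non-winners is preserved. The only structural change in a round is that the top bid is moved down by $\eta\,\bmax^{(t)}$ relative to everyone else. From \cref{lef:model:budget_satisfaction} all bids stay non-negative. Together with $\sum_i \b{1}{i} \le n$ and the sum dynamics above, I will first show an a priori bound: $\bmax^{(t)} \le \max\{\bmax^{(1)}, 1\} + \eta$ for all $t$, so in particular $\bmax^{(t)} = O(n)$. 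Since the sum increases only when $\bmax < 1$, it stays bounded. This caps the initial value of the potential introduced next.

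\textbf{Potential and target.} I will use the spread potential
\[
\Phi_t = \sum_i \bigl(\bmax^{(t)} - \b{t}{i}\bigr)^2 ,
\]
or, as a fallback, $\sum_{i<j}(\b{t}{i}-\b{t}{j})^2$. The latter is invariant under the uniform $+\eta/n$ shift, so only the winner's $(1-\eta)$ contraction changes it. If the winner is $w$ with bid $\bmax$, the change in $\sum_{i<j}(\b{t}{i}-\b{t}{j})^2$ is
\[
-2\eta\,\bmax\sum_{j\ne w}\bigl(\bmax - \b{t}{j}\bigr) + (n-1)\,\eta^2\,\bmax^2 .
\]
By Cauchy--Schwarz, $\sum_{j}(\bmax-\b{t}{j}) \ge \sqrt{\Psi/n}$, where $\Psi = \sum_{i<j}(\b{t}{i}-\b{t}{j})^2/n$ is comparable to $\sum_j(\bmax - b_j)^2$ up to factors of $n$. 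I will then derive a recursion of the form
\[
\Psi_{t+1} \le \Bigl(1 - \tfrac{2\eta}{n^2}\Bigr)\Psi_t + O(\eta^2 \bmax^2)
\qquad\text{whenever } \Psi_t \gg \eta^2 \bmax^2 .
\]
This needs a lower bound on $\bmax$ so that the linear term actually dominates, which I will get from the fact that the sum dynamics push $\bavg$ toward $1$. Iterating for $\frac{n^2}{2\eta}\log\frac{1}{\eta}$ rounds reduces the initial $O(n^2)$ spread to the order of its fixed point, $\Theta(\eta^2)$ up to $n$-factors.

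\textbf{Main obstacle.} Converting an $O(\eta)$ spread into the exact inequality $\bmin \ge (1-\eta)\bmax$, with no slack constant, is the hard step; the pure potential bound loses constants there. I plan to finish with a combinatorial argument in the style of \cref{claim:same_b:ordering_of_bids_condition}. Once the spread is at most, say, $c\,\eta\,\bmax$, consider the next $n$ rounds. Every agent that wins during this window drops to about $(1-\eta)\bmax$ and then only rises at the common rate $\eta/n$. The agents that have not yet won keep their relative order, and they are already within the band. So after each agent has won once, which the small spread forces within $n$ rounds, every bid lies in $[(1-\eta)\bmax, \bmax]$. I will also show that this band condition is self-preserving, by the same calculation as in \cref{claim:same_b:ordering_of_bids_condition} but without requiring strictness. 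I expect the delicate parts to be exactly this: (i) establishing that all agents win within the window, and (ii) pinning the lower bound on $\bmax$ needed to make the contraction rate genuinely $\eta/n^2$.
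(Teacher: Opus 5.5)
There is a genuine gap, and it sits where you flagged the main obstacle. The potential stage cannot deliver what your finishing step needs, and the finishing step only works once the conclusion already essentially holds.

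A side issue first. A recursion $\Psi_{t+1}\le(1-\frac{2\eta}{n^2})\Psi_t+O(\eta^2\bmax^2)$ has fixed point $\Theta(n^2\eta\,\bmax^2)$, not $\Theta(\eta^2)$. This is the same $\sqrt\eta$-versus-$\eta$ loss discussed in \cref{ssec:diff_b:sqrt_eta}. To reach the $\eta^2$ scale you would have to keep the decrease proportional to $\eta\,\bmax\sqrt{\Psi_t}$ that Cauchy--Schwarz gives, rather than linearizing it.

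More fundamentally, no upper bound on an $L_2$ spread potential can certify the band.
\begin{itemize}
\item On the exact round-robin orbit (bids $1,1-\frac{\eta}{n},\dots,1-\frac{(n-1)\eta}{n}$, which the dynamics approach) one has $\Psi=\frac{n^2-1}{12n}\eta^2$.
\item The configuration with all bids equal to $1$ except one equal to $1-d$ has $\Psi=\frac{n-1}{n}d^2$. This equals the orbit value for $d=\eta\sqrt{(n+1)/12}$.
\item For $n\ge 12$ this $d$ exceeds $\eta$, so that configuration violates $\bmin\ge(1-\eta)\bmax$.
\end{itemize}

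Your finish then assumes that the agents who have not yet won ``are already within the band'', which is the statement to be proved. The claim that a small spread forces every agent to win within $n$ rounds is also false once the spread exceeds $\eta\bmax$. A straggler below $(1-\eta)\bmax$ cannot win while the top agents, who land at about $(1-\eta)\bmax+\eta/n$ after winning, remain above it. For instance, with $n=2$ and bids $(1,1-3\eta)$, agent $1$ wins both of the first two rounds: the bids become $(1-\frac{\eta}{2},1-\frac{5\eta}{2})$ and then $(1-\eta+\frac{\eta^2}{2},1-2\eta)$.

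The missing idea, which the paper uses, is to track $\bmin$ and $\bmax$ directly. It makes the potential, the lower bound on $\bmax$, and the $n$-round window all unnecessary.
\begin{itemize}
\item While $\bmin^{(t)}<(1-\eta)\bmax^{(t)}$, the minimum is held by a non-winner, and the winner lands at $(1-\eta)\bmax^{(t)}+\eta\rho>\bmin^{(t)}+\eta\rho$. Hence $\bmin^{(t+1)}=\bmin^{(t)}+\eta\rho$ exactly, whereas $\bmax^{(t+1)}\le\bmax^{(t)}+\eta\rho$.
\item The sum invariant gives $\bmax^{(t)}\le\sum_i\b{t}{i}\le n$. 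It follows by induction from your sum identity and $\bmax\ge\bavg$.
\item Combining these, for $r^{(t)}=\bmin^{(t)}/\bmax^{(t)}$ one gets $1-r^{(t+1)}\le(1-r^{(t)})/(1+\eta\rho^2)$ in every failing round. So $r^{(t)}\ge1-\eta$ once $t-1\ge\log(1/\eta)/\log(1+\eta/n^2)$.
\item Preservation afterwards follows from \cref{claim:same_b:ordering_of_bids_condition}, as you intended.
\end{itemize}
This handles stragglers automatically and hits the threshold $1-\eta$ exactly, with no slack constant to lose.

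Incidentally, $\log(1+x)\ge x/2$ converts that count into $\frac{2n^2}{\eta}\log\frac1\eta$, not the paper's $\frac{n^2}{2\eta}\log\frac1\eta$. The linear growth of $\bmin$ alone does better. The sum invariant forces $\bmin<1$ whenever the condition fails, so the condition holds from round $1+n/\eta$ on.
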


\begin{myproof}
    We first establish a global upper bound on $\bmax^{(t)}$, based on $n$.
    We will prove that $\sum_i \b{t}{i} \le n$, which implies the same upper bound for $\bmax^{(t)}$.
    The above holds for $t = 1$.
    For an arbitrary $t \ge 1$, by summing the update rules of all players,
    \begin{equation*}
        \sum_i \b{t+1}{i}
        =
        \sum_i \b{t}{i}
        +
        \eta \qty(1 - \bmax^{(t)})
        \le
        \sum_i \b{t}{i}
        +
        \eta \qty(1 - \frac{1}{n}\sum_i \b{t}{i})
        \le
        \qty(1 - \frac{\eta}{n}) n
        +
        \eta
        = n
    \end{equation*}
    where the last inequality follows from induction that $\sum_i \b{t}{i} \le n$.
    This proves that $\bmax^{(t)} \le n$.

    Now we study how the maximum and minimum bids evolve over time:
    \begin{align*}
        \bmax^{(t+1)} &\le \max\qty{ (1 - \eta)\bmax^{(t)} , \bmax^{(t)} } + \eta \rho = \bmax^{(t)}  + \eta \rho
        \\\textrm{ and }\quad
        \bmin^{(t+1)} &= \min\qty{ (1 - \eta)\bmax^{(t)} , \bmin^{(t)} } + \eta \rho
    \end{align*}

    Now let $t_0$ be a round where $\bmin^{(t)} < (1-\eta) \bmax^{(t)}$ for all $t \le t_0$.
    We will show that such a $t_0$ has to be small.
    Fix such any $t \le t_0$.
    Then,
    \begin{equation*}
        \frac{\bmin^{(t+1)}}{\bmax^{(t+1)}}
        \ge
        \frac{\bmin^{(t)} + \eta \rho}{\bmax^{(t)} + \eta \rho}
        =
        \frac{
            \frac{\bmin^{(t)}}{\bmax^{(t)}} + \frac{\eta \rho}{\bmax^{(t)}}
        }{
            1 + \frac{\eta \rho}{\bmax^{(t)}}
        }
        \ge
        \frac{
            \frac{\bmin^{(t)}}{\bmax^{(t)}} + \eta \rho^2
        }{
            1 + \eta \rho^2
        }
    \end{equation*}
    where in the last inequality follows by the fact that $x \mapsto \frac{y + x}{1 + x}$ is increasing for $y < 1$ and that $\bmax^{(t)} \le 1/\rho$.
    Let $r^{(t)} = \frac{\bmin^{(t)}}{\bmax^{(t)}}$; using induction, the above shows that for every $t \le t_0 + 1$
    \begin{equation*}
        r^{(t)}
        \ge
        1 - \qty( \frac{1}{1 + \eta \rho^2} )^{t-1}
        +
        \qty( \frac{1}{1 + \eta \rho^2} )^{t-1} r^{(1)}
        \ge
        1 - \qty( \frac{1}{1 + \eta \rho^2} )^{t-1}
    \end{equation*}

    We now use the above inequality for $t = t_0$, along with the fact that $r^{(t_0)} < 1 - \eta$ to get
    \begin{equation*}
        1 - \eta
        >
        1 - \qty( \frac{1}{1 + \eta \rho^2} )^{t_0-1}
        \quad\implies\quad
        t_0
        <
        1 + \frac{ \log\frac{1}{\eta} }{\log\qty(1 + \eta \rho^2)}
        \le
        1 + \frac{ \log\frac{1}{\eta} }{2\eta \rho^2}
    \end{equation*}
    where in the last inequality we use that $\log(1 + x) \ge x/2$ for $0 \le x \le 1$.
    This proves that for rounds $t \ge t_0 + 1$ it holds that $\bmin^{(t)} \ge (1-\eta) \bmax^{(t)}$ (by \cref{claim:same_b:ordering_of_bids_condition} if this condition holds for some round it holds for all subsequent rounds).
\end{myproof}

We now finally get \cref{thm:same_b:round_robin} as a corollary from \cref{lem:same_b:round_robin_condition,lem:same_b:convergence}.


\printbibliography{}

@article{aggarwal2024auto,
  title     = {Auto-bidding and Auctions in Online Advertising: A Survey},
  author    = {Aggarwal, Gagan and Badanidiyuru, Ashwinkumar and Balseiro, Santiago R and Bhawalkar, Kshipra and Deng, Yuan and Feng, Zhe and Goel, Gagan and Liaw, Christopher and Lu, Haihao and Mahdian, Mohammad and others},
  journal   = {ACM SIGecom Exchanges},
  volume    = {22},
  number    = {1},
  pages     = {159--183},
  year      = {2024},
  publisher = {ACM New York, NY, USA},
  address   = {New York, NY, USA}
}

@inproceedings{braverman2018selling,
  title     = {Selling to a no-regret buyer},
  author    = {Braverman, Mark and Mao, Jieming and Schneider, Jon and Weinberg, Matt},
  booktitle = {Proceedings of the 2018 ACM Conference on Economics and Computation},
  pages     = {523--538},
  publisher = {Association for Computing Machinery},
  address   = {New York, NY, USA},
  year      = {2018}
}

@article{broadbent1993advertising,
  title     = {Advertising effects: More than short term},
  author    = {Broadbent, Simon},
  journal   = {Market Research Society. Journal.},
  volume    = {35},
  number    = {1},
  pages     = {1--11},
  year      = {1993},
  publisher = {SAGE Publications Sage UK: London, England}
}

@article{broadbent1995adstock,
  title     = {Adstock modelling for the long term},
  author    = {Broadbent, Simon and Fry, Tim},
  journal   = {Market Research Society. Journal.},
  volume    = {37},
  number    = {4},
  pages     = {1--18},
  year      = {1995},
  publisher = {SAGE Publications Sage UK: London, England}
}

@article{broadbent2000advertisements,
  title     = {What do advertisements really do for brands?},
  author    = {Broadbent, Simon},
  journal   = {International Journal of Advertising},
  volume    = {19},
  number    = {2},
  pages     = {147--165},
  year      = {2000},
  publisher = {Taylor \& Francis}
}

@inproceedings{cai2023selling,
  title        = {Selling to multiple no-regret buyers},
  author       = {Cai, Linda and Weinberg, S Matthew and Wildenhain, Evan and Zhang, Shirley},
  booktitle    = {International Conference on Web and Internet Economics},
  pages        = {113--129},
  publisher    = {Springer-Verlag},
  address      = {Berlin, Heidelberg},
  year         = {2023},
  organization = {Springer}
}

@book{cesa2006prediction,
  title     = {Prediction, learning, and games},
  author    = {Cesa-Bianchi, Nicolo and Lugosi, G{\'a}bor},
  year      = {2006},
  publisher = {Cambridge university press},
  address   = {USA}
}

@article{craig1976advertising,
  title     = {Advertising wearout: An experimental analysis},
  author    = {Craig, C Samuel and Sternthal, Brian and Leavitt, Clark},
  journal   = {Journal of Marketing Research},
  volume    = {13},
  number    = {4},
  pages     = {365--372},
  year      = {1976},
  publisher = {SAGE Publications Sage CA: Los Angeles, CA}
}

@inproceedings{DBLP:conf/colt/LucierPSZ24,
  author    = {Brendan Lucier and
               Sarath Pattathil and
               Aleksandrs Slivkins and
               Mengxiao Zhang},
  title     = {Autobidders with Budget and {ROI} Constraints: Efficiency, Regret,
               and Pacing Dynamics},
  booktitle = {The Thirty Seventh Annual Conference on Learning Theory, June 30 - July 3, 2023, Edmonton, Canada},
  series    = {Proceedings of Machine Learning Research},
  volume    = {247},
  pages     = {3642--3643},
  publisher = {{PMLR}},
  year      = {2024},
  address   = {Edmonton, Canada}
}

@inproceedings{DBLP:conf/innovations/GaitondeLLLS23,
  author    = {Jason Gaitonde and
               Yingkai Li and
               Bar Light and
               Brendan Lucier and
               Aleksandrs Slivkins},
  title     = {Budget Pacing in Repeated Auctions: Regret and Efficiency Without
               Convergence},
  booktitle = {14th Innovations in Theoretical Computer Science Conference, {ITCS}
               2023, January 10-13, 2023, MIT, Cambridge, Massachusetts, {USA}},
  series    = {LIPIcs},
  volume    = {251},
  pages     = {52:1--52:1},
  publisher = {Schloss Dagstuhl - Leibniz-Zentrum f{\"{u}}r Informatik},
  year      = {2023},
  address   = {Cambridge, Massachusetts, {USA}}
}

@inproceedings{DBLP:conf/wine/ConitzerKSM18,
  author    = {Vincent Conitzer and
               Christian Kroer and
               Eric Sodomka and
               Nicol{\'{a}}s E. Stier Moses},
  editor    = {George Christodoulou and
               Tobias Harks},
  title     = {Multiplicative Pacing Equilibria in Auction Markets},
  booktitle = {Web and Internet Economics - 14th International Conference, {WINE}
               2018, Oxford, UK, December 15-17, 2018, Proceedings},
  series    = {Lecture Notes in Computer Science},
  volume    = {11316},
  pages     = {443},
  publisher = {Springer},
  year      = {2018},
  address   = {Linthicum, MD, USA}
}

@inproceedings{DBLP:journals/corr/AggarwalFZ24,
  author    = {Gagan Aggarwal and
               Giannis Fikioris and
               Mingfei Zhao},
  title     = {No-Regret Algorithms in non-Truthful Auctions with Budget and {ROI}
               Constraints},
  booktitle = {Proceedings of the ACM Web Conference 2025},
  publisher = {Association for Computing Machinery},
  address   = {New York, NY, USA},
  year      = {2025},
  url       = {https://openreview.net/forum?id=qsj78d8y1j},
  numpages  = {}
}

@article{DBLP:journals/mansci/BalseiroG19,
  author  = {Santiago R. Balseiro and
             Yonatan Gur},
  title   = {Learning in Repeated Auctions with Budgets: Regret Minimization and
             Equilibrium},
  journal = {Manag. Sci.},
  volume  = {65},
  number  = {9},
  pages   = {3952--3968},
  year    = {2019}
}

@article{fikioris2025mor,
  author       = {Giannis Fikioris and
                  {\'{E}}va Tardos},
  title        = {Liquid Welfare Guarantees for No-Regret Learning in Sequential Budgeted
                  Auctions},
  journal      = {Math. Oper. Res.},
  volume       = {50},
  number       = {2},
  pages        = {1233--1249},
  year         = {2025},
  url          = {https://doi.org/10.1287/moor.2023.0274},
  doi          = {10.1287/MOOR.2023.0274},
  bibsource    = {dblp computer science bibliography, https://dblp.org}
}

@article{ha1997does,
  title     = {Does advertising clutter have diminishing and negative returns?},
  author    = {Ha, Louisa and Litman, Barry R},
  journal   = {Journal of Advertising},
  volume    = {26},
  number    = {1},
  pages     = {31--42},
  year      = {1997},
  publisher = {Taylor \& Francis}
}

@inproceedings{rubinstein2024strategizing,
  author    = {Rubinstein, Aviad and Zhao, Junyao},
  title     = {Strategizing against No-Regret Learners in First-Price Auctions},
  year      = {2024},
  publisher = {Association for Computing Machinery},
  address   = {New York, NY, USA},
  booktitle = {Proceedings of the 25th ACM Conference on Economics and Computation},
  pages     = {894–921},
  numpages  = {28},
  location  = {New Haven, CT, USA},
  series    = {EC '24}
}

@article{weinberg1982econometric,
  title     = {On the econometric measurement of the duration of advertising effect on sales},
  author    = {Weinberg, Charles B and Weiss, Doyle L},
  journal   = {Journal of Marketing Research},
  volume    = {19},
  number    = {4},
  pages     = {585--591},
  year      = {1982},
  publisher = {SAGE Publications Sage CA: Los Angeles, CA}
}

@inproceedings{DBLP:conf/sigecom/FikiorisKKKMT25,
  author       = {Giannis Fikioris and
                  Robert Kleinberg and
                  Yoav Kolumbus and
                  Raunak Kumar and
                  Yishay Mansour and
                  {\'{E}}va Tardos},
  editor       = {Itai Ashlagi and
                  Aaron Roth},
  title        = {Learning in Budgeted Auctions with Spacing Objectives},
  booktitle    = {Proceedings of the 26th {ACM} Conference on Economics and Computation,
                  {EC} 2025, Stanford University, Stanford, CA, USA, July 7-10, 2025},
  pages        = {158},
  publisher    = {{ACM}},
  year         = {2025},
  url          = {https://doi.org/10.1145/3736252.3742512},
  doi          = {10.1145/3736252.3742512},
  bibsource    = {dblp computer science bibliography, https://dblp.org}
}

@inproceedings{DBLP:conf/colt/MansourMSS22,
  author       = {Yishay Mansour and
                  Mehryar Mohri and
                  Jon Schneider and
                  Balasubramanian Sivan},
  editor       = {Po{-}Ling Loh and
                  Maxim Raginsky},
  title        = {Strategizing against Learners in Bayesian Games},
  booktitle    = {Conference on Learning Theory, 2-5 July 2022, London, {UK}},
  series       = {Proceedings of Machine Learning Research},
  volume       = {178},
  pages        = {5221--5252},
  publisher    = {{PMLR}},
  year         = {2022},
  url          = {https://proceedings.mlr.press/v178/mansour22a.html},
  bibsource    = {dblp computer science bibliography, https://dblp.org}
}

@inproceedings{DBLP:conf/nips/DengSS19,
  author       = {Yuan Deng and
                  Jon Schneider and
                  Balasubramanian Sivan},
  editor       = {Hanna M. Wallach and
                  Hugo Larochelle and
                  Alina Beygelzimer and
                  Florence d'Alch{\'{e}}{-}Buc and
                  Emily B. Fox and
                  Roman Garnett},
  title        = {Strategizing against No-regret Learners},
  booktitle    = {Advances in Neural Information Processing Systems 32: Annual Conference
                  on Neural Information Processing Systems 2019, NeurIPS 2019, December
                  8-14, 2019, Vancouver, BC, Canada},
  pages        = {1577--1585},
  year         = {2019},
  url          = {https://proceedings.neurips.cc/paper/2019/hash/8b6dd7db9af49e67306feb59a8bdc52c-Abstract.html},
  bibsource    = {dblp computer science bibliography, https://dblp.org}
}

@article{hart2000simple,
  author    = {Hart, Sergiu and Mas{-}Colell, Andreu},
  title     = {A Simple Adaptive Procedure Leading to Correlated Equilibrium},
  journal   = {Econometrica},
  volume    = {68},
  number    = {5},
  pages     = {1127--1150},
  year      = {2000},
  publisher = {Wiley}
}

@article{DBLP:journals/corr/abs-1909-05207,
  author       = {Elad Hazan},
  title        = {Introduction to Online Convex Optimization},
  journal      = {CoRR},
  volume       = {abs/1909.05207},
  year         = {2019},
  url          = {http://arxiv.org/abs/1909.05207},
  eprinttype    = {arXiv},
  eprint       = {1909.05207},
  bibsource    = {dblp computer science bibliography, https://dblp.org}
}

@article{DBLP:journals/corr/abs-2202-06152,
  author       = {Santiago R. Balseiro and
                  Haihao Lu and
                  Vahab S. Mirrokni and
                  Balasubramanian Sivan},
  title        = {Analysis of Dual-Based PID Controllers through Convolutional Mirror Descent},
  journal      = {CoRR},
  volume       = {abs/2202.06152},
  year         = {2022},
  url          = {https://arxiv.org/abs/2202.06152},
  eprinttype    = {arXiv},
  eprint       = {2202.06152},
  bibsource    = {dblp computer science bibliography, https://dblp.org}
}

@inproceedings{DBLP:conf/icml/WangYDK23,
  author       = {Qian Wang and
                  Zongjun Yang and
                  Xiaotie Deng and
                  Yuqing Kong},
  editor       = {Andreas Krause and
                  Emma Brunskill and
                  Kyunghyun Cho and
                  Barbara Engelhardt and
                  Sivan Sabato and
                  Jonathan Scarlett},
  title        = {Learning to Bid in Repeated First-Price Auctions with Budgets},
  booktitle    = {International Conference on Machine Learning, {ICML} 2023, 23-29 July
                  2023, Honolulu, Hawaii, {USA}},
  series       = {Proceedings of Machine Learning Research},
  volume       = {202},
  pages        = {36494--36513},
  publisher    = {{PMLR}},
  year         = {2023},
  url          = {https://proceedings.mlr.press/v202/wang23ao.html},
  bibsource    = {dblp computer science bibliography, https://dblp.org}
}

@inproceedings{DBLP:conf/sigecom/KumarSS24,
  author       = {Rachitesh Kumar and
                  Jon Schneider and
                  Balasubramanian Sivan},
  editor       = {Dirk Bergemann and
                  Robert Kleinberg and
                  Daniela Sab{\'{a}}n},
  title        = {Strategically-Robust Learning Algorithms for Bidding in First-Price
                  Auctions},
  booktitle    = {Proceedings of the 25th {ACM} Conference on Economics and Computation,
                  {EC} 2024, New Haven, CT, USA, July 8-11, 2024},
  pages        = {893},
  publisher    = {{ACM}},
  year         = {2024},
  url          = {https://doi.org/10.1145/3670865.3673514},
  doi          = {10.1145/3670865.3673514},
  bibsource    = {dblp computer science bibliography, https://dblp.org}
}

@inproceedings{DBLP:conf/sigecom/Arunachaleswaran25,
  author       = {Eshwar Ram Arunachaleswaran and
                  Natalie Collina and
                  Yishay Mansour and
                  Mehryar Mohri and
                  Jon Schneider and
                  Balasubramanian Sivan},
  editor       = {Itai Ashlagi and
                  Aaron Roth},
  title        = {Swap Regret and Correlated Equilibria Beyond Normal-Form Games},
  booktitle    = {Proceedings of the 26th {ACM} Conference on Economics and Computation,
                  {EC} 2025, Stanford University, Stanford, CA, USA, July 7-10, 2025},
  pages        = {130--157},
  publisher    = {{ACM}},
  year         = {2025},
  url          = {https://doi.org/10.1145/3736252.3742511},
  doi          = {10.1145/3736252.3742511},
  bibsource    = {dblp computer science bibliography, https://dblp.org}
}

@inproceedings{DBLP:conf/sigecom/Arunachaleswaran24,
  author       = {Eshwar Ram Arunachaleswaran and
                  Natalie Collina and
                  Jon Schneider},
  editor       = {Dirk Bergemann and
                  Robert Kleinberg and
                  Daniela Sab{\'{a}}n},
  title        = {Pareto-Optimal Algorithms for Learning in Games},
  booktitle    = {Proceedings of the 25th {ACM} Conference on Economics and Computation,
                  {EC} 2024, New Haven, CT, USA, July 8-11, 2024},
  pages        = {490--510},
  publisher    = {{ACM}},
  year         = {2024},
  url          = {https://doi.org/10.1145/3670865.3673517},
  doi          = {10.1145/3670865.3673517},
  bibsource    = {dblp computer science bibliography, https://dblp.org}
}

@article{Hannan57,
    author = {James F. Hannan},
    title = {Approximation to Bayes risk in repeated play},
    journal = {Contributions
to the Theory of Games},
    year = 1957,
    pages ={97--139}
}

@article{DBLP:journals/focm/DavisJ25,
  author       = {Damek Davis and
                  Liwei Jiang},
  title        = {A Local Nearly Linearly Convergent First-Order Method for Nonsmooth
                  Functions with Quadratic Growth},
  journal      = {Found. Comput. Math.},
  volume       = {25},
  number       = {3},
  pages        = {943--1024},
  year         = {2025},
  url          = {https://doi.org/10.1007/s10208-024-09653-y},
  doi          = {10.1007/S10208-024-09653-Y},
  bibsource    = {dblp computer science bibliography, https://dblp.org}
}

@article{DBLP:journals/mor/CharisopoulosD24,
  author       = {Vasileios Charisopoulos and
                  Damek Davis},
  title        = {A Superlinearly Convergent Subgradient Method for Sharp Semismooth
                  Problems},
  journal      = {Math. Oper. Res.},
  volume       = {49},
  number       = {3},
  pages        = {1678--1709},
  year         = {2024},
  url          = {https://doi.org/10.1287/moor.2023.1390},
  doi          = {10.1287/MOOR.2023.1390},
  bibsource    = {dblp computer science bibliography, https://dblp.org}
}

@article{DBLP:journals/jacm/BadanidiyuruKS18,
  author       = {Ashwinkumar Badanidiyuru and
                  Robert Kleinberg and
                  Aleksandrs Slivkins},
  title        = {Bandits with Knapsacks},
  journal      = {J. {ACM}},
  volume       = {65},
  number       = {3},
  pages        = {13:1--13:55},
  year         = {2018},
  url          = {https://doi.org/10.1145/3164539},
  doi          = {10.1145/3164539},
  bibsource    = {dblp computer science bibliography, https://dblp.org}
}

@inproceedings{Guruganesh2024,
 author = {Guru Guruganesh and Kolumbus, Yoav and Schneider, Jon and Talgam-Cohen, Inbal and Vlatakis-Gkaragkounis, Emmanouil-Vasileios and Wang, Joshua R. and Weinberg, S. Matthew},
 booktitle = {Advances in Neural Information Processing Systems},
 editor = {A. Globerson and L. Mackey and D. Belgrave and A. Fan and U. Paquet and J. Tomczak and C. Zhang},
 pages = {77366--77408},
 publisher = {Curran Associates, Inc.},
 title = {Contracting with a Learning Agent},
 url = {https://proceedings.neurips.cc/paper_files/paper/2024/file/8d7c8a3a0ed04006d129b3cebcac7a3e-Paper-Conference.pdf},
 volume = {37},
 year = {2024}
}

@article{lin2024generalized,
  title={Generalized principal-agent problem with a learning agent},
  author={Lin, Tao and Chen, Yiling},
  journal={arXiv preprint arXiv:2402.09721},
  year={2024}
}

\newpage
\appendix
\section{Missing proofs of Section~\ref{sec:adv}} \label{sec:app:model}

We begin by stating the full version of \cref{lem:adv:opt_lagrange_normalized}.

\begin{lemma}[Stronger version of \cref{lem:adv:opt_lagrange_normalized}] \label{lem:adv:opt_lagrange_unnormalized}
    Fix $\lambda = \frac{\rL}{(\rL + \rO)^2}$ and an interval $[T-\tau + 1, T]$ of length $\tau$ for any $\tau \in [0, T]$ and the Learner's initial bid in that interval $b = \b{T-\tau+1}{}$.
    Then, for any binary sequence $\{\x{t}{}\}_{t \in [T-\tau+1, T]}$ it holds
    \begin{equation*}
        \sum_{t=T-\tau + 1}^T \x{t}{} (1 - \lambda \b{t}{})
        \le
        \frac{\rO^2 + \qty(\rO^2 + \rL^2)\eta}{(\rO + \rL)^2}
        \tau
        +
        g(b)
    \end{equation*}
    where
    \begin{equation*}
        g(b)
        =
        \frac{\frac{1}{2} b^2 - \qty( 2 \rO + \rL ) b}{\eta (\rO + \rL)^2}
        +
        \frac{2}{\eta}
        .
    \end{equation*}
\end{lemma}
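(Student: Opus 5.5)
The plan is to use induction on the number of remaining rounds $\tau$, via the dynamic-programming recursion from the proof sketch of \cref{lem:adv:opt_lagrange_normalized}. Write $S = \rO + \rL$, $c = 2\rO + \rL$, and $A = \frac{\rO^2 + (\rO^2+\rL^2)\eta}{S^2}$. Let $\opt_\tau(b)$ be the maximum of $\sum_t \x{t}{}(1-\lambda \b{t}{})$ over the last $\tau$ rounds, given that the Learner starts at bid $b$. Its recursion is
\begin{equation*}
\opt_\tau(b) = \max\qty{1 - \lambda b + \opt_{\tau-1}(b + \eta\rL),\ \opt_{\tau-1}\qty\big(b+\eta(\rL - b))}.
\end{equation*}
Here the first branch is an Optimizer win, where the Learner pays nothing. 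The second branch is an Optimizer loss, where the Learner pays $b$; this uses the reduction to the binary program \eqref{eq:optimization}. I will prove $\opt_\tau(b) \le A\tau + g(b)$ for all $b \ge 0$. This suffices, because the bids stay non-negative by \cref{lef:model:budget_satisfaction}.

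\emph{Base case $\tau = 0$.} Here I need $g(b) \ge 0$ for $b\ge 0$. The quadratic $\frac12 b^2 - cb$ has minimum $-c^2/2$. Since $c \le 2S$, this gives $g(b) \ge -\frac{2S^2}{\eta S^2} + \frac{2}{\eta} = 0$.

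\emph{Inductive step.} It suffices to show $A + g(b) \ge \max\{1-\lambda b + g(b+\eta\rL),\ g(b + \eta(\rL-b))\}$. Since $g$ is an exact quadratic, there is no Taylor error:
\begin{equation*}
g(b+\eta x) - g(b) = \frac{x(b-c) + \eta x^2/2}{S^2}.
\end{equation*}

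\emph{Win branch ($x = \rL$).} With $\lambda = \rL/S^2$, the $b$-terms cancel exactly:
\begin{equation*}
1 - \frac{\rL b}{S^2} + \frac{\rL b - \rL c + \eta \rL^2/2}{S^2} = \frac{S^2 - \rL(2\rO+\rL)}{S^2} + \frac{\eta\rL^2}{2S^2} = \frac{\rO^2 + \eta\rL^2/2}{S^2} \le A.
\end{equation*}
This cancellation is exactly why $\lambda$ was chosen as it is.

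\emph{Loss branch ($x = \rL - b$).} Substitute $u = b - \rL$, so that $c - \rL = 2\rO$. The numerator becomes
\begin{equation*}
h(u) = -u(u - 2\rO) + \frac{\eta u^2}{2} = -\qty(1 - \frac{\eta}{2})u^2 + 2\rO u.
\end{equation*}
This is a concave quadratic, since $\eta<1$, with maximum $\frac{\rO^2}{1-\eta/2} \le \rO^2(1+\eta)$. Dividing by $S^2$ gives at most $\frac{\rO^2 + \eta\rO^2}{S^2} \le A$, uniformly in $b$.

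Both branches are bounded by $A$, which closes the induction. The main obstacle I anticipate is the loss branch. The $\eta x^2$ correction grows with $b$, so a naive bound would fail for large Learner bids. The resolution is that the leading $-u^2$ term dominates for $\eta<1$, keeping the expression concave, and this is what gives the $\eta$-robust maximum.
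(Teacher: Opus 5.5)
Your proposal is correct and is essentially the paper's proof. Both use induction on $\tau$ with the same potential $g$: thanks to the choice of $\lambda$, the Optimizer-win branch collapses exactly to $\frac{\rO^2 + \eta\rL^2/2}{(\rO+\rL)^2}$, and the Optimizer-loss branch is a concave quadratic in $b-\rL$ whose maximum $\frac{\rO^2}{1-\eta/2}$ is at most $\rO^2(1+\eta)$. The paper lists the two branch expressions in the opposite order, and it explicitly checks the one step you leave unstated, $(1+\eta)(1-\eta/2) = 1 + \eta(1-\eta)/2 \ge 1$ for $0<\eta<1$.
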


\begin{myproof}
    We show the claim using induction on $\tau$.
    For $\tau = 0$ we want to show
    \begin{equation*}
        0
        \le
        g(b)
        =
        \frac{\frac{1}{2} b^2 - \qty( 2 \rO + \rL ) b}{\eta (\rO + \rL)^2}
        +
        \frac{2}{\eta}
    \end{equation*}
    for all $b$.
    This is true because we can write the r.h.s. as $\frac{(b - \qty( 2 \rO + \rL ) )^2}{2\eta (\rO + \rL)^2} + \frac{\rL (3 \rL+4 \rO)}{2 \eta (\rL+\rO)^2}$ which is clearly positive.
    For $\tau+1$ remaining and $b = \b{T-\tau}{}$ we have
    \begin{align*}
        \sum_{t=T-\tau}^T \x{t}{} (1 - \lambda \b{t}{})
        \le
        \frac{\rO^2 + \qty(\rO^2 + \rL^2)\eta}{(\rO + \rL)^2}
        \tau
        +
        \max\qty{\Big.
            1 - \lambda b
            +
            g\qty\big(b + \eta\rL)
            ,
            g\qty\big(b + \eta(\rL - b))
        }
    \end{align*}
    where the inequality holds by considering the maximizing option of $\x{T-\tau}{} \in \{0, 1\}$ and using the induction hypothesis.
    We will prove that the above is less than $\frac{\rO^2 + \qty(\rO^2 + \rL^2)\eta}{(\rO + \rL)^2}(\tau + 1) + g(b)$, which means that we want to prove that for all $b$
    \begin{align*}
        \frac{\rO^2 + \qty(\rO^2 + \rL^2)\eta}{(\rO + \rL)^2}
        \ge &
        \max\qty{\Big.
            1 - \lambda b
            +
            g\qty\big(b + \eta\rL) - g(b)
            ,
            g\qty\big(b + \eta(\rL - b)) - g(b)
        }
        \\
        = &
        \max\qty{\Big.
            \frac{
                (b-\rL)
                \qty\big(4\rO - (2 - \eta) (b - \rL))
            }{2 (\rL+\rO)^2}
            ,
            \frac{\eta  \rL^2+2 \rO^2}{2 (\rL+\rO)^2}
        }
    \end{align*}
    where for the equality we substituted $\lambda = \frac{\rL}{(\rL + \rO)^2}$ and $g(\cdot)$.
    For the l.h.s. to be bigger than the second term of the maximum, we need $\rO^2 + \qty(\rO^2 + \rL^2)\eta \ge \eta  \rL^2/2 + \rO^2$, or equivalently, $\rO^2 + \rL^2 \ge \rL^2/2$, which is true. 
    For the first term of the maximum, we need
    \begin{equation*}
        2\rO^2 + 2\qty(\rO^2 + \rL^2)\eta
        \ge
        (b-\rL) \qty\big(4\rO - (2 - \eta) (b - \rL))
    \end{equation*}

    The r.h.s. of the above inequality has a global maximum at $\rL + \frac{2 \rO}{2 - \eta}$; substituting this $b$, all we have to prove is that
    \begin{equation*}
        2\rO^2 + 2\qty(\rO^2 + \rL^2)\eta
        \ge
        \frac{4 \rO^2}{2 - \eta}
    \end{equation*}

    We notice 
    \begin{equation*}
        2\rO^2 + 2\qty(\rO^2 + \rL^2)\eta
        \ge
        2\rO^2 + 2\rO^2 \eta
        =
        2\rO^2 (1 + \eta)
        \ge
        2\rO^2 \frac{2}{2-\eta}
    \end{equation*}
    where the last inequality holds because $0 < \eta < 1$.
    This completes the lemma.
\end{myproof}

Now we use the above lemma and bound the Lagrangian.

\begin{corollary}\label{cor:adv:Lagrangian}
    The Lagrangian of \cref{eq:optimization} by $\lambda = \frac{\rL}{(\rL + \rO)^2}$ and $\b{1}{} = \rL$ is for every binary sequence $\{\x{t}{}\}_{t \in [T]}$
    \begin{equation*}
        \lambda T \rO
        +
        \sum_{t=1}^T \x{t}{} \qty(1 - \lambda \b{t}{})
        \le
        \qty(
        \frac{\rO}{\rL+\rO}
        +
        \frac{\rL^2 + \rO^2}{(\rL+\rO)^2} \eta
        )T
        +
        \frac{3 \rL^2+4 \rL \rO+4 \rO^2}{2 (\rL+\rO)^2}
        \frac{1}{\eta}
    \end{equation*}
\end{corollary}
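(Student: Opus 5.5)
This is a direct application of \cref{lem:adv:opt_lagrange_unnormalized} with $\tau = T$ and starting bid $b = \b{1}{} = \rL$. We add $\lambda T \rO$ to both sides and then simplify. The plan has three steps.

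First, invoke the lemma on the whole horizon $[1,T]$. For every binary sequence it gives
\begin{equation*}
    \sum_{t=1}^T \x{t}{}\qty(1-\lambda \b{t}{}) \le \frac{\rO^2 + (\rO^2+\rL^2)\eta}{(\rO+\rL)^2} T + g(\rL).
\end{equation*}

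Second, combine the linear-in-$T$ terms. With $\lambda = \frac{\rL}{(\rL+\rO)^2}$ we have $\lambda T\rO = \frac{\rL\rO}{(\rL+\rO)^2}T$. Adding this to $\frac{\rO^2}{(\rL+\rO)^2}T$ gives
\begin{equation*}
    \frac{\rO(\rO+\rL)}{(\rL+\rO)^2}T = \frac{\rO}{\rL+\rO}T.
\end{equation*}
The $\eta$ part $\frac{\rL^2+\rO^2}{(\rL+\rO)^2}\eta T$ carries over unchanged. Together these match the first term of the claimed bound exactly.

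Third, evaluate $g(\rL)$. Substituting $b=\rL$ gives the numerator
\begin{equation*}
    \tfrac12\rL^2 - (2\rO+\rL)\rL = -\tfrac12\rL^2 - 2\rL\rO,
\end{equation*}
so
\begin{equation*}
    g(\rL) = \frac{1}{\eta}\qty(2 - \frac{\rL^2/2 + 2\rL\rO}{(\rL+\rO)^2}).
\end{equation*}
Next write $2 = \frac{2(\rL+\rO)^2}{(\rL+\rO)^2}$, put everything over $2(\rL+\rO)^2$, and expand:
\begin{equation*}
    4\rL^2 + 8\rL\rO + 4\rO^2 - \rL^2 - 4\rL\rO = 3\rL^2 + 4\rL\rO + 4\rO^2.
\end{equation*}
This gives exactly $\frac{3\rL^2+4\rL\rO+4\rO^2}{2(\rL+\rO)^2}\cdot\frac{1}{\eta}$.

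None of these steps is a real obstacle. The only care needed is this last algebraic simplification of $g(\rL)$, so that the constant matches the stated one rather than merely bounding it.
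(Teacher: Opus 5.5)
Your proposal is correct and follows the route the paper intends: the paper states this corollary without a separate proof, as the instance $\tau = T$, $b = b^{(1)} = \rL$ of \cref{lem:adv:opt_lagrange_unnormalized}, with $\lambda T \rO$ added to both sides. Your simplifications check out: $\frac{\rL\rO + \rO^2}{(\rL+\rO)^2} = \frac{\rO}{\rL+\rO}$, and $\eta\, g(\rL) = \frac{3\rL^2+4\rL\rO+4\rO^2}{2(\rL+\rO)^2}$ exactly.
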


Now we prove \cref{thm:adv:main}.

\begin{myproof}[Proof of \cref{thm:adv:main}]
    By \cref{cor:adv:Lagrangian}, the number of rounds win by Optimizer is at most
    \begin{equation*}
        \qty(
        \frac{\rO}{\rL+\rO}
        +
        \frac{\rL^2 + \rO^2}{(\rL+\rO)^2} \eta
        )T
        +
        \frac{3 \rL^2+4 \rL \rO+4 \rO^2}{2 \eta (\rL+\rO)^2}
        \le
        \qty(
        \frac{\rO}{\rL+\rO}
        +
        \eta
        )T
        +
        \frac{2}{\eta}
    \end{equation*}
    where the inequality follows by $\frac{\rL^2 + \rO^2}{(\rL+\rO)^2} \le 1$ and $\frac{3 \rL^2+4 \rL \rO+4 \rO^2}{2(\rL+\rO)^2} \le \frac{4 \rL^2 + 8 \rL \rO + 4 \rO^2}{2(\rL+\rO)^2} = 2$.
    Setting $\eta = 1/\sqrt T$ bounds the number of wins of the Optimizer by $\frac{\rO}{\rL+\rO} T + 3 \sqrt T$, proving what we want.
\end{myproof}

\section{Omitted Proofs of Section \ref{sec:diff_b}}
\label{sec:app:diff_b}

In this section, we include all the technical proof from \cref{sec:diff_b} that were omitted.

\Convexity*

\begin{myproof}
    The convexity follows from the fact that $\max_i \{ b_i^2 \}$ is a maximum over convex functions and that the rest of the function is affine.

    For the subgradients, we notice that $f$ is piecewise differentiable: whenever $b_i > \max_{j \ne i} b_j$, then the subgradient is unique and equal to $b_i \vec e_i - \vec\rho$.
    For points where multiple $b_i$'s are maximum, we have to take the convex hull of the above gradients for those $i$.
\end{myproof}

\Minimum*

\begin{myproof}
    First, notice that $\vec 0 \in \partial f(\vec 1)$, making $\vec 1$ a global minimum.
    To prove uniqueness, consider an arbitrary $\vec b$ with $\max_i b_i = x$.
    Then $f(\vec b) = \frac{1}{2} x^2 - \vec\rho^\top \vec b + \frac{1}{2} \ge \frac{1}{2} x^2 - x + \frac{1}{2} = (x - 1)^2/2 \ge 0$.
    We notice that the first inequality, since $\rho_i > 0$ for all $i$, can be tight only if $\vec b = x \vec{1}$.
    This means that the only way for $f(\vec b) = 0$ to be true, it has to be that $\vec b = x \vec{1}$ and $x = 1$.
    This proves the proposition.
\end{myproof}

\NormTwoDecrease*

\begin{myproof}
    We have that
    \begin{alignat*}{3}
        \Line{
            \norm{\vecb{t+1}{} - \vec 1}_2^2
        }{=}{
            \norm{\vecb{t}{} - \eta \vec g^{(t)} - \vec 1}_2^2
        }{}
        \\
        \Line{}{=}{
            \norm{\vecb{t}{} - \vec 1}_2^2
            -
            2\eta (\vecb{t}{} - \vec 1)^\top \vec g^{(t)}
            +
            \eta^2 \norm{ \vec g^{(t)} }_2^2
        }{}
        \\
        \Line{}{\le}{
            \norm{\vecb{t}{} - \vec 1}_2^2
            -
            2\eta \qty( f(b_t) - f(\vec 1) )
            +
            \eta^2 \norm{ \vec g^{(t)} }_2^2
        }{\text{convexity}}
    \end{alignat*}

    The proposition follows by noticing that $f(\vec 1) = 0$, $\norm{ \vec g^{(t)} }_2^2 \le 2\norm{\vecb{t}{} - \vec 1}_2^2 + 3$, and that
    \begin{equation*}
        \rhomin\norm{\vecb{}{} - \vec 1}_2^2
        \le
        \sum_i \rho_i \qty(\b{}{i} - 1)^2
        \le
        \bmax^2 - 2 \sum_i \rho_i \b{}{i} + 1
        =
        2 f(\vec b)
    \end{equation*}
\end{myproof}

\ConvergenceSqrtEta*

\begin{myproof}
    We use the fact that $\eta \le \rhomin/10$ in \cref{cl:diff_b:dist_from_one_decrease} to get that for any round $t$
    \begin{equation*}
        \norm{\vecb{t+1}{} - \vec 1}_2^2
        \le
        \qty( 1 - \frac{4}{5}\eta \rhomin)\norm{\vecb{t}{} - \vec 1}_2^2
        +
        3\eta^2
    \end{equation*}
    which inductively proves that for any round $t \ge t_0$
    \begin{equation*}
        \norm{\vec b^{t} - \vec 1}_2^2
        \le
        \qty(1 - \frac{4}{5}\eta \rhomin)^{t - t_0}
        \norm{\vec b^{t_0} - \vec 1}_2^2
        +
        \qty\Bigg( 1 - \qty(1 - \frac{4}{5}\eta \rhomin)^{t - t_0})
        \frac{15 \eta}{4\rhomin}
    \end{equation*}

    Using the fact that $t - t_0 \ge 3 \frac{5}{4\eta\rhomin} \log\frac{1}{\eta}$ and that $\frac{5}{4\eta\rhomin} \ge 5$ we (loosely) get that
    \begin{equation*}
        \qty(1 - \frac{4}{5}\eta \rhomin)^{t - t_0}
        \le
        \eta^3
    \end{equation*}
    that implies
    \begin{equation*}
        \norm{\vec b^{t} - \vec 1}_2^2
        \le
        \eta^3
        \norm{\vec b^{t_0} - \vec 1}_2^2
        +
        \frac{15 \eta}{4\rhomin}
    \end{equation*}
    and proves the lemma.
\end{myproof}

\DistFromAvgDecrease*

\begin{myproof}
    Let $\vec g^{(t)}$ be the update vector: $\vecb{t+1}{} = \vecb{t}{} - \eta \vec g^{(t)}$.
    Then we have
    \begin{alignat*}{3}
        \Line{
            \norm{ \vecb{t+1}{} - \bavg^{(t+1)}\vec 1 }_2^2
        }{=}{
            \norm{ \vecb{t+1}{} - \frac{\vec 1^\top \vecb{t+1}{}}{n} \vec 1 }_2^2
            =\;
            \norm{ \vecb{t}{} - \eta \vec g^{(t)} - \frac{\vec 1^\top \qty(\vecb{t}{} - \eta \vec g^{(t)})}{n}\vec 1 }_2^2
        }{}
        \\
        \Line{}{=}{
            \norm{ \vecb{t}{} - \frac{\vec 1^\top\vecb{t}{}}{n}\vec 1 }_2^2
            -
            2\eta
            \qty( \vec g^{(t)} - \frac{\vec 1^\top\vec g^{(t)}}{n}\vec 1 )^\top
            \qty( \vecb{t}{} - \frac{\vec 1^\top\vecb{t}{}}{n}\vec 1 )
        }{}
        \\
        \Line{}{}{
            \quad+
            \eta^2
            \norm{ \vec g^{(t)} - \frac{\vec 1^\top\vec g^{(t)}}{n}\vec 1 }_2^2
        }{}
        \\
        \Line{}{=}{
            \norm{ \vecb{t}{} - \bavg^{(t)}\vec 1 }_2^2
            -
            2\eta
            \qty( \qty(\vec g^{(t)})^\top\vecb{t}{} - \frac{\qty(\vec 1^\top \vecb{t}{}) \qty(\vec 1^\top \vec g^{(t)})}{n} )
            +
            \eta^2
            \norm{ \vec g^{(t)} - \frac{\vec 1^\top\vec g^{(t)}}{n}\vec 1 }_2^2
        }{}
    \end{alignat*}

    We now analyze the middle term, given that $\vec g^{(t)} = \bmax^{(t)} \vec e_{i^*} - \vec\rho$, where $i^*$ is the winner of that round.
    Dropping the $t$ superscript, we have
    \begin{equation*}
        \vec g^\top \vec b - \frac{\qty(\vec 1^\top \vec b) \qty(\vec 1^\top \vec g)}{n}
        =
        \qty( \bmax \vec e_{i^*} - \vec\rho )^\top \vec b
        -
        \bavg \qty( \bmax - 1 )
        =
        \bmax^2
        -
        \vec\rho^\top \vec b
        -
        \bavg ( \bmax - 1 )
    \end{equation*}

    We now lower bound the above term.
    We set $\vec b = 1 + \vec\d$ and have that the above term is 
    \begin{equation*}
        (1 + \d_{\max})^2 - \vec \rho^\top (1 + \vec\delta) - (1 + \davg) \qty\big(1 + \d_{\max} - 1)
        =
        \d_{\max} - \vec\rho^\top \vec\d
        +
        \d_{\max}^2 - \d_{\max} \davg
    \end{equation*}

    We then notice that
    \begin{align*}
        \d_{\max} - \rho^\top \d
        &=
        \sum_i \rho_i \qty( \d_{\max} - \d_i )
        \ge
        \rhomin \sum_i \qty( \d_{\max} - \d_i )
        =
        \rhomin \norm{\d_{\max}\vec 1 - \vec\d}_1
        \\&\ge
        \rhomin \norm{\d_{\max}\vec 1 - \vec\d}_2
        \ge
        \rhomin \norm{\davg\,\vec 1 - \vec\d}_2
        =
        \rhomin \norm{\vec b - \bavg\vec 1}_2
    \end{align*}
    and that
    \begin{equation*}
        \d_{\max}^2 - \d_{\max} \davg
        \ge
        - 2 D^2 \eta
    \end{equation*}
    which follows from our assumption that $\norm*{\vec b - \vec 1}_2 = \norm*{\vec\d}_2 \le D \sqrt\eta$.
    Finally, we bound
    \begin{equation*}
        \norm{ \vec g - \frac{\vec 1^\top\vec g}{n}\vec 1 }_2^2
        =
        \bmax^2 + \norm{\vec\rho}_2^2 + n (\bmax - 1)^2 - 2\bmax \rho_{i^*} - \frac{\bmax(\bmax-1)}{n} + \frac{\bmax-1}{n}
        \le
        3
    \end{equation*}
    where the inequality can be proven from $\norm{\vec b - \vec 1}_2 \le D \sqrt \eta$ and $\eta\le\frac{1}{2 n D^2}$.
\end{myproof}

\ConvergenceDistFromAvg*

\begin{myproof}
    First, let $t_1$ be such that for all rounds $t \in [t_0, t_1)$ it holds $\norm*{\vecb{t}{} - \bavg^{(t)}}_2 \ge 3 \frac{D^2 + 1}{\rhomin} \eta$.
    Then using \cref{cl:diff_b:dist_from_avg_decrease} and summing the inequality over all $t \in [t_0, t_1)$ we have that
    \begin{alignat*}{3}
        \Line{
            \norm{\vecb{t_1}{} - \bavg^{(t_1)} \vec 1 }_2^2
        }{\le}{
            \norm{\vecb{t_0}{} - \bavg^{(t_0)} \vec 1}_2^2
            -
            \sum_{t \in [t_0, t_1)} \qty(
                2 \rhomin \eta \norm{\vecb{t}{} - \bavg^{t} \vec 1}_2
                -
                4 \eta^2 \qty( D^2 + 1 )
            )
        }{}
        \\
        \Line{}{\le}{
            D^2 \eta
            -
            \sum_{t \in [t_0, t_1)} \qty(
                2 \rhomin \eta 3 \frac{D^2 + 1}{\rhomin} \eta
                -
                4 \eta^2 \qty( D^2 + 1 )
            )
        }{}
        \\
        \Line{}{=}{
            D^2 \eta
            -
            (t_1 - t_0) 2 \eta^2 (D^2 + 1)
            \le
            D^2 \eta \qty\big( 1 - 2 (t_1 - t_0) \eta )
        }{}
    \end{alignat*}
    Since the l.h.s. is non-negative, it must hold that $t_1 - t_0 \le \frac{1}{2 \eta}$.
    This proves that there exists a $t_1 > t_0 + \frac{1}{2 \eta}$ such that
    \begin{equation*}
        \norm{\vecb{t_1}{} - \bavg^{(t_1)}\vec 1}_2
        <
        3 \frac{D^2 + 1}{\rhomin} \eta
    \end{equation*}

    We now inductively show that this holds for all $t \ge t_1$.
    Fix a $t$ where this holds and notice that because of \cref{cl:diff_b:dist_from_avg_decrease}
    \begin{equation*}
        \norm{\vecb{t+1}{} - \bavg^{(t+1)} \vec 1}_2^2
        \le
        \norm{\vecb{t}{} - \bavg^{(t)}\vec 1}_2^2
        -
        2 \rhomin \eta \norm{\vecb{t}{} - \bavg^{(t)}\vec 1}_2
        +
        4 \eta^2 \qty( D^2 + 1 )
    \end{equation*}

    Considering the above inequality, we examine two cases:
    \begin{itemize}
        \item If $\norm*{\vecb{t}{} - \bavg^{(t)} \vec 1}_2 \ge 2 \frac{D^2 + 1}{\rhomin} \eta$, then the inequality shows $\norm*{\vecb{t+1}{} - \bavg^{(t+1)} \vec 1}_2^2 \le \norm*{\vec b^{t} - \bavg^{t} \vec 1}_2^2$.

        \item If $\norm*{\vecb{t}{} - \bavg^{(t)} \vec 1}_2 < 2 \frac{D^2 + 1}{\rhomin} \eta$ then
        let $R(y) = y^2 - 2 \rhomin \eta y + 4 \eta^2 \qty(D^2+1)$
        denote the right-hand side as a function of $y = \norm*{\vecb{t}{} - \bavg^{(t)} \vec 1}_2$.
        We will show that $R(y) < \qty\big(3 \frac{D^2 + 1}{\rhomin} \eta)^2$.
        Observe that $R(y)$ is a convex function of $y$, so its maximum value on the
        interval $\qty\big[0,2 \frac{D^2 + 1}{\rhomin} \eta]$ is attained at one of the endpoints
        of that interval. At $y=0$ we have 
        \[
          R(y) = 4 \eta^2 \qty( D^2 + 1 ) 
            = \qty(3 \frac{D^2 + 1}{\rhomin} \eta)^2 \cdot \qty( \frac{4}{9} \cdot \frac{\rhomin^2}{D^2 + 1} )
                <
                \qty(3 \frac{D^2 + 1}{\rhomin} \eta)^2 .
        \]
        At $y=2 \frac{D^2 + 1}{\rhomin} \eta \leq 1$ we have
        \[
            R(y) = y^2 - (2 \rhomin \eta) \qty(2 \frac{D^2+1}{\rhomin} \eta) + 4 \eta^2 \qty(D^2+1)
                = y^2 =
                \qty(2 \frac{D^2 + 1}{\rhomin} \eta)^2
                <
                \qty(3 \frac{D^2 + 1}{\rhomin} \eta)^2 .
        \]
    \end{itemize}

    The above case analysis proves the lemma.
\end{myproof}

\ConvergenceDistFromOneEta*

\begin{myproof}
    Fix a $t \ge t_0$.
    We know that $\bavg^{(t+1)} = \bavg^{(t)} + \frac{\eta}{n}(1 - \bmax^{(t)})$ by adding the update rules of all the players.
    We also know that $\bmax^{(t)} \ge \bavg^{(t)}$ and by the assumptions that $\bmax^{(t)} \le \bavg^{(t)} + A \eta$; this implies that
    \begin{equation*}
        \bavg^{(t)} \qty(1 - \frac{\eta}{n}) + \frac{\eta}{n}\qty(1 - A \eta)
        \le
        \bavg^{(t+1)}
        \le
        \bavg^{(t)} \qty(1 - \frac{\eta}{n}) + \frac{\eta}{n}
    \end{equation*}

    Applying this inductively, we get that for any $t \ge t_0$
    \begin{equation*}
        \bavg^{(t_0)} \qty(1 - \frac{\eta}{n})^{t-t_0}
        +
        \qty(1 - \qty(1 -\frac{\eta}{n})^{t-t_0})\qty(1 - A \eta)
        \le
        \bavg^{(t)}
        \le
        \bavg^{(t_0)} \qty(1 - \frac{\eta}{n})^{t-t_0}
        +
        \qty(1 - \qty(1 -\frac{\eta}{n})^{t-t_0})
        .
    \end{equation*}

    Taking $t - t_0 \ge 3 \frac{n}{\eta} \log\frac{1}{\eta}$ and using the fact that $n/\eta \ge 4$, we have that $\qty(1 -\frac{\eta}{n})^{t-t_0} \le \eta^3$ which implies
    \begin{equation*}
        \qty(1 - \eta^3)\qty(1 - A \eta)
        \le
        \bavg^{(t)}
        \le
        \bavg^{(t_0)} \eta^3
        +
        1
        .
    \end{equation*}

    We now get the following by noticing that $\bavg^{(t_0)} \le 1 + D \sqrt \eta \le $ and $(1- \eta^3)(1-A\eta) \ge 1 - A\eta -\eta^2$ (which follows from $D^2 \le \frac{1}{2\eta}$):
    \begin{equation*}
        1
        - A \eta - \eta^2
        \le
        \bavg^{(t)}
        \le
        1
        +
        2 \eta^3
    \end{equation*}

    We get the final result by noticing that by the lemma's assumptions, $\abs{\b{t}{i} - \bavg^{(t)}} \le A\eta$.
\end{myproof}

\FinalDistFromOne*

\begin{proof}
    By \cref{lem:diff_b:sqrt_eta}, for rounds $t \ge t_1 := 1 + \frac{15}{4\eta\rhomin} \log\frac{1}{\eta}$ we have that
    \begin{equation*}
        \norm{\vec b^{t+\tau} - \vec 1}_2^2
        \le
        \frac{15 \eta}{4\rhomin}
        +
        \eta^3
        n \max_i \qty(\b{1}{i} - 1)^2
        \le
        \frac{15 \eta}{4\rhomin}
        +
        \eta^2
    \end{equation*}

    By \cref{lem:diff_b:dist_from_avg_eta} (with $D^2 \gets \frac{15}{4\rhomin} + \eta$), we get that for $t \ge t_1 + t_2$, where $t_2 = 1 + \frac{1}{2\eta}$ we have that
    \begin{equation*}
        \norm{\vecb{t}{} - \bavg^{(t)}\vec 1}_2
        \le
        3 \frac{D^2 + 1}{\rhomin} \eta
        \le
        \frac{129}{10\rhomin^2} \eta
    \end{equation*}

    We now use \cref{lem:diff_b:eta_dist_from_one} (with $D^2 \gets \frac{15}{4\rhomin} + \eta$) to get that $t \ge t_1 + t_2 + t_3$, where $t_3 = 3\frac{n}{\eta}\log\frac{1}{\eta}$ to get that for all $i$
    \begin{equation*}
         1 - 2A \eta - \eta^2
         \le \b{t}{i} \le
         1 + A \eta + 2\eta^3
    \end{equation*}
    which implies 
    \begin{equation*}
         1 - \frac{129}{5\rhomin^2} \eta - \eta^2
         \le \b{t}{i} \le
         1 + \frac{129}{10\rhomin^2} \eta + 2\eta^3
    \end{equation*}

    While this proves that the bids are $\order*{\eta}$ close to $1$, we can get a better bound by re-applying the lemmas.
    Specifically, we can use \cref{lem:diff_b:dist_from_avg_eta} with $D^2 \gets \frac{667}{\rhomin^2}\eta$ to get that for $t \ge t_1 + t_2 + t_3 + t_4$ with $t_4 = 1 + \frac{2}{\eta}$
    \begin{equation*}
        \norm{\vecb{t}{} - \bavg^{(t)}\vec 1}_2
        \le
        3 \frac{\frac{667}{\rhomin^2}\eta + 1}{\rhomin} \eta
        \le
        \frac{6}{\rhomin} \eta
    \end{equation*}
    and applying \cref{lem:diff_b:eta_dist_from_one} we get that for $t \ge t_1 + t_2 + t_3 + t_4 + t_5$ with $t_5 = 3\frac{n}{\eta}\log\frac{1}{\eta}$ for all $i \in [n]$
    \begin{equation*}
         1 - \frac{12}{\rhomin} \eta - \eta^2
         \le \b{t}{i} \le
         1 + \frac{6}{\rhomin} \eta + 2\eta^3
    \end{equation*}
    which proves the theorem.
\end{proof}

\DiscrepancyDiffBudgets*

\begin{myproof}[Proof of \cref{thm:diff_b:main}]
    Fix an agent $i$, an interval $[t_1, t_2]$ of length $\tau$, and assume that $t_1 \ge t \ge \frac{11}{\eta \rhomin} \log\frac{1}{\eta}$.
    For any $t \in [t_1, t_2]$, by \cref{thm:diff_b:bids_are_one_pm_eta} it holds that
    \begin{equation*}
         1 - \frac{12}{\rhomin} \eta - \eta^2
         \le \b{t}{i} \le
         1 + \frac{6}{\rhomin} \eta + 2\eta^3
    \end{equation*}
    which lets us use \cref{lem:diff_b:wins_deviation} with $m = 1 - \frac{12}{\rhomin} \eta - \eta^2$ and $M = 1 + \frac{6}{\rhomin} \eta + 2\eta^3$.
    This proves that the number of times agent $i$ wins is
    \begin{equation*}
        \sum_{t=t_1}^{t_2-1} \x{t}{i}
        \ge
        \rho_i \tau
        -
        \rho_i \tau \qty(1 - \frac{1}{M})
        -
        \frac{M - m}{\eta M}
        \ge
        \rho_i \tau 
        -
        \rho_i \tau \frac{6}{\rhomin} \eta
        -
        \frac{18}{\rhomin}
        .
    \end{equation*}

    Substituting $\eta = 1/\sqrt T$ proves the theorem.
\end{myproof}

\end{document}